\newif\ifreport\reporttrue
\DeclareMathOperator*{\argmax}{argmax}
\newcommand{\ignore}[1]{}
\newtheorem{lemma}{Lemma}
\newtheorem{proposition}{Proposition}
\newtheorem{theorem}{Theorem}
\newtheorem{corollary}{Corollary}
\newtheorem{definition}{Definition}
\newtheorem{proof sketch}{Proof Sketch}
\newcommand{\EE}{\mathbb{E}}
\begin{document}

\title{Remote Estimation of Gauss-Markov Processes over Multiple Channels: A Whittle Index Policy}

\author{Tasmeen Zaman Ornee, \IEEEmembership{Graduate Student Member,~IEEE,} and Yin Sun, \IEEEmembership{Senior Member,~IEEE}\\
\thanks{Tasmeen Zaman Ornee and Yin Sun are with Dept. of ECE,  Auburn University, Auburn, AL (emails: tzo0017@auburn.edu, yzs0078@auburn.edu). Part of this article was accepted by ACM MobiHoc 2023 \cite{Ornee2023AWhittle}.}
\thanks{This work was supported in part by the NSF under grant no. CNS-2239677 and by the ARO under grant no. W911NF-21-1-02444.}}



\maketitle

\begin{abstract}
In this paper, we study a sampling and transmission scheduling problem for multi-source remote estimation, where a scheduler determines when to take samples from multiple continuous-time Gauss-Markov processes and send the samples over multiple channels to remote estimators. The sample transmission times are \emph{i.i.d.} across samples and channels. The objective of the scheduler is to minimize the weighted sum of the time-average expected estimation errors of these Gauss-Markov sources. This problem is a continuous-time Restless Multi-armed Bandit (RMAB) problem with a continuous state space. We prove that the bandits are indexable and derive an exact expression of the Whittle index. To the extent of our knowledge, this is the first Whittle index policy for multi-source signal-aware remote estimation of Gauss-Markov processes. We further investigate signal-agnostic remote estimation and develop a Whittle index policy for multi-source Age of Information (AoI) minimization over parallel channels with \emph{i.i.d.} random transmission times. Our results unite two theoretical frameworks that were used for remote estimation and AoI minimization: threshold-based sampling and Whittle index-based scheduling. In the single-source, single-channel scenario, we demonstrate that the optimal solution to the sampling and scheduling problem can be equivalently expressed as both a threshold-based sampling strategy and a Whittle index-based scheduling policy. Notably, the Whittle index is equal to zero if and only if two conditions are satisfied: (i) the channel is idle, and (ii) the estimation error is precisely equal to the threshold in the threshold-based sampling strategy. Moreover, the methodology employed to derive threshold-based sampling strategies in the single-source, single-channel scenario plays a crucial role in establishing indexability and evaluating the Whittle index in the more intricate multi-source, multi-channel scenario.
Our numerical results show that the proposed policy achieves high performance gain over the existing policies when some of the Gauss-Markov processes are highly unstable.
\end{abstract}

\begin{IEEEkeywords}
Ornstein-Uhlenbeck process, Wiener process, remote estimation, Whittle index, restless multi-armed bandit.
\end{IEEEkeywords}

\section{Introduction}
\IEEEPARstart{D}{ue} to the prevalence of networked control and cyber-physical systems, real-time estimation of the states of remote systems has become increasingly important for next-generation networks. For instance, a timely and accurate estimate of the trajectories of nearby vehicles and pedestrians is imperative in autonomous driving, and real-time knowledge about the movements of surgical robots is essential for remote surgery. In these examples, real-time system state estimation is of paramount importance to the performance of these networked systems. Other notable applications of remote state estimation include UAV navigation, factory automation, environment monitoring, and augmented/virtual reality.

To assess the freshness of system state information, one metric named \emph{Age of Information} (AoI) has drawn significant attention in recent years, e.g., \cite{KaulYatesGruteser-Infocom2012}, \cite{139341}. AoI is defined as the time difference between the current time and the generation time of the freshest received state sample. Besides AoI, nonlinear functions of the AoI have been introduced in \cite{Sun_2016_infocom}, \cite{sun2017update}, \cite{kosta2017age}, \cite{SunNonlinear2019} and illustrated to be useful as a metric of information freshness in sampling, estimation, and control \cite{yates2021age}, \cite{SunNonlinear2019}.

\begin{figure}
\vspace{0.1cm}
\centering
\includegraphics[width=8.5 cm]{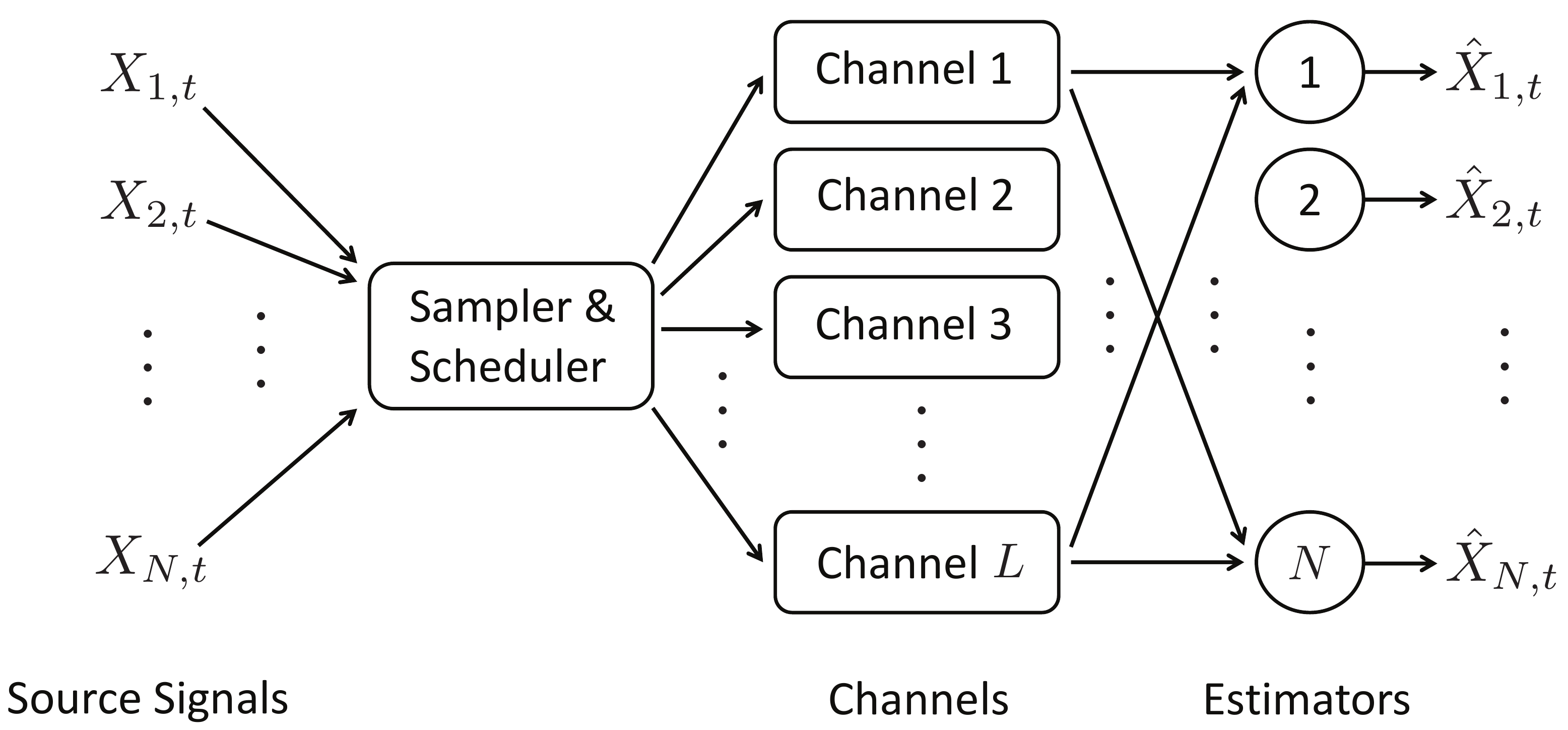}
\vspace{0.4cm}  
\caption{A multi-source, multi-channel remote estimation system.}\vspace{-0.0cm}
\label{fig_model}
\end{figure} 

In many applications, the system state of interest is in the form of a signal $X_t$, which may vary quickly at time $t$ and change slowly at a later time $t+\tau$ (even if the system state $X_t$ is Markovian and time-homogeneous). AoI, as a metric of the time difference, cannot precisely characterize how fast the signal $X_t$ varies at different time instants. To achieve more accurate system state estimation, it is important to consider \emph{signal-aware remote estimation}, where the signal sampling and transmission scheduling decisions are made using the historical \emph{realization} of the signal process $X_t$. Signal-aware remote estimation can achieve better performance than \emph{AoI-based, signal-agnostic remote estimation}, where the sampling and scheduling decisions are made using the probabilistic distribution of the signal process $X_t$, and the mean-squared estimation error can be expressed as a function of the AoI. The connection between signal-aware remote estimation and AoI minimization was first revealed in a problem of  sampling a Wiener process \cite{sun2017remote}. Subsequently, it was generalized to the case of (stable) Ornstein-Uhlenbeck (OU) process in \cite{Ornee_TON}. 

In many remote estimation and networked control systems, multiple sensors send their measurements (i.e., signal samples) to the destined estimators. For example, tire pressure, speed, and acceleration sensors in a self-driving vehicle send their data samples to the controller and nearby vehicles to make safe maneuvers \cite{KaulYatesGruteser-Infocom2012}. In this paper, we consider a remote estimation system with $N$ source-estimator pairs and $L$ channels, as illustrated in Figure \ref{fig_model}. Each source $n$ is a continuous-time Gauss-Markov process $X_{n,t}$, defined as the
solution of a Stochastic Differential Equation (SDE)
\begin{align} \label{SDE}
dX_{n,t} = \theta_n (\mu_n - X_{n,t})dt  + \sigma_n dW_{n,t},
\end{align}
where $\theta_n, \mu_n$, and $\sigma_n>0$ are the parameters of the Gauss-Markov process, and the $W_{n,t}$'s are independent Wiener processes. If $\theta_n > 0$, $X_{n,t}$ is  a stable Ornstein-Uhlenbeck (OU) process, which is the only nontrivial
continuous-time process that is stationary, Gaussian, and Markovian \cite{Doob1942}. If $\theta_n = 0$, then $X_{n,t} = \sigma_n W_{n,t}$ is a scaled Wiener process \cite{BMbook10}. If $\theta_n < 0$, we call $X_{n,t}$ an unstable Ornstein-Uhlenbeck (OU) process, because $\lim_{t \rightarrow \infty}\mathbb{E} [X_{n,t}^2] = \infty$ in this case. These Gauss-Markov processes can be used to model random walks \cite{nauta2019using}, interest rates \cite{nie2020sticky}, commodity prices \cite{Evans1994}, robotic swarms \cite{Kim2018}, 
biological processes \cite{bartoszek2017using}, control systems (e.g., the transfer of liquids or gases in and out of a tank) \cite{Nuno2011}, state exploration in deep reinforcement learning \cite{lillicrap2015continuous}, and etc.
A centralized sampler and scheduler decides when to take samples from the $N$  Gauss-Markov processes and send the samples over $L$ channels to remote estimators. At any time, at most $L$ sources can send samples over the channels. The samples experience \emph{i.i.d.} random transmission times over the channels due to interference, fading, etc. The $n$-th estimator uses causally received samples to reconstruct an estimate $\hat X_{n,t}$ of the real-time source value $X_{n,t}$.

Our objective is to find a sampling and transmission scheduling policy that minimizes the weighted sum of the time-average expected estimation errors of these Gauss-Markov sources. We develop a Whittle index policy to solve this problem. The technical contributions of this work are summarized as follows:
\begin{itemize}

\item We study the optimal sampling and transmission scheduling problem for the remote estimation of multiple continuous Gauss-Markov processes over parallel channels with \emph{i.i.d.} random transmission times. This problem
is a continuous-time Restless Multi-armed Bandit (RMAB) problem with a continuous state space, for which it is typically quite challenging to show indexability or to evaluate the Whittle index efficiently. We are able to prove indexability (see Theorem \ref{indexability}) and derive an exact expression for the Whittle index (Theorem \ref{Whittle index} and Lemma \ref{lemma1}). These results generalize prior studies on the remote estimation of a single Gauss-Markov process \cite{Wiener_TIT, Ornee_TON, Ornee_SPAWC} to the multi-source, multi-channel case. To the best of our knowledge, such results for multi-source remote estimation of Gauss-Markov processes were unknown before. Among the technical tools used to prove these results are Shiryaev’s free boundary method \cite{Peskir2006} for solving optimal stopping problems and Dynkin’s formula \cite{oksendal2013stochastic} for evaluating expectations involving stopping times.

\item We further investigate signal-agnostic remote estimation. In this context, the optimal sampling and scheduling problem becomes a multi-source AoI minimization problem over parallel channels with \emph{i.i.d.} random transmission times. We establish the indexability property and derive a precise expression of the Whittle index (Theorems \ref{aoi_indexability}-\ref{theorem3} and Lemma \ref{aoi_lemma}). Technically, these results carry forth and expand upon prior findings on Whittle index based AoI minimization \cite{tripathi2019whittle, hsu2018age, kadota2019scheduling}  in the following manner: In \cite{tripathi2019whittle, hsu2018age, kadota2019scheduling}, the transmission time remains constant, resulting in the optimality of the zero-wait sampling policy defined in \cite{yates2015lazy, sun2017update}. Consequently, the Whittle index derived in that case consistently maintains a non-negative value. In contrast, our results take into account scenarios involving \emph{i.i.d.} random transmission times. In such instances, the optimality of the zero-wait sampling policy is not guaranteed, leading to the possibility of both positive and negative values for the Whittle index.

\item Our results unite two important theoretical frameworks for remote estimation and AoI minimization: threshold-based sampling \cite{SunNonlinear2019, Wiener_TIT, Ornee_TON, Ornee_SPAWC} and Whittle index-based scheduling \cite{tripathi2019whittle, hsu2018age, kadota2019scheduling}. In the single-source, single-channel scenario, we demonstrate that the optimal solution to the sampling and scheduling problem can be expressed as both a threshold-based sampling strategy (\cite{Wiener_TIT, Ornee_TON, Ornee_SPAWC}) and a Whittle index-based scheduling policy (see Theorems \ref{single_theorem_whittle}, \ref{single_theorem_whittle_age}). Particularly noteworthy is that the Whittle index is equal to zero at time $t$ if and only if two conditions are satisfied: (i) the channel must be idle at time $t$, and (ii) the threshold condition is precisely met at time $t$. Moreover, the methodology used for deriving threshold-based sampling in the single-source, single-channel scenario plays a pivotal role in establishing indexability and evaluating the Whittle index in the more complex multi-source, multi-channel scenario.

\item Our numerical results show that the proposed policy
performs better than the signal-agnostic AoI-based Whittle index
policy and the Maximum-Age-First, Zero-Wait (MAF-ZW) policy.
The performance gain of the proposed policy is high when some of the Gauss-Markov processes are highly unstable.

\end{itemize}

\section{Related Work}

Remote state estimation has received considerable attention in numerous studies, e.g., see \cite{Rabi2012, Basar2014, Wiener_TIT, Ornee_TON, TsaiINFOCOM2020, Ornee_SPAWC, Wang2020, ahmad2009optimality}, and two recent surveys \cite{re_survey, vasconcelos2020survey}. Optimal sampling of one-dimensional and multi-dimensional Wiener processes with zero-delay, perfect channel was studied in \cite{Rabi2012,Basar2014}. A dynamic programming method was used in \cite{Rabi2012} to find the optimal sampling policy of the stable OU processes numerically for the case of zero-delay, perfect channel. A connection between remote estimation and AoI minimization was first reported in \cite{Wiener_TIT}, where optimal sampling strategies were obtained for the remote estimation of the Wiener process over a channel with \emph{i.i.d.} random transmission times. This study was further generalized to the case of the stable OU process in \cite{Ornee_TON}, where the optimal sampling strategy was derived analytically. 
In \cite{TsaiINFOCOM2020}, the authors considered remote estimation of the Wiener process with random two-way delay. 
When the system state follows a binary ON-OFF Markov process, Whittle index-based scheduling policies for remote estimation were developed in \cite{ahmad2009optimality}. Our study makes a contribution on the remote estimation of multiple Gauss-Markov processes (possibly with different distributions), by showing indexability and providing an analytical expression of the Whittle index.

Moreover, AoI-based scheduling for timely status updating has been studied extensively in, e.g., \cite{bedewy2020optimizing, he2017optimal, bedewy2021optimal, tripathi2019whittle, hsu2018age, tang2021whittle, yin2019only, kadota2018scheduling, kadota2019scheduling, shisher2022does, chen2022uncertainty}. A detailed survey on AoI was presented in \cite{yates2021age}. In \cite{he2017optimal}, the authors showed that under interference constraints, the scheduling problem for minimizing the age in wireless networks is NP-hard. In \cite{bedewy2020optimizing}, the authors minimized the 
weighted-sum peak AoI in a multi-source status updating system, subject to constraints on per-source battery lifetime. A joint sampling and scheduling problem for minimizing increasing AoI functions was considered in \cite{bedewy2021optimal}. AoI minimization in single-hop networks was considered in \cite{kadota2018scheduling}. AoI-based scheduling with timely throughput constraints was considered in \cite{kadota2019scheduling}. A Whittle index-based scheduling algorithm for minimizing AoI for stochastic arrivals was considered in \cite{hsu2018age}. In \cite{tang2021whittle}, \cite{tripathi2019whittle}, the Whittle index policy to minimize age functions for reliable and unreliable channels was proposed. 
A Whittle index policy for multiple source scheduling for binary Markov sources was studied in \cite{chen2022uncertainty}.
A Whittle index policy for signal-agnostic remote estimation was studied in \cite{Wang2020} for minimizing increasing AoI functions. In \cite{shisher2022does}, the authors proposed a Whittle index policy for minimizing non-monotonic AoI functions for both single and multi-actions scenarios. 
In the present paper, we propose a Whittle index policy for AoI-based, signal-agnostic remote estimation for \emph{i.i.d.} random transmission times. 

\section{Model and Formulation}
\subsection{System Model}
Consider a remote estimation system with $N$ source-estimator pairs and $L$ channels, which is shown in Figure \ref{fig_model}. Each source $n$ is a continuous-time Gauss-Markov process $X_{n,t}$, as defined in \eqref{SDE}. The sources are independent of each other and the parameters $\theta_n$, $\mu_n$, and $\sigma_n$ may vary across the sources. Hence, the $N$ sources could consist of scaled Wiener processes, stable OU processes, and unstable OU processes. A centralized sampler and transmission scheduler chooses when to take samples from the sources and transmit the samples over the channels to the associated remote estimators. At any given time, each source can be served by no more than one channel. In other words, if there are multiple samples from the same source waiting to be transmitted, only one of these samples can be transmitted over a single channel simultaneously.
Sample transmissions are \emph{non-preemptive}, i.e., once a channel starts to send a sample, it must finish transmitting the current sample before switching to serve another sample. Whenever a sample is delivered to the associated estimator, an acknowledgment (ACK) is immediately sent back to the scheduler. 

The operation of the system starts at time $t=0$. Let $S_{n,i}$ be the generation time of the $i$-th sample of source $n$, which satisfies $S_{n, i} \leq S_{n, i+1}$. This sample is submitted to a channel at time $G_{n,i}$, undergoes a random transmission time $Y_{n,i}$, and is delivered to the estimator $n$ at time $D_{n,i}$, where $S_{n,i} \leq G_{n,i}$, and $G_{n,i} + Y_{n,i} = D_{n,i}$.
Because (i) each source can be served by at most one channel at a time and (ii) the sample transmissions are non-preemptive, $D_{n,i} \leq G_{n, i+1}$.
The sample transmission times $Y_{n,i}$'s are \emph{i.i.d.} across samples and channels with mean $0 < \mathbb{E} [Y_{n,i}] < \infty$. In addition, we assume that the $Y_{n,i}$'s are independent of the Gauss-Markov processes $X_{n,t}$. The $i$-th sample packet $(S_{n,i}, X_{n, S_{n,i}})$ contains the sampling time $S_{n,i}$ and sample value $X_{n, S_{n,i}}$. Let $U_{n} (t) = \max_i \{S_{n,i} : D_{n,i} \leq t, i = 1,2, \ldots\}$ be the generation time of the freshest received sample from source $n$ at time $t$. The AoI of source $n$ at time $t$ is defined as \cite{KaulYatesGruteser-Infocom2012, 139341}
\begin{align}
\Delta_n (t) = t - U_{n} (t) = t - \max_i \{S_{n,i} : D_{n,i} \leq t, i=1,2,\ldots\}.
\end{align}
Because $D_{n, i} \leq D_{n, i+1}$, $\Delta_n (t)$ can also be expressed as
\begin{align} \label{age_eq}
\Delta_n (t) = t - S_{n, i}, \text{if} {\thinspace} t \in [D_{n, i}, D_{n, i+1}), {\thinspace} i = 0, 1, \ldots.
\end{align}
At time $t=0$, the initial state of the system satisfies $S_{n,0} = 0$, and  $D_{n,0} = Y_{n,0}$. 
The initial value of the Gauss-Markov process $X_{n,0}$ is finite. 

\subsection{MMSE Estimator}

At any time $t \geq S_{n,i}$, the Gauss-Markov process $X_{n,t}$ can be expressed as 
\begin{align}\label{eq_solution}
X_{n,t}  = 
\begin{cases}
& \begin{array}{l l}\!\!\!\!\!\!\!\!\! X_{n, S_{n,i}} e^{-\theta_n (t-S_{n,i})} + \mu_n \big[1-e^{-\theta_n (t-S_{n,i})} \big]\\ 
\!\!\!\!\!\!\!\!\! + \frac{\sigma_n}{\sqrt{2\theta_n}} {W_{n, 1-e^{-2 \theta_n (t-S_{n,i})}}},
& \!\!\!\!\!\!\!\!\!\!\!\!\!\!\!\!\!\!\text{ if }~ \theta_n > 0,\\
\!\!\!\!\!\!\!\!\!{\sigma_n} W_{n,t}, & \!\!\!\!\!\!\!\!\!\!\!\!\!\!\!\!\!\!\text{ if }~ \theta_n = 0,\\
\!\!\!\!\!\!\!\!\!X_{n, S_{n,i}} e^{-\theta_n (t-S_{n,i})} + \mu_n \big[1-e^{-\theta_n (t-S_{n,i})} \big] \\
\!\!\!\!\!\!\!\!\!+ \frac{\sigma_n}{\sqrt{-2\theta_n}} W_{n, e^{-2 \theta_n (t-S_{n,i})}-1}, & \!\!\!\!\!\!\!\!\!\!\!\!\!\!\!\!\!\!\text{ if }~ \theta_n < 0,
\end{array}
\end{cases}
\end{align}
where three expressions are provided for stable OU process ($\theta_n > 0$), scaled Wiener process ($\theta_n = 0$), and unstable OU process ($\theta_n < 0$), respectively. The first two expressions in \eqref{eq_solution} for the stable OU process and the scaled Wiener process were provided in \cite{Maller2009}. The third expression in \eqref{eq_solution} for the unstable OU process is proven in 
\ifreport
Appendix \ref{proof_unstable_OU_process}.
\else
the technical report \cite{Ornee2023} of the present paper.
\fi 

At time $t$, each estimator $n$ utilizes causally received samples to construct an estimate $\hat X_{n,t}$ of the signal value $X_{n,t}$. The information that is available at the estimator contains two parts: (i) $M_{n,t}\!=\!\{(S_{n,i}, X_{n, S_{n,i}}, G_{n,i}, D_{n,i}):D_{n,i} \leq t, i = 1, 2, \ldots\}$, which contains the sampling time $S_{n,i}$, sample value $X_{n, S_{n,i}}$, transmission starting time $G_{n,i}$, and the delivery time $D_{n,i}$ of the samples up to time $t$ and (ii) no sample has been received after the last  delivery time $\max_i \{D_{n,i}: D_{n,i}\leq t, i = 1, 2, \ldots\}$. Similar to \cite{Rabi2012, SOLEYMANI20161, Wiener_TIT, Ornee_TON}, we assume that the estimator neglects the second part of the information\footnote{This assumption can be removed by addressing the joint sampler and estimator design problem. In \cite{Hajek2008}, \cite{Nuno2011}, \cite{nayyar2013}, \cite{GAO201857}, \cite{ChakravortyTAC2020}, it was demonstrated that jointly optimizing the sampler and estimator in discrete-time systems yields the same optimal estimator expression, irrespective of the presence of the second part of information. This structural characteristic of the MMSE estimator, as highlighted in \cite[p. 619]{Hajek2008}, can also be established in continuous-time systems. 
The goal of this paper is to derive the closed-form expression for the optimal sampler while assuming this premise. To achieve the joint optimization of the sampler and estimator design, we can utilize majorization techniques previously developed in [4, 8, 11, 19, 23], as detailed in [10].}. 
If $t\in[D_{n,i},D_{n, i+1})$, the MMSE estimator is given by \cite{Ornee_SPAWC, Ornee_TON}
\begin{align}\label{mse}
\hat{X}_{n,t}\!\! 
= & \mathbb{E} [X_{n,t} | M_{n,t}] \nonumber\\
=\!& \left\{\!\!\!\!\! \begin{array}{l l} X_{n, S_{n,i}} \!e^{-\theta_n \!(t-S_{n,i})} \!+\! \mu_n\! \big[\!1\!\!-\!e^{-\theta_n\! (t-S_{n,i})} \!\big],
&\!\!\!\!\!\text{ if }~ \theta_n \neq 0,\\
{\sigma_n} W_{n, S_{n,i}}, &\!\!\!\!\!\text{ if }~ \theta_n = 0.
\end{array}\right. \!\!\!\!\!\!
\end{align}
The estimation error $\varepsilon_n (t)$ of source $n$ at time $t$ is given by
\begin{align}\label{est_err}
\varepsilon_n (t) = X_{n,t} - \hat{X}_{n,t}.
\end{align}
By substituting \eqref{eq_solution} and \eqref{mse} into \eqref{est_err}, if $t \in [D_{n,i} ,D_{n, i+1})$, then
\begin{align} \label{err_new}
\varepsilon_n (t) & =\left\{\!\! \begin{array}{l l} \frac{\sigma_n}{\sqrt{2\theta_n}} W_{n, 1-e^{-2 \theta_n (t-S_{n,i})}},
& \text{ if }~ \theta_n > 0,\\
{\sigma_n} (W_{n,t} - W_{n, S_{n,i}}), & \text{ if }~ \theta_n = 0,\\
\frac{\sigma_n}{\sqrt{-2\theta_n}} W_{n, e^{-2 \theta_n (t-S_{n,i})}-1}, & \text{ if }~ \theta_n < 0.
\end{array}\right. \!\!\!\!\!\!
\end{align}

\subsection{Problem Formulation}

Let $\pi = (\pi_n)_{n=1}^N$ denote a sampling and scheduling policy, where $\pi_n$= 
$((S_{n,1}, G_{n,1}), (S_{n,2}, G_{n,2}), \ldots)$ contains the sampling and transmission starting time instants of source $n$. Let $\pi_n$ denote a sub-sampling and scheduling policy for source $n$. In \emph{causal} sampling and scheduling policies, each sampling time $S_{n,i}$ is determined based on the up-to-date information that is available at the scheduler, without using any future information. Let $\Pi$ denote the set of all causal sampling and scheduling policies and let $\Pi_n$ denote the set of causal sub-sampling and scheduling policies for source $n$, both of which satisfy that (i) each source can be served by at most one channel at a time, and (ii) the sample transmissions are non-preemptive. At any time $t$, $c_n (t) \in \{0,1\}$ denotes the channel occupation status of source $n$. If source $n$ is being served by a channel at time $t$, then $c_n (t)=1$; otherwise, $c_n (t) =0$. Hence, if $t \in [G_{n,i}, D_{n,i})$, then $c_n (t) =1$. Because there are $L$ channels, $\sum_{n=1}^{N} c_n (t) \leq L$ is required to hold for all $t \geq 0$. 

Our objective is to find a causal sampling and scheduling policy for minimizing the weighted sum of the time-average expected estimation
errors of the $N$ Gauss-Markov sources.
This sampling and scheduling problem is formulated as
\begin{align} \label{problem}
& \inf_{\pi \in \Pi} \limsup_{T \to \infty}  \sum_{n=1}^{N} w_n \mathbb{E}_{\pi} \bigg[\frac{1}{T} \int_{0}^{T} \varepsilon^2 _{n} (t) dt\bigg] \\
& ~\text{s.t.} ~\sum_{n=1}^{N} \!c_n (t) \!\leq \!L, {\thinspace} c_n (t)\!\! \in \!\!\{0, 1\}, n=1,2,\ldots, N, t \in [0, \infty) \label{constraint},
\end{align} 
where $w_n > 0 $ is the weight of source $n$.
The sampling and scheduling policy $\pi$ can be simplified: 
\ifreport
In Appendix \ref{simplification_policy},
\else
In our technical report \cite{Ornee2023},
\fi
we prove that in the optimal policies to \eqref{problem}-\eqref{constraint}, the sampling time of the $i$-th sample $S_{n,i}$ and the transmission starting time of the $i$-th sample $G_{n,i}$ are equal to each other, i.e., $S_{n,i} = G_{n,i}$. Therefore, each sub-policy $\pi_n$ in $\pi$ can be simply denoted as $\pi_n =  (S_{n,1}, S_{n,2}, \ldots)$.


\section{Main Results} \label{main_results}

\subsection{Signal-aware Scheduling} \label{results_sig_aware}

Problem \eqref{problem}-\eqref{constraint} is a continuous-time Restless Multi-armed
Bandit (RMAB) with a continuous state space, where the estimation error $\varepsilon_n (t)$ of source $n$ is the state of the $n$-th restless bandit and each restless bandit is a Markov Decision Process (MDP) with two actions: active and passive. If a sample of source $n$ is taken and submitted to a channel at time $t$, we say that bandit $n$ takes an active action at time $t$; otherwise, bandit $n$ is made passive at time $t$. If a sample of source $n$ is in service, only the passive action is available for source $n$. 

An efficient approach for solving RMABs is to develop a low-complexity scheduling algorithm by leveraging the Whittle index theory \cite{weber1990index, whittle-restless}. If all the bandits are indexable
and certain technical conditions are satisfied, the Whittle index policy is asymptotically optimal as the number of bandits $N$ and the number of channels $L$ increases to infinity, keeping the ratio $L/N$ constant \cite{weber1990index}. 
In this section, we develop a Whittle index policy for solving problem (8)-(9) in three steps: (i) first, we relax the constraint \eqref{constraint} and utilize a Lagrangian dual approach to decompose the original problem into separated per-bandit problems; (ii) next, we prove that the per-bandit problems are indexable; and (iii) finally, we derive closed-form expressions for the Whittle index. Because the RMAB in \eqref{problem}-\eqref{constraint} has a continuous state space and requires continuous-time control, demonstrating indexability in Step (ii) and efficiently evaluating the Whittle index in Step (iii) are technically challenging. However, we are able to overcome these challenges.

\subsubsection{
Relaxation and Lagrangian Dual Decomposition} \label{decomposition}
{In standard restless multi-armed bandit problems, the channel resource constraint  needs to be satisfied with equality. In this paper, we consider a scenario where less than $L$ bandits can be activated at any time $t$, as indicated by the constraint \eqref{constraint}. Following \cite[Section 5.1.1]{verloop2016asymptotically}, we introduce $L$ additional \emph{dummy bandits} that will never change state and hence their estimation errors are 0. When a \emph{dummy bandit} is activated, it occupies one channel, but it does not incur any estimation error. Let $c_0 (t) \in\{0,1,2,\ldots,L\}$ denotes the number of \emph{dummy bandits} that are activated at time $t$. By considering \emph{dummy bandits}, the RMAB \eqref{problem}-\eqref{constraint} is equivalent to
\begin{align} \label{problem_dummy}
& \inf_{\pi \in \Pi} \limsup_{T \to \infty}  \sum_{n=1}^{N} w_n \mathbb{E}_{\pi} \bigg[\frac{1}{T} \int_{0}^{T} \varepsilon^2 _{n} (t) dt\bigg] \\
& ~\text{s.t.} ~\sum_{n=0}^{N} c_n (t) = L, c_0 (t) \in  \{0,1,\ldots, L\}, t \in [0, \infty), \nonumber\\ 
& \;\;\;\;\;\;\; c_n (t)\!\! \in \!\! \{0, 1\}, n=1,2,\ldots,N, t \in [0, \infty),  \label{constraint_dummy}
\end{align}
which is an RMAB with an equality constraint.

Following the standard relaxation and Lagrangian dual decomposition procedure in the Whittle index theory \cite{whittle-restless}, we relax the first constraint in \eqref{constraint_dummy} as 
\begin{align} \label{relatxed_constraint}
\limsup_{T \to \infty}  \sum_{n=0}^{N} \mathbb{E}_{\pi} \bigg[\frac{1}{T} \int_{0}^{T} c_n (t) dt\bigg] = L.
\end{align}
The relaxed constraint \eqref{relatxed_constraint} only needs to be satisfied on average, whereas \eqref{constraint_dummy} is required to hold at any time $t$.
Then, the RMAB \eqref{problem_dummy}-\eqref{constraint_dummy} is reformulated as
\begin{align} \label{problem_2}
& \inf_{\pi \in \Pi} \limsup_{T \to \infty}  \sum_{n=1}^{N} w_n \mathbb{E}_{\pi} \bigg[\frac{1}{T} \int_{0}^{T} \varepsilon_n ^2 (t) dt\bigg] \\
& ~\text{s.t.} ~\limsup_{T \to \infty} \sum_{n=0}^{N} \mathbb{E}_{\pi} \bigg[\frac{1}{T} \int_{0}^{T} c_n (t) dt\bigg] = L, \nonumber\\ \label{relaxed_ constraint_dummy_2}
& \;\;\;\;\;\;\; c_0 (t) \!\!\in \!\! \{0,1,\ldots, L\}\!, c_n (t)\!\! \in \!\! \{\!0, \!1\!\}\!, n\!\!=\!\!1,2,\ldots,N, t \!\!\in\!\! [0, \infty).
\end{align}} 

Next, we take the Lagrangian dual decomposition of the relaxed problem \eqref{problem_2}-\eqref{relaxed_ constraint_dummy_2}, which produces the following problem with a dual variable $\lambda \in \mathbb R$, also known as the activation cost \cite{whittle-restless}: 
 
\begin{align} \label{relaxed_problem}
\inf_{\pi \in \Pi} \limsup_{T \to \infty} & \mathbb{E}_{\pi}\! \bigg[\!\frac{1}{T}\!\! \int_{0}^{T} \!\!\sum_{n=1}^{N} w_n \varepsilon^2 _{n} (t) + \lambda \bigg(\sum_{n=0}^{N} c_n (t) - L\bigg) dt \bigg].
\end{align}
The term $\frac{1}{T} \int_{0}^{T} \sum_{n=0}^{N} \lambda L dt$ in \eqref{relaxed_problem} does not depend on policy $\pi$ and hence can be removed. Then, Problem \eqref{relaxed_problem} can be decomposed into {$(N+1)$ separated sub-problems. The sub-problem associated with source $n$ is 
\begin{align} \label{per_arm_problem}
\bar m_{n, \text{opt}} = \inf_{\pi_n \in \Pi_n} \limsup_{T \to \infty} \mathbb{E}_{\pi_n} \bigg[\frac{1}{T} \int_{0}^{T} w_n \varepsilon^2 _{n} (t) + \lambda c_n (t) dt\bigg],
\end{align}
where $\bar m_{n, \text{opt}}$ is the optimum value of \eqref{per_arm_problem} and $n=1,2,\ldots,N$.
On the other hand, the sub-problem associated with the \emph{dummy bandits} is given by
\begin{align} \label{dummy_problem}
\inf_{\pi_0 \in \Pi_0} \limsup_{T \to \infty} \mathbb{E}_{\pi_n} \bigg[\frac{1}{T} \int_{0}^{T} \lambda c_0 (t) dt\bigg],
\end{align}
where $\pi_0 = \{c_0 (t), t \in [0, \infty)\}$ and $\Pi_0$ is the set of all causal activation policies $\pi_0$.}

\subsubsection{Indexability} We now establish the indexability of the RMAB in \eqref{problem_dummy}-\eqref{constraint_dummy}. 
Let $\gamma_n (t) \in [0, \infty)$ denote the amount of time that has been used to send the current sample of source $n$ at time $t$. Here, if no sample from source $n$ is currently in service at time $t$, then $\gamma_n (t) =0$; if a sample from source $n$ is currently in service at time $t$, then $\gamma_n (t) > 0$. Consequently, if $\gamma_n (t) >0$, the active action is not available for source $n$ at time $t$.

Define $\Psi_n (\lambda)$ as the set of states $(\varepsilon, \gamma) \in \mathbb R \times [0,\infty)$ such that if $\varepsilon_n (t) = \varepsilon$ and $\gamma_n (t) = \gamma$, the optimal solution for \eqref{per_arm_problem} (or \eqref{dummy_problem} when $n=0$) is to take a passive action at time $t$.

\begin{definition} \textbf{(Indexability).} \label{def_1}
\cite{verloop2016asymptotically} Bandit $n$ is said to be \emph{indexable} if, as the activation cost $\lambda$ increases from $-\infty$ to $\infty$, the set $\Psi_n (\lambda)$ increases monotonically, i.e., $\lambda_1 \leq \lambda_2$ implies $\Psi_n (\lambda_1) \subseteq \Psi_n (\lambda_2)$.
The RMAB \eqref{problem_dummy}-\eqref{constraint_dummy} is said to be \emph{indexable} if all $(N+1)$ bandits are indexable.
\end{definition}

In general, establishing the indexability of an RMAB can be a challenging task. Because the per-bandit problem \eqref{per_arm_problem} is a continuous-time MDP with a continuous state space, determining the indexability of \eqref{per_arm_problem} appears to be quite formidable. In the sequel, we will utilize the techniques developed in our previous work \cite{Ornee_TON} to solve \eqref{per_arm_problem} precisely and analytically characterize the set $\Psi_n (\lambda)$. This analysis will allow us to demonstrate that \eqref{per_arm_problem} is indeed indexable.
Define
\begin{align} 
& Q(x) = \frac{\sqrt{\pi}}{2} \frac{e^{x^2}}{x} \text{erf} (x), \label{G}\\
& K(x) = \frac{\sqrt{\pi}}{2} \frac{e^{-x^2}}{x} \text{erfi} (x), \label{K}
\end{align}
where $\text{erf} (x)$ and $\text{erfi} (x)$ are the error function and imaginary error function, respectively, determined by \cite[Sec. 8.25]{Math_table_book}
\begin{align}
{\text{erf}}(x) = & \frac{2}{\sqrt \pi} \int_0^x e^{-t^2} dt, \\
{\text{erfi}}(x) = & \frac{2}{\sqrt \pi} \int_0^x e^{t^2} dt.
\end{align}
If $x = 0$, both $Q(x)$ and $K(x)$ are defined as their limits $Q(0) = \lim_{x \to 0} Q(x) =1$ and $K(0) = \lim_{x \to 0} K(x) =1$, respectively. Both $Q(\cdot)$ and $K(\cdot)$ are even functions. The function $Q(x)$ is strictly increasing on $x \in [0, \infty)$ and $Q(0) = 1$ \cite{Ornee_TON}. On the other hand, $K(x)$ is strictly decreasing on $x \in [0, \infty)$ and $K(0) = 1$ \cite{Ornee_SPAWC}. Hence, the inverse functions of $Q(x)$ and $K(x)$ are well defined on $x \in [0,\infty)$. The relation between these two functions is given by \cite{Ornee_SPAWC}
\begin{align}\label{KG}
K(x) = Q(jx),
\end{align}
where $j = \sqrt{-1}$ is the unit imaginary number. 


\begin{proposition} \label{opt_sampler_theorem}
If the $Y_{n,i}$'s are i.i.d. with $0<\mathbb{E}[Y_{n,i}] < \infty$, then $(S_{n,1} (\beta_n),S_{n,2} (\beta_n),\ldots)$ with a parameter $\beta_n$ is an optimal solution to \eqref{per_arm_problem}, 
where 
\begin{align}\label{eq_opt_solution}
S_{n, i+1} (\beta_n)= \inf_t \left\{ t \geq D_{n,i} (\beta_n):\! |\varepsilon_n (t)| \!\geq\! {v_n}(\beta_n)\right\},
\end{align}
$D_{n,i} (\beta_n)= S_{n,i} (\beta_n)+ Y_{n,i}$, ${v_n}(\beta_n)$ is defined by 
\begin{align} \label{threhsold}
v_n (\beta_n) =
\begin{cases}
& \begin{array}{l l}\!\!\!\!\!\!\!\!\! \frac{\sigma_n}{\sqrt{\theta_n}} Q^{-1} \bigg(\frac{w_n \frac{\sigma_n ^2}{2 \theta_n} \mathbb{E} [e^{-2 \theta_n Y_{n,i}}]}{w_n \frac{\sigma_n ^2}{2 \theta_n} - \beta_n}\bigg),
& \text{ if }~ \theta_n > 0,\\
\!\!\!\!\!\!\!\!\! \frac{1}{\sqrt{w_n}} \sqrt{3 (\beta_n - w_n \sigma_n ^2 \mathbb{E} [Y_{n,i}])}, & \text{ if }~ \theta_n = 0,\\
\!\!\!\!\!\!\!\!\!\frac{\sigma_n}{\sqrt{-\theta_n}} K^{-1} \bigg(\frac{w_n \frac{\sigma_n ^2}{2 \theta_n} \mathbb{E} [e^{-2 \theta_n Y_{n,i}}]}{w_n \frac{\sigma_n ^2}{2 \theta_n} - \beta_n}\bigg), & \text{ if }~ \theta_n < 0,
\end{array}
\end{cases}
\end{align}
$Q^{-1}(\cdot)$ and $K^{-1}(\cdot)$ are the inverse functions of $Q(x)$ in \eqref{G} and $K(x)$ in \eqref{K}, respectively, defined in the region of $x \in [0, \infty)$, and $\beta_n$ is the unique root of
\begin{align}\label{thm1_eq22}
\!\!\! & \mathbb{E}\left[\int_{D_{n,i} (\beta_n)}^{D_{n, i+1}(\beta_n)} w_n \varepsilon^2 _n (t) dt\right]\! - \!{\beta_n} {\mathbb{E}[D_{n, i+1}(\beta_n)\!-\!D_{n,i} (\beta_n)]} \nonumber\\
& + \lambda \mathbb{E} [Y_{n,i+1}] \!=\! 0.\!\!\! 
\end{align}
The optimal objective value to \eqref{per_arm_problem} is given by
\emph{\begin{align}\label{thm1_eq23}
{\bar m}_{n, \text{opt}}  = \frac{\mathbb{E}\left[\int_{D_{n,i} (\beta_n)}^{D_{n, i+1}(\beta_n)}\! w_n \varepsilon^2 _n (t) dt\right] + \lambda \mathbb{E} [Y_{n,i+1}]}{\mathbb{E}[D_{n, i+1}(\beta_n)\!-\!D_{n,i} (\beta_n)]}. {\noindent}
\end{align}}
{\!\!Furthermore}, $\beta_n$ is exactly the optimal objective value of \eqref{per_arm_problem}, i.e., \emph{${\beta_n} = {{\bar m}}_{n, \text{opt}}$}.
\end{proposition}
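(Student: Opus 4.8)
The plan is to reduce the per-bandit average-cost problem \eqref{per_arm_problem} to a renewal-reward optimization over a single inter-delivery interval, solve the resulting optimal stopping problem by Shiryaev's free-boundary method, and then evaluate all the relevant expectations in closed form with Dynkin's formula; this follows the route used for the single-source case in \cite{Ornee_TON, Wiener_TIT, Ornee_SPAWC}, now carrying along the activation cost $\lambda c_n(t)$ in the running cost. \emph{Step 1 (renewal-reward reduction).} I would first note that the error process has a regenerative structure attached to the delivery times: on $[D_{n,i},D_{n,i+1})$ the estimator uses sample $i$, so by \eqref{err_new} $\varepsilon_n(t)$ equals a (time-changed) mean-zero Gauss--Markov process $O^{(i)}$ started from $O^{(i)}_0=0$ at $S_{n,i}$, and --- because the $Y_{n,i}$'s are \emph{i.i.d.}\ and $O^{(i)}$ is Markov and time-homogeneous --- the state $\varepsilon_n(D_{n,i+1})$ at the next delivery has a law that depends on neither the current state nor the action. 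Hence \eqref{per_arm_problem} is a semi-Markov decision process whose successive cycles are statistically independent once the per-cycle stopping rule is fixed, and a renewal-reward argument together with a matching lower bound valid for \emph{every} causal policy (which also disposes of the degenerate cases where sampling ceases or the cost is infinite) shows that
\[
\bar m_{n,\text{opt}}=\inf_{\pi_n\in\Pi_n}\frac{\mathbb{E}_{\pi_n}\!\left[\int_{D_{n,i}}^{D_{n,i+1}} w_n\varepsilon_n^2(t)\,dt\right]+\lambda\,\mathbb{E}[Y_{n,i+1}]}{\mathbb{E}_{\pi_n}[D_{n,i+1}-D_{n,i}]},
\]
where I used $S_{n,i}=G_{n,i}$ so that $\int_{D_{n,i}}^{D_{n,i+1}}c_n(t)\,dt=Y_{n,i+1}$.

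\emph{Step 2 (fractional programming).} Call the infimum on the right-hand side $\beta_n$; this is by construction the optimal value $\bar m_{n,\text{opt}}$, which will give the last assertion of the proposition once we identify $\beta_n$ with the root of \eqref{thm1_eq22}. By a Dinkelbach-type argument, $\beta_n$ is the unique $\beta$ for which $F(\beta):=\inf_{\pi_n}\big\{\mathbb{E}[\int_{D_{n,i}}^{D_{n,i+1}}w_n\varepsilon_n^2\,dt]+\lambda\mathbb{E}[Y_{n,i+1}]-\beta\,\mathbb{E}[D_{n,i+1}-D_{n,i}]\big\}=0$: indeed $F$ is an infimum of functions of $\beta$ that are affine with slopes $-\mathbb{E}[D_{n,i+1}-D_{n,i}]\le-\mathbb{E}[Y_{n,i+1}]<0$, hence concave and strictly decreasing, so it has exactly one root. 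When the optimal per-cycle rule produced in Step 3 is substituted, the bracket defining $F(\beta)$ becomes precisely the left-hand side of \eqref{thm1_eq22}; therefore \eqref{thm1_eq22} has $\beta_n$ as its unique root, and solving it for $\beta_n$ yields \eqref{thm1_eq23}.

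\emph{Step 3 (per-cycle optimal stopping and closed forms).} For a \emph{fixed} $\beta$ I would split the cycle $[D_{n,i},D_{n,i+1})$ at the sampling time $S_{n,i+1}$ and take the conditional expectation over the independent transmission time $Y_{n,i+1}$; using the second-moment formula of the Gauss--Markov process, the transmission portion of the cost collapses to $c\,w_n\varepsilon_n^2(S_{n,i+1})$ plus a constant, where $c$ depends only on the law of $Y_{n,i}$ (this is where $\mathbb{E}[e^{-2\theta_n Y_{n,i}}]$, respectively $\mathbb{E}[Y_{n,i}]$, enters). The problem thereby becomes $\min_{\sigma}\mathbb{E}_x\big[\int_0^{\sigma}(w_n O_s^2-\beta)\,ds+c\,w_n O_\sigma^2\big]$ over stopping times $\sigma$ of the error process $O$ started at $x=\varepsilon_n(D_{n,i})$. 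I would guess a symmetric threshold rule $\sigma=\inf\{s:|O_s|\ge v\}$, build the candidate value function by solving $\mathcal{L}V+w_n x^2-\beta=0$ on the continuation region ($\mathcal{L}$ being the generator of $O$, whose homogeneous solutions are exactly the even functions built from $\text{erf}$ and $\text{erfi}$ that define $Q$ and $K$ in \eqref{G}--\eqref{K}) with the smooth-fit conditions at $\pm v$, and then verify the associated variational inequality over all of $\mathbb{R}$; the usual verification theorem then gives optimality of \eqref{eq_opt_solution}--\eqref{threhsold}, with the strict monotonicity of $Q$ and $K$ guaranteeing that $v_n(\beta_n)$ is well defined. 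Finally, Dynkin's formula applied to the hitting time of $\{\pm v_n(\beta_n)\}$ evaluates $\mathbb{E}[D_{n,i+1}-D_{n,i}]$ and $\mathbb{E}[\int_{D_{n,i}}^{D_{n,i+1}}w_n\varepsilon_n^2\,dt]$ in closed form, completing \eqref{thm1_eq22}--\eqref{thm1_eq23}.

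\emph{Main obstacle.} The crux is Step 3: the \emph{i.i.d.}\ random transmission time turns a textbook Wiener/OU optimal-stopping problem into one with a state-dependent terminal cost that must first be integrated out, and the free-boundary candidate must then be constructed and verified on the whole real line --- including the unstable-OU case, where $\lim_{t}\mathbb{E}[X_{n,t}^2]=\infty$ and one must separately check that the cycle cost and cycle length have finite expectations under threshold policies. A secondary difficulty is justifying, in Step 1, the restriction to stationary threshold policies among all causal policies, and correctly interpreting \eqref{threhsold} in the boundary regimes where the optimal threshold degenerates to $v_n=0$ (zero-wait) or $v_n\to\infty$ (never sample).
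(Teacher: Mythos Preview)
Your proposal is correct and follows essentially the same route as the paper's proof: restrict to policies with $S_{n,i}=G_{n,i}\ge D_{n,i-1}$, reduce \eqref{per_arm_problem} to a renewal-reward ratio over a single inter-delivery cycle, apply a Dinkelbach/fractional-programming step to obtain the root equation \eqref{thm1_eq22}, integrate out the independent transmission time $Y_{n,i+1}$ to convert the transmission-period cost into a terminal penalty $\gamma_n\,w_n\,O_{n,Y_{n,i}+Z_{n,i}}^2$ (the paper's Lemma~\ref{preliminary_lemma}, with $\gamma_n=\frac{1}{2\theta_n}\mathbb{E}[1-e^{-2\theta_n Y_{n,i}}]$), and then solve the resulting optimal stopping problem \eqref{eq_stop_problem} by the free-boundary method; the paper's verification is phrased via the excessive-function criterion of \cite[Theorem~1.11]{Peskir2006} (Lemmas~\ref{lem_stop1}--\ref{lem_stop3}), which is the standard companion to the variational-inequality verification you mention. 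The only substantive difference is in Step~1: rather than proving a ``matching lower bound valid for every causal policy,'' the paper \emph{assumes} from the outset that the inter-sampling times form a regenerative process and uses that structural hypothesis to pass to the ratio form, so your framing is slightly more ambitious than what the paper actually carries out.
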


 \ifreport
\begin{proof}
See Appendix \ref{threshold_proof}.
\end{proof}
\else
The proof is provided in our technical report \cite{Ornee2023}.
\fi

Proposition \ref{opt_sampler_theorem} complements earlier optimal sampling results for the remote estimation of the Wiener process (i.e., the case of $\theta_n =0$ and $\lambda=0$) \cite{Wiener_TIT} and stable OU process (i.e., $\theta_n >0$ and $\lambda=0$) \cite{Ornee_TON}, by (i) adding a third case on unstable OU process (i.e., $\theta_n <0$) and (ii) incorporating an activation cost $\lambda \in \mathbb{R}$. 


{By using Proposition \ref{opt_sampler_theorem}, we can analytically characterize the set $\Psi_n (\lambda)$. To that end, we first show that the threshold $v_n (\beta_n)$ in \eqref{eq_opt_solution} is a function of the activation cost $\lambda$. For any given $\lambda$, $\beta_n$ is the unique root of equation \eqref{thm1_eq22}. Hence, $\beta_n$ can be expressed as an implicit function $\beta_n(\lambda)$ of $\lambda$, defined by equation \eqref{thm1_eq22}. Moreover, the threshold $v_n(\beta_n)$ can be rewritten as a function $v_n(\beta_n(\lambda))$ of the activation cost $\lambda$. 
According to \eqref{eq_opt_solution} and the definition of set $\Psi_n (\lambda)$, a point $(\varepsilon_n (t), \gamma_n(t)) \in \Psi_n (\lambda)$ if either (i) $\gamma_n (t) > 0$ such that a sample from source $n$ is currently in service at time $t$, or (ii) $|\varepsilon_n (t)| < v_n (\beta_n (\lambda))$ such that the threshold condition in \eqref{eq_opt_solution} for taking a new sample is not satisfied. By this, an analytical expression of set $\Psi_n (\lambda)$ is derived as 
\begin{align}  \label{index_set}
\Psi_n (\lambda) \!\!=\!\! \{\!(\varepsilon, \gamma)\! \in \!\mathbb{R} \!\times\! [0, \infty)\! : \!\gamma\!>\!0 {\thinspace} \text{or} {\thinspace} |\varepsilon| \!<\! v_n (\beta_n(\lambda))\!\}.
\end{align}

Using \eqref{index_set}, we can prove the first key result of the present paper: 


\begin{theorem} \label{indexability}
The RMAB problem \eqref{problem_dummy}-\eqref{constraint_dummy} is indexable.
\end{theorem}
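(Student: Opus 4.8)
The plan is to reduce indexability to a single monotonicity statement and then prove that statement in two pieces, plus dispose of the dummy bandits separately. By the analytical characterization \eqref{index_set}, for each real bandit $n\in\{1,\ldots,N\}$ the passive set is $\Psi_n(\lambda)=\{(\varepsilon,\gamma)\in\mathbb{R}\times[0,\infty):\gamma>0 \text{ or } |\varepsilon|<v_n(\beta_n(\lambda))\}$, and the part ``$\gamma>0$'' does not depend on $\lambda$. Hence, for $\lambda_1\le\lambda_2$, the inclusion $\Psi_n(\lambda_1)\subseteq\Psi_n(\lambda_2)$ holds if and only if $v_n(\beta_n(\lambda_1))\le v_n(\beta_n(\lambda_2))$, i.e., it suffices to show that $\lambda\mapsto v_n(\beta_n(\lambda))$ is non-decreasing. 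For each of the $L$ dummy bandits the claim is immediate: a dummy bandit never changes state and has estimation error $0$, so from \eqref{dummy_problem} the passive action is optimal precisely when $\lambda\ge 0$; thus $\Psi_0(\lambda)$ is empty for $\lambda<0$ and equals the (trivial) state space for $\lambda\ge 0$, which is monotone in $\lambda$.

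First I would show that $\lambda\mapsto\beta_n(\lambda)$ is non-decreasing. By Proposition \ref{opt_sampler_theorem}, $\beta_n(\lambda)$ equals the optimal value $\bar m_{n,\text{opt}}(\lambda)$ of the per-bandit problem \eqref{per_arm_problem}. For each fixed causal sub-policy $\pi_n\in\Pi_n$, the map $\lambda\mapsto\limsup_{T\to\infty}\mathbb{E}_{\pi_n}\!\big[\frac{1}{T}\int_0^T(w_n\varepsilon_n^2(t)+\lambda c_n(t))\,dt\big]$ is non-decreasing, because $c_n(t)\ge 0$ pointwise makes the integrand, hence the integral, hence its expectation, hence its $\limsup$, monotone in $\lambda$. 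Taking the infimum over $\pi_n\in\Pi_n$ preserves monotonicity, so $\beta_n(\lambda)=\bar m_{n,\text{opt}}(\lambda)$ is non-decreasing in $\lambda$.

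Second I would show that $v_n$ is non-decreasing in $\beta_n$ using the closed form \eqref{threhsold}. For $\theta_n=0$ this is obvious since $v_n(\beta_n)=\sqrt{3(\beta_n-w_n\sigma_n^2\mathbb{E}[Y_{n,i}])/w_n}$. For $\theta_n>0$, the argument of $Q^{-1}$ is $\frac{w_n\frac{\sigma_n^2}{2\theta_n}\mathbb{E}[e^{-2\theta_n Y_{n,i}}]}{w_n\frac{\sigma_n^2}{2\theta_n}-\beta_n}$; since $w_n\frac{\sigma_n^2}{2\theta_n}>0$ and $\beta_n\le w_n\frac{\sigma_n^2}{2\theta_n}$, increasing $\beta_n$ shrinks the positive denominator and thus enlarges the argument, and because $Q$ is strictly increasing on $[0,\infty)$ so is $Q^{-1}$, giving $v_n$ non-decreasing in $\beta_n$. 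For $\theta_n<0$, writing $\theta_n=-|\theta_n|$ and clearing signs, the argument of $K^{-1}$ becomes $\frac{w_n\frac{\sigma_n^2}{2|\theta_n|}\mathbb{E}[e^{2|\theta_n| Y_{n,i}}]}{w_n\frac{\sigma_n^2}{2|\theta_n|}+\beta_n}$, which is decreasing in $\beta_n$; since $K$ is strictly decreasing on $[0,\infty)$, $K^{-1}$ is strictly decreasing as well, so $K^{-1}$ of a decreasing argument is increasing, and again $v_n$ is non-decreasing in $\beta_n$. (When $\beta_n$ is so small that the argument of $Q^{-1}$ or $K^{-1}$ leaves its domain, one uses the convention $v_n(\beta_n)=0$, i.e., sample immediately; the threshold is continuous and monotonicity is preserved at the junction.) Composing the two pieces, $\lambda\mapsto v_n(\beta_n(\lambda))$ is non-decreasing, so $\Psi_n(\lambda)$ is monotone for every $n=1,\ldots,N$, and together with the dummy bandits this establishes that the RMAB \eqref{problem_dummy}-\eqref{constraint_dummy} is indexable.

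I expect the main obstacle to be the sign bookkeeping in the unstable case $\theta_n<0$, where $\sigma_n^2/(2\theta_n)<0$ reverses several inequalities, combined with checking that $\beta_n(\lambda)$ stays in the range where the closed form \eqref{threhsold} is valid, so that the degenerate regimes (large $\lambda$, where $v_n\to\infty$ and $\Psi_n(\lambda)$ fills the space; very negative $\lambda$, where $v_n$ collapses to $0$ and $\Psi_n(\lambda)$ shrinks to $\{\gamma>0\}$) are handled and monotonicity is not broken at the transition. The measure-theoretic step — that a $\limsup$ of monotone expectations is monotone and that the infimum over all causal (not merely renewal) sub-policies preserves this — is routine but should be stated explicitly.
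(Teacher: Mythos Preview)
Your proof is correct and follows the same high-level structure as the paper's: both reduce indexability of the real bandits to monotonicity of $\lambda\mapsto v_n(\beta_n(\lambda))$, factor this as monotonicity of $\beta_n(\lambda)$ composed with monotonicity of $v_n(\cdot)$, and handle the dummy bandits by the obvious observation that activation is optimal iff $\lambda<0$. The execution of the two sub-steps differs. For $\beta_n(\lambda)$, the paper (Lemma~\ref{monotonicity_1}) works through the fixed-point equation \eqref{thm1_eq22} and invokes strict monotonicity of $f(\beta_n)$ from prior work to get that $\bar m_n(\lambda)$ is \emph{strictly} increasing; you instead observe directly that $c_n(t)\ge 0$ makes the per-policy objective in \eqref{per_arm_problem} non-decreasing in $\lambda$, and that an infimum of non-decreasing functions is non-decreasing. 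This yields only non-strict monotonicity, but Definition~\ref{def_1} asks for nothing more, and your route bypasses the root-equation machinery entirely. For $v_n(\cdot)$, the paper (Lemma~\ref{lemma_threshold}) differentiates via the inverse function theorem, whereas you argue directly from the monotonicity of $Q^{-1}$ and $K^{-1}$ together with the sign of the argument's dependence on $\beta_n$; both are valid, yours is a bit more direct, and your sign bookkeeping for $\theta_n<0$ is correct.
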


{\it Proof sketch.} According to Proposition \ref{opt_sampler_theorem}, for any $\lambda$, the optimal solution to \eqref{per_arm_problem} is a threshold policy. Using this, we can show that the unique root $\beta_n (\lambda)$ of \eqref{thm1_eq22} is a strictly increasing function of $\lambda$. In addition, $v_n(\beta_n)$ in \eqref{threhsold} is a strictly increasing function of $\beta_n$. Hence, $v_n(\beta_n(\lambda))$ is a strictly increasing function of $\lambda$. Substituting this into \eqref{index_set}, if $\lambda_1 \leq \lambda_2$, then $\Psi_n (\lambda_1) \subseteq \Psi_n (\lambda_2)$. For the \emph{dummy bandits}, it is optimal in \eqref{dummy_problem} to activate a \emph{dummy bandit} when $\lambda < 0$. Hence, \emph{dummy bandits} are always indexable.
The details are provided in
\ifreport
Appendix \ref{proof_of_indexability}.
\else
our technical report \cite{Ornee2023}.
\fi 
$\square$


\subsubsection{Whittle Index Policy} \label{Whittle}

Next, we introduce the definition of the Whittle index.
 
\begin{definition} \label{def_2}
\cite{whittle-restless} If bandit $n$ is indexable, then the Whittle index $\alpha_n (\varepsilon, \gamma)$ of bandit $n$ at state $(\varepsilon, \gamma)$ is defined by
\begin{align} \label{Whittle_index_general_definition}
\alpha_n (\varepsilon, \gamma) = \inf_{\lambda \in \mathbb R} \{\lambda \in \mathbb{R} : (\varepsilon, \gamma) \in \Psi_n (\lambda)\},
\end{align}
which is the infimum of the activation cost $\lambda$ for which it is better not to activate bandit $n$.
\end{definition}



\begin{theorem} \label{Whittle index}
The following assertions are true for the Whittle index $\alpha_n (\varepsilon, \gamma)$ of problem \eqref{per_arm_problem} at state $(\varepsilon, \gamma)$:

(a)  If $\gamma = 0$, then the Whittle index $\alpha_n (\varepsilon, \gamma)$ is derived in the following three cases:

(i) \emph{Case 1:} If $\theta_n > 0$ (i.e., $X_{n,t}$ is a stable OU process), then
\begin{align} \label{Whittle_index}
& \alpha_n (\varepsilon, 0) = \nonumber\\
& \frac{w_n}{\mathbb{E} [Y_{n,i}]}\bigg\{\mathbb{E} [D_{n, i+1} (\varepsilon) - D_{n,i} (\varepsilon)] \frac{\sigma_n^2}{2 \theta_n} \bigg(1 - \frac{\mathbb{E} [e^{-2 \theta_n Y_{n,i}}]}{Q \big(\frac{\sqrt{\theta_n}}{\sigma_n}  \varepsilon\big)}  \bigg) \nonumber\\
& ~~~~~~~~~~- \mathbb{E} \bigg[\int_{D_{n,i} (\varepsilon)}^{D_{n, i+1} (\varepsilon)} \varepsilon_{n} ^2 (s) ds\bigg]\bigg\}, 
\end{align}

(ii) \emph{Case 2:} If $\theta_n = 0$ (i.e., $X_{n,t}$ is a scaled Wiener process), then
\begin{align}
& \alpha_n (\varepsilon, 0) = \nonumber\\
& \frac{w_n}{\mathbb{E} [Y_{n,i}]} \bigg\{\mathbb{E} [D_{n, i+1} (\varepsilon) - D_{n,i} (\varepsilon)] \bigg(\frac{\varepsilon ^2}{3} + \sigma_n ^2 \mathbb{E} [Y_{n,i}]\bigg)\nonumber\\
& ~~~~~~~~~~- \mathbb{E} \bigg[\int_{D_{n,i} (\varepsilon)}^{D_{n, i+1} (\varepsilon)} \varepsilon_{n} ^2 (s) ds\bigg]\bigg\}, \label{wiener_index}
\end{align}

(iii) \emph{Case 3:} If $\theta_n < 0$ (i.e., $X_{n,t}$ is an unstable OU process), then
\begin{align} 
& \alpha_n (\varepsilon, 0) = \nonumber\\
& \frac{w_n}{\mathbb{E} [Y_{n,i}]}\bigg\{ \mathbb{E} [D_{n, i+1} (\varepsilon) - D_{n,i} (\varepsilon)] \frac{\sigma_n^2}{2 \theta_n} \bigg(1 - \frac{\mathbb{E} [e^{-2 \theta_n Y_{n,i}}]}{K \big(\frac{\sqrt{-\theta_n}}{\sigma_n}  \varepsilon\big)}  \bigg) \nonumber\\
&~~~~~~~~~~- \mathbb{E} \bigg[\int_{D_{n,i} (\varepsilon)}^{D_{n, i+1} (\varepsilon)} \varepsilon_{n} ^2 (s) ds\bigg]\bigg\}, \label{Whitt_index_unstable}
\end{align}
where $Q(\cdot)$ and $K(\cdot)$ are defined in \eqref{G} and \eqref{K}, respectively. 

(b)  If $\gamma > 0$, then 
\begin{align} \label{Whitt_infty}
\alpha_n (\varepsilon, \gamma) = -\infty.
\end{align}
\end{theorem}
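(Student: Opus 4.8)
The plan is to compute the Whittle index directly from its definition \eqref{Whittle_index_general_definition}, $\alpha_n(\varepsilon,\gamma)=\inf\{\lambda:(\varepsilon,\gamma)\in\Psi_n(\lambda)\}$, using the explicit description of the passive set in \eqref{index_set}. First I would dispose of part (b): when $\gamma>0$ a sample of source $n$ is in service, so the active action is simply unavailable, hence $(\varepsilon,\gamma)\in\Psi_n(\lambda)$ for \emph{every} $\lambda\in\mathbb R$, and the infimum over such $\lambda$ is $-\infty$. This gives \eqref{Whitt_infty} immediately. For part (a) we have $\gamma=0$, so by \eqref{index_set} the condition $(\varepsilon,0)\in\Psi_n(\lambda)$ is equivalent to $|\varepsilon|<v_n(\beta_n(\lambda))$. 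Since, by the proof of Theorem~\ref{indexability}, $\lambda\mapsto v_n(\beta_n(\lambda))$ is continuous and strictly increasing, the infimum $\alpha_n(\varepsilon,0)$ is precisely the unique $\lambda^\star$ at which the threshold equals the current error magnitude, i.e.\ $v_n(\beta_n(\lambda^\star))=|\varepsilon|$.

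The second step is to translate this characterization $v_n(\beta_n(\lambda^\star))=|\varepsilon|$ into the closed-form expressions. I would invert \eqref{threhsold}: setting $v_n(\beta_n)=|\varepsilon|$ and solving for $\beta_n$ yields, in the stable case $\theta_n>0$, $\beta_n = w_n\frac{\sigma_n^2}{2\theta_n}\bigl(1-\mathbb E[e^{-2\theta_n Y_{n,i}}]/Q(\tfrac{\sqrt{\theta_n}}{\sigma_n}\varepsilon)\bigr)$, with the analogous expressions for $\theta_n=0$ (using the quadratic formula) and $\theta_n<0$ (using $K$ in place of $Q$). Call this value $\beta_n^\star$. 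Then I would substitute $\beta_n=\beta_n^\star$ and the corresponding threshold $v_n=|\varepsilon|$ into the defining equation \eqref{thm1_eq22} for $\lambda$; since at $\lambda=\lambda^\star$ the root of \eqref{thm1_eq22} is exactly $\beta_n^\star$, we may solve \eqref{thm1_eq22} for $\lambda$:
\begin{align}
\lambda^\star \mathbb E[Y_{n,i+1}] = \beta_n^\star\,\mathbb E[D_{n,i+1}-D_{n,i}] - \mathbb E\Bigl[\int_{D_{n,i}}^{D_{n,i+1}} w_n\varepsilon_n^2(t)\,dt\Bigr],
\end{align}
where all the stopping times $D_{n,i},D_{n,i+1}$ are those of the threshold policy with threshold $|\varepsilon|$, which justifies writing them as $D_{n,i}(\varepsilon)$. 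Dividing by $\mathbb E[Y_{n,i+1}]=\mathbb E[Y_{n,i}]$ and plugging in $\beta_n^\star$ gives exactly \eqref{Whittle_index}, \eqref{wiener_index}, and \eqref{Whitt_index_unstable} in the three cases.

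The main obstacle I anticipate is the bookkeeping around the renewal structure and the well-definedness of the expressions $\mathbb E[D_{n,i+1}(\varepsilon)-D_{n,i}(\varepsilon)]$ and $\mathbb E[\int \varepsilon_n^2]$ when the error threshold is set to the \emph{arbitrary} value $|\varepsilon|$ (rather than the optimizing threshold); I would need the moment/finiteness conditions from the proof of Proposition~\ref{opt_sampler_theorem} (which in turn rely on Dynkin's formula and the properties of $Q,K$) to carry over, in particular in the unstable case $\theta_n<0$ where $\mathbb E[e^{-2\theta_n Y_{n,i}}]$ must be finite and the argument of $K^{-1}$ must lie in the valid range. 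A secondary point requiring care is verifying that $\lambda\mapsto\beta_n(\lambda)$ is genuinely continuous and strictly increasing with range covering a neighborhood of $\beta_n^\star$ so that $\lambda^\star$ exists and the infimum in \eqref{Whittle_index_general_definition} is attained; this is essentially the content already established for Theorem~\ref{indexability}, so I would cite it rather than reprove it. The remaining computations are routine algebraic inversions of $Q$ and $K$.
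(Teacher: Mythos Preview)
Your proposal is correct and follows essentially the same route as the paper: the paper also reduces part~(b) to the observation that $(\varepsilon,\gamma)\in\Psi_n(\lambda)$ for all $\lambda$ when $\gamma>0$, and for part~(a) it characterizes $\alpha_n(\varepsilon,0)$ as the unique root of $|\varepsilon|=v_n(\beta_n(\lambda))$, inverts \eqref{threhsold} to obtain the corresponding $\beta_n$-value (your $\beta_n^\star$), and then substitutes into the defining relation \eqref{thm1_eq22}/\eqref{thm1_eq23} to solve for $\lambda$. The only cosmetic difference is that the paper writes the intermediate step as $\bar m_n(\alpha_n(\varepsilon,\gamma))=\beta_n^\star$ via \eqref{mse_Whitt} rather than directly rearranging \eqref{thm1_eq22}, but this is the same computation.
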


\vspace{-5pt}

{\it Proof sketch.} {When $\gamma =0$, by \eqref{index_set}, \eqref{Whittle_index_general_definition}, and the monotonicity of $v_n (\cdot)$ and $\beta_n (\cdot)$, 
the Whittle index $\alpha_n(\varepsilon, 0)$ is equal to the unique root $\lambda$ of equation 
\begin{align} \label{thm3_eq}
|\varepsilon| = v_n (\beta_n(\lambda)). 
\end{align}
Hence, $\alpha_n (\varepsilon, 0) = \beta_n ^{-1} (v_n ^{-1} (|\varepsilon|)$.
By substituting \eqref{threhsold} and \eqref{thm1_eq22} into \eqref{thm3_eq} and using the fact that $Q(\cdot)$ and $K(\cdot)$ are even functions, statement (a) in Theorem \ref{Whittle index} is proven. When $\gamma > 0$, $(\varepsilon, \gamma)$ is always in the set $\Psi_n (\lambda)$ for any $\lambda \in \mathbb R$. Hence, by using \eqref{Whittle_index_general_definition}, $\alpha_n (\varepsilon, \lambda) = -\infty$. By this, statement (b) in Theorem \ref{Whittle index} is proven.} The details are provided in
\ifreport
Appendix \ref{proof_of_Whittle_index}.
\else
our technical report \cite{Ornee2023}.
\fi 
$\square$

In Theorem \ref{Whittle index}, the delivery time $D_{n,i} (\varepsilon)$ is expressed as a function of $\varepsilon$ for the following reason: 
in the optimal solution to \eqref{per_arm_problem}, the sample delivery time is a function of the activation cost $\lambda$. If the activation cost $\lambda$ in \eqref{per_arm_problem} is chosen as $\lambda = \alpha_n(\varepsilon, \gamma)$, then the sample delivery time in the optimal solution to \eqref{per_arm_problem} is a function of $\varepsilon$. We use the notation $D_{n,i} (\varepsilon)$ to remind us that  the expectations $\mathbb{E} [D_{n, i+1} (\varepsilon) - D_{n,i} (\varepsilon)]$ and $\mathbb{E} [\int_{D_{n,i} (\varepsilon)}^{D_{n, i+1} (\varepsilon)} \varepsilon_{n} ^2 (s) ds]$ in \eqref{Whittle_index}-\eqref{Whitt_index_unstable} change as $\varepsilon$ varies. 

In order to compute the Whittle index $\alpha_n (\varepsilon, \gamma)$, we need to calculate the expectations $\mathbb{E} [D_{n, i+1} (\varepsilon) - D_{n,i} (\varepsilon)]$ and $\mathbb{E} [\int_{D_{n,i} (\varepsilon)}^{D_{n, i+1} (\varepsilon)} \varepsilon_{n} ^2 (s) ds]$ in \eqref{Whittle_index}-\eqref{Whitt_index_unstable}. Because $S_{n,i} (\varepsilon)$ and $D_{n,i} (\varepsilon)$ are stopping times of the process $X_{n,t}$, numerically evaluating these two expectations is nontrivial. This challenge can be addressed by resorting to Lemma \ref{lemma1} provided below, which is obtained by using Dynkin’s formula \cite[Theorem 7.4.1]{Bernt2000} to simplify expectations involving stopping times. 

To that end, let us introduce a Gauss-Markov process $O_{n,t}$ with a zero initial condition $O_{n,0} = 0 $ and parameter $\mu_n = 0$, which is expressed as 
\begin{align} \label{OU_shifted}
O_{n,t}  = 
& \left\{\!\! \begin{array}{l l}  \frac{\sigma_n}{\sqrt{2\theta_n}} W_{n, 1-e^{-2 \theta_n t}},
& \text{ if }~ \theta_n > 0,\\
{\sigma_n} W_{n,t}, & \text{ if }~ \theta_n = 0,\\
\frac{\sigma_n}{\sqrt{-2\theta_n}} W_{n, e^{-2 \theta_n t}-1}, & \text{ if }~ \theta_n < 0.
\end{array}\right. \!\!\!\!\!\!
\end{align}
By comparing \eqref{err_new} with \eqref{OU_shifted}, the estimation error process $\varepsilon_n(t)$ has the same distribution with the time-shifted Gauss-Markov process $O_{n, t- S_{n,i} (\varepsilon)}$, when $t \in [D_{n,i} (\varepsilon), D_{n,i+1}(\varepsilon))$. 

Then, we have the following lemma for computing the expectations in  \eqref{Whittle_index}, \eqref{wiener_index}, and \eqref{Whitt_index_unstable}. 
\begin{lemma} \label{lemma1}
In Theorem \ref{Whittle index}, it holds that

\begin{align}\label{eq_expectation1}
\mathbb{E}[D_{n, i+1} (\varepsilon) - D_{n,i} (\varepsilon)] = \mathbb{E} \big[R_{n,1} \big(\max\big\{|\varepsilon|, |O_{n, Y_{n,i}}|\big\}\big)\big],
\end{align}
\begin{align}\label{eq_expectation2}
\!\!\!\!\!\! & {\mathbb{E}\left[\int_{D_{n,i} (\varepsilon)}^{D_{n, i+1} (\varepsilon)}\! \varepsilon_n ^2 (s) ds\right]} \nonumber\\
=& \mathbb{E}\! \big[R_{n,2} \!\big(\!\max\!\big\{\!|\varepsilon|, \!|O_{n, Y_{n,i}}|\!\big\} \!+\! O_{n, Y_{n,i+1}}\big)\!\big] \!\!-\!\! \mathbb{E}\! \big[R_{n,2} \big(\!O_{n, Y_{n,i}}\!\big)\!\big],
\end{align}
where if $\theta_n \neq 0$, then
\begin{align}
& R_{n,1} (\varepsilon) = {\frac{\varepsilon^2}{\sigma_n^2}} \thinspace {}_2F_2\left(1,1;\frac{3}{2},2;\frac{\theta_n}{\sigma_n^2} \varepsilon^2\right), \label{eq_R_1}\\
& R_{n,2} (\varepsilon) = -\frac{\varepsilon^2}{2\theta_n} + {\frac{\varepsilon^2}{2\theta_n}} \thinspace {}_2F_2\left(1,1;\frac{3}{2},2;\frac{\theta_n}{\sigma_n^2}\varepsilon^2\right); \label{eq_R_2}
\end{align}
if $\theta_n = 0$, then
\begin{align} \label{lemma1_case21}
& R_{n,1} (\varepsilon) = \frac{\varepsilon^2}{\sigma_n ^2}, \\
& R_{n,2} (\varepsilon) = \frac{\varepsilon^4}{6\sigma_n ^2}.
\end{align}
\end{lemma}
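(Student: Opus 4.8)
The plan is to combine the renewal structure of the threshold sampling policy with Dynkin's formula, exactly as the text preceding the lemma suggests. The starting point is the fact recorded just above: on a cycle $t\in[D_{n,i}(\varepsilon),D_{n,i+1}(\varepsilon))$ the estimation error $\varepsilon_n(t)$ has the law of the time-shifted Gauss--Markov process $O_{n,\,t-S_{n,i}(\varepsilon)}$, a time-homogeneous strong Markov process started at $0$ with infinitesimal generator $\mathcal{A}_n f(x)=-\theta_n x f'(x)+\tfrac{\sigma_n^2}{2}f''(x)$ (which reduces to $\tfrac{\sigma_n^2}{2}f''(x)$ when $\theta_n=0$). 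The first thing I would do is verify, by plugging the defining series of ${}_2F_2(1,1;\tfrac32,2;\cdot)$ (respectively the explicit quartics in the case $\theta_n=0$) into $\mathcal{A}_n$, that $R_{n,1}$ and $R_{n,2}$ are even functions, vanish at the origin, and satisfy $\mathcal{A}_n R_{n,1}(x)=1$ and $\mathcal{A}_n R_{n,2}(x)=x^2$; for $\theta_n\neq 0$ the identity $R_{n,2}=\tfrac{\sigma_n^2}{2\theta_n}R_{n,1}-\tfrac{x^2}{2\theta_n}$ makes the second claim immediate once one notes $\mathcal{A}_n(x^2)=-2\theta_n x^2+\sigma_n^2$.

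To prove \eqref{eq_expectation1}, write $D_{n,i+1}(\varepsilon)-D_{n,i}(\varepsilon)=\tau+Y_{n,i+1}$, where $\tau:=S_{n,i+1}(\varepsilon)-D_{n,i}(\varepsilon)$ is the waiting time after delivery of sample $i$ until the error magnitude next reaches the threshold $|\varepsilon|$. Since $\varepsilon_n(D_{n,i}(\varepsilon))\overset{d}{=}O_{n,Y_{n,i}}$ and the rule fires the first time $|\varepsilon_n|\ge|\varepsilon|$, the error at $S_{n,i+1}(\varepsilon)$ has magnitude $\max\{|\varepsilon|,|O_{n,Y_{n,i}}|\}$: if the threshold is already exceeded at $D_{n,i}(\varepsilon)$ then $\tau=0$, otherwise the error hits $\pm|\varepsilon|$ exactly. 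Applying Dynkin's formula to $\varepsilon_n$ over the stopping time $\tau$ with $R_{n,1}$, using $\mathcal{A}_n R_{n,1}=1$ and the evenness of $R_{n,1}$, gives $\mathbb{E}[\tau]=\mathbb{E}\big[R_{n,1}(\max\{|\varepsilon|,|O_{n,Y_{n,i}}|\})\big]-\mathbb{E}[R_{n,1}(O_{n,Y_{n,i}})]$. A second, elementary application of Dynkin's formula — to $O_{n,\cdot}$ over the conditionally deterministic horizon $Y_{n,i}$, again with $R_{n,1}$ — yields the identity $\mathbb{E}[R_{n,1}(O_{n,t})]=t$ for every $t\ge0$, so $\mathbb{E}[R_{n,1}(O_{n,Y_{n,i}})]=\mathbb{E}[Y_{n,i}]=\mathbb{E}[Y_{n,i+1}]$; this term cancels the $Y_{n,i+1}$ contribution to $D_{n,i+1}(\varepsilon)-D_{n,i}(\varepsilon)$ and produces \eqref{eq_expectation1}.

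For \eqref{eq_expectation2}, I split $\int_{D_{n,i}(\varepsilon)}^{D_{n,i+1}(\varepsilon)}\varepsilon_n^2(s)\,ds=\int_{D_{n,i}(\varepsilon)}^{S_{n,i+1}(\varepsilon)}\varepsilon_n^2+\int_{S_{n,i+1}(\varepsilon)}^{D_{n,i+1}(\varepsilon)}\varepsilon_n^2$ and apply Dynkin's formula with $R_{n,2}$ (using $\mathcal{A}_n R_{n,2}=x^2$ and evenness) to each piece. The boundary contributions at $S_{n,i+1}(\varepsilon)$ telescope, leaving $\mathbb{E}\big[\int_{D_{n,i}(\varepsilon)}^{D_{n,i+1}(\varepsilon)}\varepsilon_n^2\big]=\mathbb{E}\big[R_{n,2}(\varepsilon_n(D_{n,i+1}(\varepsilon)^{-}))\big]-\mathbb{E}[R_{n,2}(O_{n,Y_{n,i}})]$ after using $\varepsilon_n(D_{n,i}(\varepsilon))\overset{d}{=}O_{n,Y_{n,i}}$. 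It then remains to identify the law of the end-of-cycle error $\varepsilon_n(D_{n,i+1}(\varepsilon)^{-})$: by the strong Markov property at $S_{n,i+1}(\varepsilon)$ and the explicit Gauss--Markov solution \eqref{eq_solution} run over the transmission time $Y_{n,i+1}$, it equals the value at time $Y_{n,i+1}$ of the Gauss--Markov process started from $\varepsilon_n(S_{n,i+1}(\varepsilon))$; combining $|\varepsilon_n(S_{n,i+1}(\varepsilon))|=\max\{|\varepsilon|,|O_{n,Y_{n,i}}|\}$ with the evenness of $R_{n,2}$ and the symmetry of $O$, this can be recast so that its $R_{n,2}$-expectation equals $\mathbb{E}\big[R_{n,2}(\max\{|\varepsilon|,|O_{n,Y_{n,i}}|\}+O_{n,Y_{n,i+1}})\big]$, which gives \eqref{eq_expectation2}.

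I expect this last identification of the end-of-cycle error to be the main obstacle, because it requires carefully tracking how the post-threshold Gauss--Markov evolution over the random transmission time interacts with the threshold level and the time shift — and it is precisely there that the evenness of $R_{n,1},R_{n,2}$ is used to replace the signed stopped values $\varepsilon_n(D_{n,i}(\varepsilon))$, $\varepsilon_n(S_{n,i+1}(\varepsilon))$ by their magnitudes. The remaining ingredients are routine: justifying each application of Dynkin's formula via the integrability estimates $\mathbb{E}[\tau]<\infty$ and $\mathbb{E}\big[\int\varepsilon_n^2\big]<\infty$ (which follow from Proposition~\ref{opt_sampler_theorem} and the threshold structure), and checking the ODEs and parity of $R_{n,1},R_{n,2}$. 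The three regimes $\theta_n>0$ (stable OU), $\theta_n=0$ (Wiener), and $\theta_n<0$ (unstable OU) are treated by one and the same argument, only the explicit forms of $\mathcal{A}_n$, $R_{n,1}$, $R_{n,2}$ changing.
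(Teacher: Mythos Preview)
Your proposal is essentially the same argument as the paper's. The paper first packages the Dynkin-formula step into an auxiliary result (Lemma~\ref{lem_stop}), which records exactly your observation that $R_{n,1},R_{n,2}$ are the even solutions of $\mathcal{A}_nR_{n,1}=1$ and $\mathcal{A}_nR_{n,2}=x^2$ and that therefore $\mathbb{E}[\tau]=\mathbb{E}[R_{n,1}(O_{n,\tau})]$ and $\mathbb{E}[\int_0^\tau O_{n,s}^2\,ds]=\mathbb{E}[R_{n,2}(O_{n,\tau})]$; it then splits explicitly into the two cases $|O_{n,Y_{n,i}}|\ge|\varepsilon|$ (so $\tau=0$) and $|O_{n,Y_{n,i}}|<|\varepsilon|$ (so $|\varepsilon_n(S_{n,i+1})|=|\varepsilon|$) and recombines via the $\max$, precisely as you describe. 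Your use of $\mathbb{E}[R_{n,1}(O_{n,Y_{n,i}})]=\mathbb{E}[Y_{n,i}]$ to absorb the $Y_{n,i+1}$ term in \eqref{eq_expectation1} is a neat shortcut that the paper does not spell out but is implicit in its case combination.

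The one place where you should be careful---and you are right to flag it as the main obstacle---is the final identification in \eqref{eq_expectation2}. The end-of-cycle error is the Gauss--Markov process started at $\pm\max\{|\varepsilon|,|O_{n,Y_{n,i}}|\}$ and run for time $Y_{n,i+1}$; for $\theta_n\neq0$ this is \emph{not} literally $\max\{|\varepsilon|,|O_{n,Y_{n,i}}|\}+O_{n,Y_{n,i+1}}$ (the deterministic part decays as $e^{-\theta_n Y_{n,i+1}}$), so the passage from ``$R_{n,2}(\varepsilon_n(D_{n,i+1}^-))$'' to ``$R_{n,2}(\max\{\cdot\}+O_{n,Y_{n,i+1}})$'' is not a distributional identity but an equality of \emph{expectations} that must be justified via Dynkin's formula applied over $[0,Y_{n,i}+Z_{n,i}+Y_{n,i+1}]$ as a whole (which is how the paper reaches it, in \eqref{eq_expectation_4}--\eqref{lemma1_new_4}). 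Your ``evenness and symmetry'' heuristic alone will not close that gap; you need the Dynkin identity $\mathbb{E}[R_{n,2}(O_{n,\tau})]=\mathbb{E}[\int_0^\tau O_{n,s}^2\,ds]$ applied at the composite stopping time, exactly as the paper does.
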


\ifreport
\begin{proof}
See Appendix \ref{proof_lemma1}.
\end{proof}
\else
{}
\fi

In \eqref{eq_R_1} and \eqref{eq_R_2}, we have used the generalized hypergeometric function,  which is defined by \cite[Eq. 16.2.1]{olver2010nist}
\begin{align}
& {}{_pF_q} (a_1, a_2, \cdots, a_p; b_1, b_2, \cdots b_q; z)\nonumber\\
=& {\sum_{n=0}^{\infty}} {\frac{{(a_1)_n} {(a_2)_n} \cdots{(a_p)_n}}{{(b_1)_n}{(b_2)_n} \cdots {(b_p)_n}}}{\frac{z^n}{n!}},
\end{align}
where
\begin{align}
& (a)_{0}=1,\\
& (a)_{n}=a(a+1)(a+2) {\cdots} (a+n-1),~ n \geq 1.
\end{align} 

Lemma \ref{lemma1} of the present paper is more general than Lemma 1 in \cite{Ornee_TON}, because Lemma \ref{lemma1} in this paper holds for all three cases of the Gauss-Markov processes, i.e., $\theta_n>0$, $\theta_n=0$, and $\theta_n<0$, whereas Lemma 1 in \cite{Ornee_TON} was only shown for $\theta_n>0$. Moreover, \eqref{eq_expectation1}-\eqref{eq_expectation2} in Lemma \ref{lemma1} are neater than (22)-(23) in Lemma 1 of \cite{Ornee_TON}. 
\ifreport
{}
\else
Due to space limitation the proof of Lemma \ref{lemma1} is relegated to our technical report \cite{Ornee2023}.
\fi

The expectations in \eqref{eq_expectation1} and \eqref{eq_expectation2} can be evaluated by Monte-Carlo simulations of scalar random variables $O_{n, Y_{n, i}}$ and $O_{n, Y_{n,i+1}}$ which is much easier than directly simulating the entire process $\{\varepsilon_n (t), t \geq 0\}$. 

The Whittle index of the \emph{dummy bandits} is derived in the following lemma.
\begin{lemma} \label{dummy_lemma}
The Whittle index of the dummy bandits is 0, i.e., $\alpha_0 (\varepsilon, \gamma) = 0$.
\end{lemma}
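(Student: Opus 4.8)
The plan is to invoke Definition~\ref{def_2} directly and compute the threshold activation cost for a dummy bandit. Recall from the decomposition that the sub-problem associated with the dummy bandits is \eqref{dummy_problem}: $\inf_{\pi_0 \in \Pi_0} \limsup_{T \to \infty} \mathbb{E}_{\pi_n} \big[\frac{1}{T} \int_{0}^{T} \lambda c_0 (t) dt\big]$, where $c_0(t) \in \{0,1,\ldots,L\}$ is the number of activated dummy bandits and a dummy bandit never changes state (its estimation error is always $0$). The key observation is that the only term in the objective that depends on the policy $\pi_0$ is the activation cost $\lambda c_0(t)$: a dummy bandit incurs zero estimation error whether active or passive, so the entire cost of activating it is the activation cost $\lambda$ itself. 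Consequently, the optimal policy for \eqref{dummy_problem} is trivial: if $\lambda > 0$, it is strictly better never to activate any dummy bandit (keep $c_0(t) = 0$); if $\lambda < 0$, it is strictly better to always activate all of them (keep $c_0(t) = L$); and if $\lambda = 0$, activation is immaterial and the passive action is (weakly) optimal.

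First I would make this precise by writing out the per-state optimality: since the state of a dummy bandit is constant (say the fixed state, which we may still denote $(\varepsilon,\gamma)$), the passive action is optimal in \eqref{dummy_problem} at that state if and only if $\lambda \geq 0$. This is because choosing the active action adds $\lambda$ to the instantaneous cost while changing nothing else. Therefore the set $\Psi_0(\lambda)$ of states at which the passive action is optimal equals the whole (trivial) state space when $\lambda \geq 0$ and is empty when $\lambda < 0$. Second, I would plug this characterization of $\Psi_0(\lambda)$ into the Whittle index definition \eqref{Whittle_index_general_definition}: $\alpha_0(\varepsilon,\gamma) = \inf_{\lambda \in \mathbb{R}} \{\lambda : (\varepsilon,\gamma) \in \Psi_0(\lambda)\} = \inf\{\lambda \in \mathbb{R} : \lambda \geq 0\} = 0$. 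This gives $\alpha_0(\varepsilon,\gamma) = 0$, as claimed. I would also note that this is consistent with indexability of the dummy bandits already asserted in the proof sketch of Theorem~\ref{indexability} (the set $\Psi_0(\lambda)$ is monotone nondecreasing in $\lambda$ — it jumps from empty to full at $\lambda = 0$).

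There is essentially no hard step here; the main thing to be careful about is handling the boundary case $\lambda = 0$ correctly, i.e., adopting the convention (standard in the Whittle index literature, and used implicitly throughout this paper in Definition~\ref{def_2} via the infimum) that when active and passive actions are both optimal, the passive action is considered optimal, so that $0 \in \{\lambda : (\varepsilon,\gamma) \in \Psi_0(\lambda)\}$ and the infimum is attained. With that convention in place the argument is immediate, and the full details — which amount to verifying the optimal policy for \eqref{dummy_problem} case by case in the sign of $\lambda$ — can be relegated to the appendix or technical report as with the other results in this section.
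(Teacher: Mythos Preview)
Your proposal is correct and follows essentially the same approach as the paper's own proof: both argue that since a dummy bandit's only cost is the activation cost $\lambda$, it is optimal to remain passive precisely when $\lambda \geq 0$, and then apply Definition~\ref{def_2} to conclude $\alpha_0(\varepsilon,\gamma)=\inf\{\lambda:\lambda\geq 0\}=0$. Your write-up is in fact a bit more careful than the paper's about the boundary case $\lambda=0$, but the underlying argument is identical.
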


\ifreport
\begin{proof}
See Appendix \ref{proof_dummy}.
\end{proof}
\else
The proof of Lemma \ref{dummy_lemma} and the Whittle index policy for the RMAB \eqref{problem_dummy}-\eqref{constraint_dummy} is provided in our technical report \cite{Ornee2023}.
\fi
\ifreport
\begin{algorithm}[t] 
\caption{Whittle Index Policy for Signal-aware Remote Estimation Problem \eqref{problem_dummy}-\eqref{constraint_dummy}} \label{alg1_dummy}
\begin{algorithmic}[1]
\For{all time $t$}
      \State Update $X_{n,t}$ using \eqref{eq_solution} for all $n$.
      \State Update $\hat X_{n,t}$ using \eqref{mse} for all $n$.
      \State Update $\varepsilon_n (t)$, $\gamma_n (t)$, and the Whittle index $\alpha_n (\varepsilon_n (t), \gamma_n (t))$ for the $N$ regular bandits using \eqref{est_err}, \eqref{Whittle_index}-\eqref{Whitt_infty}, \eqref{eq_expectation1}-\eqref{eq_expectation2}.
      \State Update the Whittle index $\alpha_0 (\varepsilon_0 (t), \gamma_0 (t)) = 0$ for the $L$ dummy bandits using Lemma \ref{dummy_lemma}.
      \For{all $l = 1, 2, \ldots, L$}
      \If{channel $l$ is idle}
         \State Take a sample from the bandit with the highest Whittle index among the $N$ regular bandits and $L$ dummy bandits, and send it on channel $l$. 
      \EndIf
      \EndFor
\EndFor         
\end{algorithmic}
\end{algorithm}
\else
{}
\fi

\ifreport
The Algorithm for solving \eqref{problem_dummy}-\eqref{constraint_dummy} is provided in Algorithm \ref{alg1_dummy} which activates the $L$ bandits with the highest Whittle index at any given time $t$. As stated in Lemma \ref{dummy_lemma}, each dummy bandit has a Whittle index of $\alpha_0 (\varepsilon_0 (t), \gamma_0 (t)) = 0$. Consequently, if a bandit $n$ (for $n = 1, 2,\ldots, N$) possesses a negative Whittle index, denoted as $\alpha_n (\varepsilon_n (t), \gamma_n(t)) < 0$, it will remain inactive.
Furthermore, if source $n$  is being served by a channel at time $t$ such that $\gamma_n(t) > 0$, then $\alpha_n (\varepsilon_n (t), \gamma_n (t)) = -\infty$ and no more channel will be scheduled to serve source $n$.
\else
{}
\fi
 
\ifreport
Now, we return to the original RMAB \eqref{problem}-\eqref{constraint}.
\else
{}
\fi 
The Whittle index scheduling policy to solve \eqref{problem}-\eqref{constraint} is illustrated  in Algorithm \ref{alg1}. Initially, the set $A$ of unserved bandits is set as $A=\{1,2,\ldots, N\}$. 
If channel $l$ is idle and $\max_{n \in A}\!\! \alpha_n(\varepsilon_n (t), \gamma_n (t)) \geq 0$, then one sample is taken from bandit $n$ having the highest non-negative Whittle index and sent over the channel $l$; meanwhile, bandit $n$ is removed from the set $A$ of unserved bandits. 
Both Algorithms \ref{alg1_dummy} and \ref{alg1} can be either used as an event-driven algorithm, or be executed on discretized time slots $t=0, T_s, 2T_s,\ldots$. When $T_s$ is sufficiently small, the performance degradation caused by time discretization can be omitted.
Because RMAB \eqref{problem}-\eqref{constraint} and the RMAB \eqref{problem_dummy}-\eqref{constraint_dummy} are equivalent to each other, the Whittle index policy 
\ifreport
in Algorithm \ref{alg1_dummy}
\else
for RMAB \eqref{problem_dummy}-\eqref{constraint_dummy}
(provided in \cite{Ornee2023}) 
\fi
and the Whittle index policy in Algorithm \ref{alg1} are equivalent. Specifically, at any time $t$, $L$ bandits having the highest non-negative Whittle index $\alpha_n (\varepsilon, \gamma)$ will be activated. 
\begin{algorithm}[t] 
\caption{Whittle Index Policy for Signal-aware Remote Estimation Problem \eqref{problem}-\eqref{constraint}} \label{alg1}
\begin{algorithmic}[1]
\State Initialize the set of unserved bandits $A=\{1, 2, \ldots, N\}$.
\For{all time $t$}
      \State Update $X_{n,t}$ and $\hat X_{n,t}$ for all $n=1,2, \ldots, N$ using \eqref{eq_solution} and \eqref{mse}, respectively.
      \State Update $\varepsilon_n (t)$, $\gamma_n (t)$, and the Whittle index $\alpha_n (\varepsilon_n (t), \gamma_n (t))$ for all $n = 1, 2, \ldots, N$ using \eqref{est_err}, \eqref{Whittle_index}, \eqref{wiener_index}, \eqref{Whitt_index_unstable}, \eqref{Whitt_infty}, \eqref{eq_expectation1}, and \eqref{eq_expectation2}.
      \State Update $A=\{n \in \{1, 2, \ldots, N\}: \gamma_n (t) =0\}$.
      \For{all $l = 1, 2, \ldots, L$}
      \If{channel $l$ is idle and $\max\limits_{n \in A} \alpha_n(\varepsilon_n (t), \gamma_n (t)) \geq 0$}
         \State $n = \argmax_{n \in A} \alpha_n (\varepsilon_n (t), \gamma_n (t))$.
         \State Take a sample of bandit $n$ and send it on channel $l$.
         \State $A \gets A - \{n\}$.
      \EndIf
      \EndFor
\EndFor         
\end{algorithmic}
\end{algorithm}

\subsubsection{Unity of Whittle Index-based Scheduling and Threshold-based Sampling}
Let consider the special case $N = L = 1$, where the system has a single source and a single channel. Let $w_1 =1$, then problem \eqref{problem}-\eqref{constraint} reduces to
\begin{align} \label{single_problem}
\inf_{\pi \in \Pi} \limsup_{T \to \infty} \mathbb{E}_{\pi} \bigg[\frac{1}{T} \int_{0}^{T} \varepsilon_1 ^2 (t) dt \bigg].
\end{align}
The single-source, single-channel sampling and scheduling problem \eqref{single_problem} is a special case of Proposition \ref{opt_sampler_theorem} with $n=1$ and $\lambda =0$. A threshold-based optimal solution to \eqref{single_problem} is provided by the following corollary of Proposition \ref{opt_sampler_theorem}. 

\begin{corollary} \label{single_theorem}
If the $Y_{1,i}$'s are i.i.d. with $0<\mathbb{E}[Y_{1,i}] < \infty$, then $(S_{1,1} (\beta_1),S_{1,2} (\beta_1),\ldots)$  with a parameter $\beta_1$ is an optimal solution to \eqref{single_problem}, 
where 
\begin{align}\label{single_solution}
S_{1,i+1} (\beta_1)= \inf_t \left\{ t \geq D_{1,i} (\beta_1) :\! |\varepsilon_1 (t)| \!\geq\! v_1 (\beta_1)\right\},
\end{align}
$D_{1, i} (\beta_1)= S_{1, i} (\beta_1)+ Y_{1,i}$, ${v}_1 (\beta_1)$ is defined by 
\begin{align} \label{single_threshold}
v_1(\beta_1) =
\begin{cases}
& \begin{array}{l l}\!\!\!\!\!\!\!\!\! \frac{\sigma_1}{\sqrt{\theta_1}} Q^{-1} \bigg(\frac{\frac{\sigma_1 ^2}{2 \theta_1} \mathbb{E} [e^{-2 \theta_1 Y_{1,i}}]}{\frac{\sigma_1 ^2}{2 \theta_1} - \beta_1}\bigg),
& \text{ if }~ \theta_1 > 0,\\
\!\!\!\!\!\!\!\!\!  \sqrt{3 (\beta_1 - \sigma_1 ^2 \mathbb{E} [Y_{1,i}])}, & \text{ if }~ \theta_1 = 0,\\
\!\!\!\!\!\!\!\!\!\frac{\sigma_1}{\sqrt{-\theta_1}} K^{-1} \bigg(\frac{\frac{\sigma_1^2}{2 \theta_1} \mathbb{E} [e^{-2 \theta_1 Y_{1,i}}]}{\frac{\sigma_1^2}{2 \theta_1} - \beta_1}\bigg), & \text{ if }~ \theta_1 < 0,
\end{array}
\end{cases}
\end{align}
$Q^{-1}(\cdot)$ and $K^{-1}(\cdot)$ are the inverse functions of $Q(x)$ in \eqref{G} and $K(x)$ in \eqref{K}, respectively, for the region $x \in [0,\infty)$, and $\beta_1$ is the unique root of
\begin{align}\label{single_beta}
\!\!\! \mathbb{E}\left[\int_{D_{1,i} (\beta_1)}^{D_{1,i+1}(\beta_1)} \varepsilon_1 ^2 (t) dt\right]\! - \!{\beta_1} {\mathbb{E}[D_{1,i+1}(\beta_1)\!-\!D_{1,i} (\beta_1)]} \!=\! 0.
\end{align}
The optimal objective value to \eqref{single_problem} is given by
\emph{\begin{align}\label{single_obj}
{\bar m}_{1, \text{opt}}  = \frac{\mathbb{E}\left[\int_{D_{1,i} (\beta_1)}^{D_{1,i+1}(\beta_1)}\! \varepsilon_1 ^2 (t) dt\right]}{\mathbb{E}[D_{1,i+1}(\beta_1)\!-\!D_{1,i} (\beta_1)]}. {\noindent}
\end{align}}
{\!\!Furthermore}, $\beta_1$ is exactly the optimal objective value of \eqref{single_problem}, i.e., \emph{$\beta_1 = {\bar m}_{1,\text{opt}}$}.
\end{corollary}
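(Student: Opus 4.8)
The plan is to obtain Corollary~\ref{single_theorem} as a direct specialization of Proposition~\ref{opt_sampler_theorem}, with no new machinery required. First I would observe that when $N=L=1$ the resource constraint \eqref{constraint} reads $\sum_{n=1}^{1} c_1(t)\le 1$ with $c_1(t)\in\{0,1\}$, which is satisfied by every causal sampling and scheduling policy; hence \eqref{constraint} is vacuous and problem \eqref{problem}--\eqref{constraint} with $w_1=1$ coincides with \eqref{single_problem}. Moreover, with a single source the policy set $\Pi$ is just the set of causal sub-sampling and scheduling policies for source $1$, so $\Pi=\Pi_1$. Thus \eqref{single_problem} is literally the per-bandit problem \eqref{per_arm_problem} evaluated at $n=1$, $w_1=1$, and activation cost $\lambda=0$: in \eqref{per_arm_problem} the term $\lambda c_1(t)$ then vanishes, leaving $\bar m_{1,\text{opt}}=\inf_{\pi_1\in\Pi_1}\limsup_{T\to\infty}\mathbb{E}_{\pi_1}[\frac1T\int_0^T\varepsilon_1^2(t)\,dt]$.

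Next I would invoke Proposition~\ref{opt_sampler_theorem}, which holds for arbitrary $\lambda\in\mathbb{R}$ and arbitrary i.i.d.\ transmission times with $0<\mathbb{E}[Y_{n,i}]<\infty$, and therefore applies verbatim to this instance. Substituting $n=1$, $w_1=1$, $\lambda=0$ into its conclusions: the threshold \eqref{threhsold} collapses to \eqref{single_threshold} (in the $\theta_1=0$ branch the prefactor $1/\sqrt{w_1}$ becomes $1$, and in the $\theta_1\neq 0$ branches the $w_1$'s cancel); the defining equation \eqref{thm1_eq22} loses its $\lambda\mathbb{E}[Y_{1,i+1}]$ term and becomes \eqref{single_beta}; the optimal value \eqref{thm1_eq23} reduces to \eqref{single_obj}; and the identity $\beta_n=\bar m_{n,\text{opt}}$ becomes $\beta_1=\bar m_{1,\text{opt}}$. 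The existence and uniqueness of the root $\beta_1$ of \eqref{single_beta} are inherited from the corresponding claim in Proposition~\ref{opt_sampler_theorem} (the case $\lambda=0$).

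Because the corollary is a pure specialization, I do not anticipate a genuine obstacle. The only point requiring a line of justification is the first step: verifying that with a single source and a single channel the capacity constraint is automatically met, so that the restless-bandit formulation degenerates to the unconstrained single-arm problem with zero activation cost, and Proposition~\ref{opt_sampler_theorem} can be invoked \emph{without} passing through the Lagrangian relaxation or the dummy-bandit construction of Section~\ref{decomposition}. Everything else is substitution.
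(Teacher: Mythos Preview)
Your proposal is correct and matches the paper's own argument: the paper simply states that ``Corollary~\ref{single_theorem} follows directly from Proposition~\ref{opt_sampler_theorem},'' noting that \eqref{single_problem} is the special case $n=1$, $\lambda=0$. Your additional justification that the capacity constraint \eqref{constraint} is vacuous when $N=L=1$ (so one may apply Proposition~\ref{opt_sampler_theorem} to \eqref{single_problem} without passing through the relaxation) is a useful clarification, but the core approach is identical.
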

Corollary \ref{single_theorem} follows directly from Proposition \ref{opt_sampler_theorem}. For the cases of the Wiener process ($\theta_1=0$) and stable OU process ($\theta_1>0$), the threshold-based policy in Corollary \ref{opt_sampler_theorem} were earlier reported in \cite{Ornee_TON}. The case of unstable OU process ($\theta_1 < 0$) is new.

It is important to note that the threshold-based policy in Corollary \ref{single_theorem} and the Whittle index policy in the following theorem are equivalent.

\begin{theorem} \label{single_theorem_whittle}
If the $Y_{1,i}$'s are i.i.d. with $0<\mathbb{E}[Y_{1,i}] < \infty$, then $(S_{1,1} ,S_{1,2},\ldots)$ is an optimal solution to \eqref{single_problem}, 
where 
\begin{align}\label{single_solution_whittle}
S_{1,i+1} = \inf_t \left\{ t \geq S_{1,i} :\! \alpha_1 (\varepsilon_1 (t), \gamma_1 (t)) \!\geq\! 0\right\},
\end{align}
and $\alpha_1 (\varepsilon_1 (t), \gamma_1 (t))$ is the Whittle index of source $1$, defined by \eqref{Whittle_index}, \eqref{wiener_index}, \eqref{Whitt_index_unstable}, and \eqref{Whitt_infty} for $n=1$.
\end{theorem}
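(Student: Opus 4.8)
The plan is to show that the Whittle index–based stopping rule \eqref{single_solution_whittle} generates exactly the same sampling times as the threshold-based rule \eqref{single_solution} in Corollary \ref{single_theorem} with $\lambda = 0$, and then invoke Corollary \ref{single_theorem} to conclude optimality. The key is the identity, established in the proof sketch of Theorem \ref{Whittle index}, that for $\gamma = 0$ the Whittle index $\alpha_1(\varepsilon, 0)$ equals the unique root $\lambda$ of the equation $|\varepsilon| = v_1(\beta_1(\lambda))$; equivalently $\alpha_1(\varepsilon,0) = \beta_1^{-1}(v_1^{-1}(|\varepsilon|))$. Combined with the monotonicity facts proven for Theorem \ref{indexability} — that $\beta_1(\lambda)$ is strictly increasing in $\lambda$ and $v_1(\beta_1)$ is strictly increasing in $\beta_1$, so that $v_1(\beta_1(\lambda))$ is strictly increasing in $\lambda$ — this yields the crucial equivalence: for a state with $\gamma = 0$, $\alpha_1(\varepsilon, 0) \geq 0$ if and only if $|\varepsilon| \geq v_1(\beta_1(0)) = v_1(\beta_1)$, where $\beta_1$ is the root of \eqref{single_beta}.

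First I would make precise the channel/service bookkeeping. In the single-source, single-channel case, $\gamma_1(t) > 0$ exactly when a sample is in transit, i.e., $t \in (G_{1,i}, D_{1,i})$, during which \eqref{Whitt_infty} gives $\alpha_1(\varepsilon_1(t), \gamma_1(t)) = -\infty < 0$, so no new sample is taken — consistent with non-preemptive service. Thus the stopping rule \eqref{single_solution_whittle} can only fire at times $t$ with $\gamma_1(t) = 0$, i.e., $t \geq D_{1,i}$ after the $i$-th delivery. Using $S_{1,i} = G_{1,i}$ (the policy simplification established before the problem formulation) and $D_{1,i} = S_{1,i} + Y_{1,i}$, the first time after $S_{1,i}$ at which $\gamma_1 = 0$ and $\alpha_1(\varepsilon_1(t), 0) \geq 0$ is, by the equivalence above, precisely $\inf\{t \geq D_{1,i} : |\varepsilon_1(t)| \geq v_1(\beta_1)\}$. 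Hence $S_{1,i+1}$ from \eqref{single_solution_whittle} coincides with $S_{1,i+1}(\beta_1)$ from \eqref{single_solution}.

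Then I would close the argument: by induction on $i$ (with the common initial condition $S_{1,0} = 0$, $D_{1,0} = Y_{1,0}$), the two policies produce identical sampling and delivery sequences almost surely, so they induce the same estimation-error process $\varepsilon_1(t)$ and hence the same objective value in \eqref{single_problem}. Since Corollary \ref{single_theorem} asserts that $(S_{1,1}(\beta_1), S_{1,2}(\beta_1), \ldots)$ attains the infimum in \eqref{single_problem}, the Whittle-index policy \eqref{single_solution_whittle} is also optimal, which is the claim.

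The main obstacle I anticipate is not the algebra but the careful verification that the value of the activation cost relevant here is $\lambda = 0$ and that the $\beta_1$ appearing in the Whittle index construction (where it is the per-bandit optimal value $\bar m_{1,\text{opt}}$ as a function of $\lambda$) agrees with the $\beta_1$ in Corollary \ref{single_theorem} when $\lambda = 0$; this requires tracking how $\beta_1(\lambda)$ is defined through \eqref{thm1_eq22} versus how the root in \eqref{single_beta} is defined, and confirming they match at $\lambda = 0$. A secondary subtlety is handling boundary behavior of $v_1^{-1}$ and $\beta_1^{-1}$ (well-definedness of the inverses on $[0,\infty)$, already noted in the text) and the degenerate edge case where the threshold $v_1(\beta_1)$ is zero, so that the rule fires immediately upon delivery; these should be dispatched by appealing directly to the strict monotonicity and continuity of $v_1$ and $\beta_1$ rather than re-deriving them.
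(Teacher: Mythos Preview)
Your proposal is correct and follows essentially the same approach as the paper: establish that the Whittle-index condition $\alpha_1(\varepsilon,0)\ge 0$ is equivalent to the threshold condition $|\varepsilon|\ge v_1(\beta_1)$ (with $\beta_1=\beta_1(0)$), handle the $\gamma>0$ case via \eqref{Whitt_infty}, and then invoke Corollary~\ref{single_theorem}. The only cosmetic difference is that the paper verifies $\alpha_1(v_1(\beta_1),0)=0$ by direct substitution of \eqref{single_threshold} into the explicit Whittle-index formulas \eqref{Whittle_index}--\eqref{Whitt_index_unstable} (case by case in $\theta_1$) and then uses monotonicity of $Q$ (resp.\ $K$), whereas you more cleanly invoke the inverse-function identity $\alpha_1(\varepsilon,0)=\beta_1^{-1}(v_1^{-1}(|\varepsilon|))$ together with the monotonicity of $v_1(\cdot)$ and $\beta_1(\cdot)$ already established for Theorems~\ref{indexability}--\ref{Whittle index}.
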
 
{\it Proof sketch.} Because (i) Corollary \ref{single_theorem} provides an optimal solution to \eqref{single_problem} and (ii) \eqref{single_solution_whittle} is equivalent to the solution in Corollary \ref{single_theorem}, \eqref{single_solution_whittle} is also an optimal solution to \eqref{single_problem}. The details are provided in
\ifreport
Appendix \ref{proof_of_single_whittle}.
\else
our technical report \cite{Ornee2023}. 
\fi
$\square$
\begin{figure}
\vspace{-0.3cm}
\centering
\includegraphics[width=6cm]{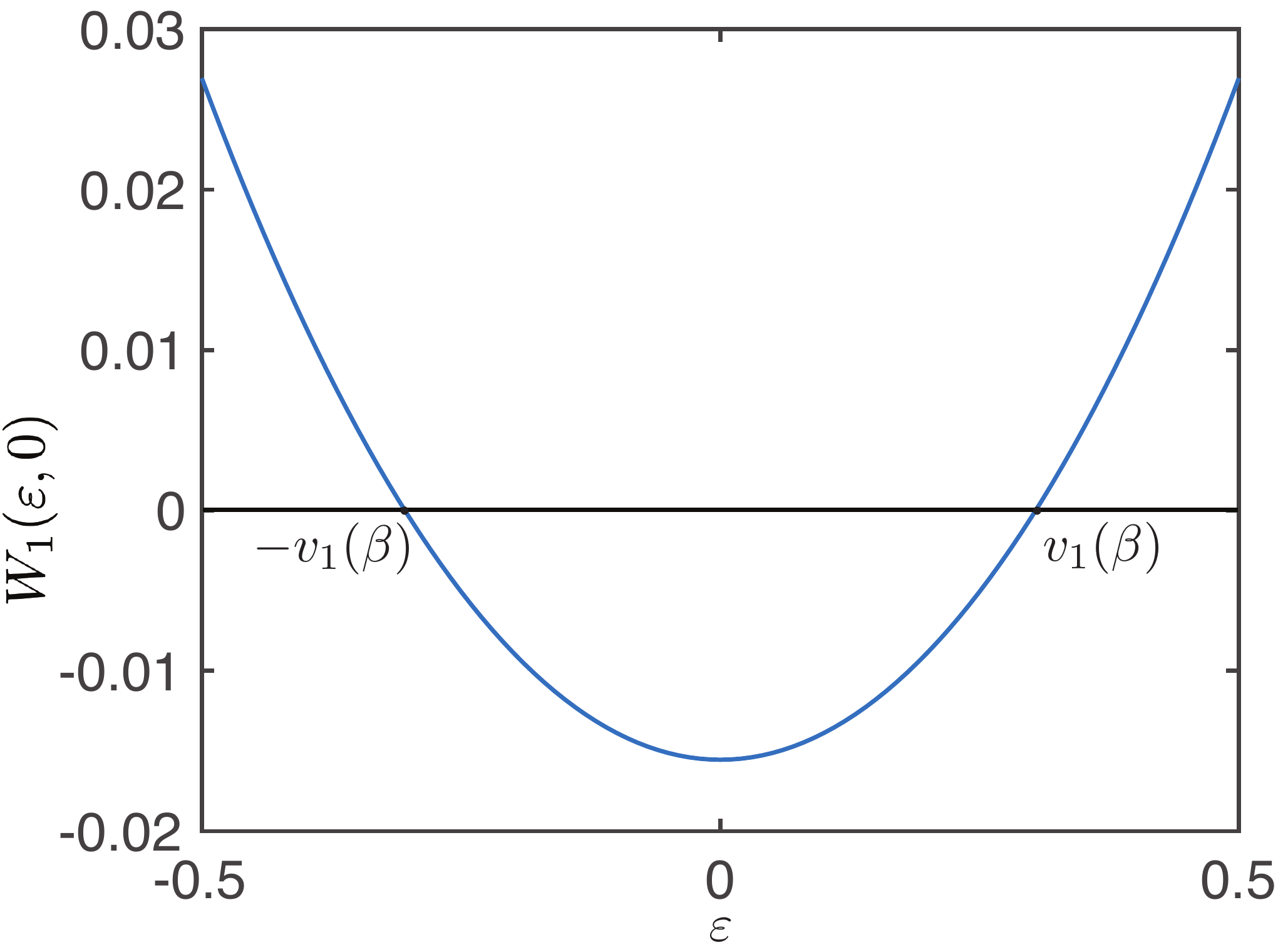}
\caption{\small Illustration of the Whittle index $\alpha_1(\varepsilon, \gamma)$ and the optimal threshold $v_1(\beta_1)$, where the parameters of the Gauss-Markov process are $\sigma_1 =1$ and $\theta_1 =0.1$ and the \emph{i.i.d.} transmission times follow an  exponential distribution with mean $\mathbb{E} [Y_{1,i}] = 2$.}
\label{fig_discussion}
\end{figure}

Corollary \ref{single_theorem} and Theorem \ref{single_theorem_whittle} reveal a unification of threshold-based sampling and scheduling policy developed in \cite{Ornee_TON} and the Whittle index policy developed in the present paper. 
In particular, if the Whittle index $\alpha_1(\varepsilon_1(t), \gamma_1 (t))=0$, then (i) the channel is idle at time $t$ and (ii) the instantaneous estimation error $|\varepsilon_1(t)|$ exactly crosses the optimal threshold $v_1(\beta_1)$ at time $t$. As illustrated in Figure \ref{fig_discussion}, $\varepsilon = \pm v_1 (\beta_1)$ are the roots of equation $\alpha_1 (\varepsilon, 0) = 0$.

The threshold-based sampling and scheduling results outlined in Corollary \ref{single_theorem} and \cite{Ornee_TON} are applicable specifically to the single-source, single-channel scenario. Nevertheless, our exploration in Sections \ref{decomposition}-\ref{Whittle} illustrates the methodology for utilizing these findings to establish indexability and evaluate the Whittle index in the multi-source, multi-channel scenario. 

\subsection{Signal-agnostic Scheduling}
A scheduling policy $\pi \in \Pi$ is called \emph{signal-agnostic} if the policy $\pi$ is independent of the observed process $\{X_{n,t}, t \geq 0\}_{n=1}^N$. Let $\Pi_{\text{agnostic}} \in \Pi$ denote the set of signal-agnostic, causal policies, defined by
\begin{align}
\Pi_{\text{agnostic}} = \{\pi \in \Pi: \pi ~\text{is independent of}~ \{X_{n,t}, t \geq 0\}_{n=1}^N\}. 
\end{align}

In a signal-agnostic policy, the mean-squared estimation error of the process $X_{n,t}$ at time $t$ is \cite{Wiener_TIT}, \cite{Ornee_TON}
\begin{align} \label{aoi_penalty}
\mathbb{E} [\varepsilon_n^2 (t)] =\!\! p_n (\Delta_n (t))\!\!=
\begin{cases}
& \begin{array}{l l}\!\!\!\!\!\!\!\!\! \frac{\sigma_n ^2}{2 \theta_n} (1- e^{-2 \theta_n \Delta_n (t)}),
& \text{ if }~ \theta_n \neq 0,\\
\!\!\!\!\!\!\!\!\! \sigma_n ^2 \Delta_n (t), & \text{ if }~ \theta_n = 0,
\end{array}
\end{cases}
\end{align}
where $\Delta_n(t)$ is the AoI and $p_n(\cdot)$ is an increasing function of the AoI defined in \eqref{aoi_penalty}. By using \eqref{aoi_penalty}, for any policy $\pi \in \Pi_{\text{agnostic}}$ 
\begin{align} \label{aoi_penalty_1}
\mathbb{E} \bigg[\int_{0}^{T} \varepsilon^2 _n (t) dt\bigg] = \mathbb{E} \bigg[\int_{0}^{T}  p_n (\Delta_n (t)) dt\bigg].
\end{align}
Hence, the signal-agnostic sampling and scheduling problem can be formulated as
\begin{align} \label{problem_aoi}
\inf_{\pi \in \Pi_{\text{agnostic}}} &\limsup_{T \to \infty}  \sum_{n=1}^{N} w_n \mathbb{E}_{\pi} \bigg[\frac{1}{T} \int_{0}^{T} p_n (\Delta_n (t)) dt\bigg] \\
\text{s.t.}\;\;\;\;&\sum_{n=1}^{N}  c_n (t) \leq L, c_n (t) \in \{0, 1\}, t \in [0, \infty). \label{aoi_constraint}
\end{align}

Problem \eqref{problem_aoi}-\eqref{aoi_constraint} is a continuous-time Restless 
Multi-armed Bandit (RMAB) with a continuous state space, where $\Delta_n(t )$ of source $n$ is modeled as the state of the restless bandit.
\ifreport
Following the procedure developed in Section \ref{decomposition}, we consider $L$ additional \emph{dummy bandits} where $c_0 (t) \in\{0,1,2,\ldots,L\}$ denotes the number of \emph{dummy bandits} that are activated at time $t$ and reformulate \eqref{problem_aoi}-\eqref{aoi_constraint} as
\begin{align} \label{problem_aoi_dummy}
& \inf_{\pi \in \Pi_{\text{agnostic}}} \limsup_{T \to \infty}  \sum_{n=1}^{N} w_n \mathbb{E}_{\pi} \bigg[\frac{1}{T} \int_{0}^{T} p_n (\Delta_n (t)) dt\bigg] \\
& ~~~~\text{s.t.} ~~~ \sum_{n=0}^{N} c_n (t) = L, c_0 (t) \!\!\in \!\! \{0,1,\ldots, L\}, t \in [0, \infty),\nonumber\\ 
& ~~~~~~~~~~~c_n (t)\!\! \in \!\! \{0, 1\}, n=1,2,\ldots,N, t \!\in\! [0, \infty), \label{constraint_aoi_dummy}
\end{align}
which is an RMAB with an equality constraint.
%
By relaxing constraint \eqref{constraint_aoi_dummy}, the RMAB \eqref{problem_aoi_dummy}-\eqref{constraint_aoi_dummy} is reformulated as
\begin{align} \label{problem_2_aoi}
& \inf_{\pi \in \Pi_{\text{agnostic}}} \limsup_{T \to \infty}  \sum_{n=0}^{N} w_n \mathbb{E}_{\pi} \bigg[\frac{1}{T} \int_{0}^{T} p_n (\Delta_n(t)) dt\bigg] \\
& ~~~~\text{s.t.} ~~~\limsup_{T \to \infty} \sum_{n=0}^{N} \mathbb{E}_{\pi} \bigg[\frac{1}{T} \int_{0}^{T} c_n (t) dt\bigg] = L, \nonumber\\
& ~~~~~~~~~~~c_0 (t) \!\!\in \!\! \{0,1,\ldots, L\}, t \in [0, \infty) \nonumber\\
& ~~~~~~~~~~~c_n (t)\!\! \in \!\! \{0, 1\}, n=1,2,\ldots,N, t \!\in\! [0, \infty).
 \label{relaxed_ constraint_dummy_2_aoi}
\end{align} 
%
Next, we take the Lagrangian dual decomposition of the relaxed problem \eqref{problem_2_aoi}-\eqref{relaxed_ constraint_dummy_2_aoi}, which produces the following problem with a dual variable $\lambda \in \mathbb R$:
\begin{align} \label{relaxed_problem_age}
\inf_{\pi \in \Pi_{\text{agnostic}}} \limsup_{T \to \infty} \mathbb{E}_{\pi} \bigg[\frac{1}{T} \int_{0}^{T} & \sum_{n=1}^{N} w_n p_n (\Delta_n (t)) \nonumber\\
& + \lambda \bigg(\sum_{n=0}^{N} c_n (t) - L\bigg) dt \bigg].
\end{align}
Then, problem \eqref{relaxed_problem_age} can be decomposed into $(N+1)$ separated sub-problems. The sub-problem associated with source $n$ is
\else 
By following the standard relaxation and Lagrangian dual decomposition procedure as explained in Section \ref{decomposition}, we obtain the following sub-problem associated with bandit $n$:
\fi 
\begin{align} \label{per_arm_problem_age}
& \bar m_{n, \text{age-opt}}= \nonumber\\
& \inf_{\pi_{n} \in \Pi_{n,\text{agnostic}}} \!\!\!\limsup_{T \to \infty} \mathbb{E}_{\pi_n} \bigg[\frac{1}{T} \int_{0}^{T} \!\!\!\!w_n p_n (\Delta_n (t)) + \lambda c_n (t) dt\bigg], \nonumber\\
& ~~~~~n = 1,2, \ldots, N.
\end{align}
where $\bar m_{n, \text{age-opt}}$ is the optimum value of \eqref{per_arm_problem_age}, $\pi_{n}= (S_{n,1}, S_{n,2}, \ldots)$ denotes a sub-scheduling policy for source $n$, and $\Pi_{n,\text{agnostic}}$ is the set of all causal sub-scheduling policies of source $n$.

\ifreport
An optimal solution to problem \eqref{per_arm_problem_age} is provided in the following proposition.

\begin{proposition} \label{opt_sampler_theorem_age}
For signal-agnostic scheduling, if the $Y_{n,i}$'s are i.i.d. with $0<\mathbb{E}[Y_{n,i}] < \infty$, then $(S_{n,1} (\beta_{n, \text{age}}),S_{n,2} (\beta_{n, \text{age}}),\ldots)$ with a parameter $\beta_{n, \text{age}}$ is an optimal solution to \eqref{per_arm_problem_age}, 
where 
\begin{align}\label{eq_opt_solution_age}
& S_{n, i+1} (\beta_{n, \text{age}})= \nonumber\\
& \inf \left\{ t \!\geq\! D_{n,i} (\beta_{n, \text{age}})\!\!:\! \mathbb{E} [p_n \!(\!\Delta_n (t+ Y_{n, i+1}))] \geq \beta_{n, \text{age}}\right\},
\end{align}
$D_{n,i} (\beta_{n, \text{age}})= S_{n,i} (\beta_{n, \text{age}})+ Y_{n,i}$, and $\beta_{n, \text{age}}$ is the unique root of
\begin{align}\label{thm1_eq22_age}
& \mathbb{E} \left[\int_{D_{n,i} (\beta_{n, \text{age}})}^{D_{n, i+1}(\beta_{n, \text{age}})} w_n p_n (\Delta_n (t)) dt\right] + \lambda \mathbb{E} [Y_{n,i+1}]  \nonumber\\
& - {\beta_{n, \text{age}}} {\mathbb{E}[D_{n, i+1}(\beta_{n, \text{age}})-D_{n,i} (\beta_{n, \text{age}})]}=0. 
\end{align}
The optimal objective value to \eqref{per_arm_problem} is given by
\emph{\begin{align}\label{thm1_eq23_age}
& {\bar m}_{n, \text{age-opt}}  \nonumber\\
=& \frac{\mathbb{E}\left[\int_{D_{n,i} (\beta_{n, \text{age}})}^{D_{n, i+1}(\beta_{n, \text{age}})}\! w_n p_n (\Delta_n (t)) dt\right] + \lambda \mathbb{E} [Y_{n,i+1}]}{\mathbb{E}[D_{n, i+1}(\beta_{n, \text{age}})\!-\!D_{n,i} (\beta_{n, \text{age}})]}. {\noindent}
\end{align}
{\!\!Furthermore}, $\beta_{n, \text{age}}$ is exactly the optimal objective value of \eqref{per_arm_problem}, i.e., $\beta_{n, \text{age}} = {\bar m}_{n, \text{age-opt}}$.}
\end{proposition}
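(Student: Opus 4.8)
The plan is to follow the same template used for the signal-aware Proposition~\ref{opt_sampler_theorem}, exploiting the simplification that, for a signal-agnostic policy, the AoI evolves deterministically between two consecutive deliveries. First I would recast the per-bandit problem \eqref{per_arm_problem_age} as a renewal-reward (semi-Markov decision) problem whose regeneration epochs are the delivery times $D_{n,i}$: right after a delivery the AoI equals $\Delta_n(D_{n,i}) = Y_{n,i}$, which is \emph{i.i.d.}\ across $i$, and within a cycle $[D_{n,i}, D_{n,i+1})$ the only remaining randomness is the fresh transmission time $Y_{n,i+1}$. Standard average-cost arguments then reduce the search to stationary policies and express the long-run average cost as the ratio $\mathbb{E}[\text{cycle cost}]/\mathbb{E}[\text{cycle length}]$, where the cycle cost is $\mathbb{E}[\int_{D_{n,i}}^{D_{n,i+1}} w_n p_n(\Delta_n(t))\,dt + \lambda Y_{n,i+1}]$ and the cycle length is $\mathbb{E}[D_{n,i+1}-D_{n,i}]$, which are exactly the quantities appearing in \eqref{thm1_eq22_age}--\eqref{thm1_eq23_age}.

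Second, I would apply the Dinkelbach / parametric method for fractional programming: $\bar m_{n,\text{age-opt}}$ is the unique $\beta$ for which $\inf_{\pi_n} \mathbb{E}[\text{cycle cost} - \beta\,(\text{cycle length})] = 0$, and any minimizer at that $\beta$ is optimal. Uniqueness of the root follows because this infimum is concave and strictly decreasing in $\beta$ (each cycle has length at least $Y_{n,i+1}$, so $\mathbb{E}[\text{cycle length}] \geq \mathbb{E}[Y_{n,i+1}] > 0$), while its continuity together with its behaviour as $\beta\to\pm\infty$ gives existence. This step delivers equation \eqref{thm1_eq22_age}, the value formula \eqref{thm1_eq23_age}, and the identity $\beta_{n,\text{age}} = \bar m_{n,\text{age-opt}}$.

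Third, for a fixed $\beta$ I would solve the inner problem. Since between deliveries the AoI is the deterministic ramp $\Delta_n(t) = (t - D_{n,i}) + Y_{n,i}$, the inner problem decouples across cycles and, within a cycle whose post-delivery AoI is $a$, reduces to choosing a waiting time $w \geq 0$ that minimizes $w_n\,\mathbb{E}_{Y'}[\int_0^{w+Y'} p_n(a+s)\,ds] + \lambda\,\mathbb{E}[Y'] - \beta\,(w + \mathbb{E}[Y'])$, where $Y' \stackrel{d}{=} Y_{n,i+1}$. Because $p_n(\cdot)$ in \eqref{aoi_penalty} is continuous and increasing, the derivative in $w$, namely $w_n\,\mathbb{E}_{Y'}[p_n(a + w + Y')] - \beta$, is nondecreasing, so the objective is convex and the optimizer is the first instant at which the expected pre-delivery penalty $\mathbb{E}[p_n(\Delta_n(t + Y_{n,i+1}))]$ reaches the corresponding threshold level; translated back to real time this is precisely the stopping rule \eqref{eq_opt_solution_age}. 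Substituting this threshold policy into the cycle expectations and invoking the Dinkelbach correspondence from the second step yields all the claimed formulas.

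The main obstacle I anticipate is integrability, caused by the unbounded penalty in the unstable case: when $\theta_n < 0$, $p_n(\delta)$ grows exponentially in $\delta$, so one must restrict to policies with finite time-average cost, verify that the candidate threshold policy lies in this class, and check the moment conditions on $Y_{n,i}$ (such as finiteness of $\mathbb{E}[e^{-2\theta_n Y_{n,i}}]$) under which the cycle-cost integrals converge. A related subtlety is that, conversely, when $\theta_n > 0$ the penalty $p_n$ is bounded by $\sigma_n^2/(2\theta_n)$, so the threshold equation is solvable only for $\beta$ in an admissible range (the same range that keeps the denominators in \eqref{threhsold} positive), and one must show that the Dinkelbach root $\beta_{n,\text{age}}$ indeed falls in this range. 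Both issues are handled exactly as in the proof of Proposition~\ref{opt_sampler_theorem}, with the free-boundary machinery there replaced here by elementary monotonicity and convexity, since the AoI dynamics between deliveries are deterministic.
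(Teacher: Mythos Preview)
Your proposal is correct and follows exactly the template the paper relies on: the paper's own proof is a one-line citation, stating that the result ``follows from \eqref{aoi_penalty}, \eqref{aoi_penalty_1}, and Corollary~3 of \cite{SunNonlinear2019},'' and the argument you outline (renewal/regenerative reformulation at the delivery epochs, Dinkelbach's fractional-programming reduction, then the convexity-in-waiting-time computation that yields the threshold rule \eqref{eq_opt_solution_age}) is precisely the content of that cited corollary specialized to the penalty $w_n p_n(\cdot)$ plus the activation term $\lambda c_n$. Your treatment of the integrability issues for $\theta_n<0$ and of the admissible range of $\beta$ for $\theta_n>0$ is also the right diagnosis, and indeed parallels what Appendix~\ref{threshold_proof} does in the signal-aware case.
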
}
\else
{}
\fi

Proposition \ref{opt_sampler_theorem_age} follows from \eqref{aoi_penalty}, \eqref{aoi_penalty_1}, and Corollary 3 of \cite{SunNonlinear2019}.

Define $\Psi_{n, \text{age}} (\lambda)$ as the set of states $(\delta, \gamma) \in [0,\infty) \times [0,\infty)$ such that if $\Delta_n (t) = \delta$ and $\gamma_n (t) = \gamma$, the optimal solution for \eqref{per_arm_problem_age} is to take a passive action at time $t$.

\begin{definition} \textbf{(Indexability).} \label{def_2}
\cite{verloop2016asymptotically}
Bandit $n$ is said to be indexable if, as the activation cost $\lambda$ increases from $-\infty$ to $\infty$, the set \emph{$\Psi_{n, \text{age}} (\lambda)$} increases monotonically, i.e., $\lambda_1 \leq \lambda_2$ implies \emph{$\Psi_{n, \text{age}} (\lambda_1) \subseteq \Psi_{n, \text{age}} (\lambda_2)$}. The RMAB \eqref{problem_aoi}-\eqref{aoi_constraint} is said to be indexable if all $(N + 1)$ bandits are indexable.
\end{definition}

\ifreport
By using Proposition \ref{opt_sampler_theorem_age}
\else
An optimal solution to problem \eqref{per_arm_problem_age} is provided in of our technical report \cite[Proposition 2]{Ornee2023}, where we show that a parameter $\beta_{n, \text{age}}$ is equal to the optimum value of \eqref{per_arm_problem_age}, i.e., $\beta_{n, \text{age}} = {{\bar m}}_{n, \text{age-opt}}$ and $\beta_{n, \text{age}}$ is a function of $\lambda$. By using the solution of \eqref{per_arm_problem_age}, 
\fi
the set $\Psi_{n, \text{age}} (\lambda)$ in Definition \ref{def_2} can be simplified as
\begin{align} \label{def_3}
& \Psi_{n, \text{age}} (\lambda) = \nonumber\\
& \{\!(\!\delta, \gamma\!)\!\! \in \!\![0,\infty)\!\! \times \!\![0,\infty) \!\!:\!\! \gamma \!>\! 0 {\thinspace} \text{or} {\thinspace} \mathbb{E} [p_n (\delta + Y_{n, i+1})] < \beta_{n, \text{age}} (\lambda)\}.
\end{align}

Following the techniques developed in Section \ref{results_sig_aware}, we can obtain

\begin{theorem} \label{aoi_indexability}
If $p_n(\delta)$ is a strictly increasing function of $\delta$, the RMAB problem \eqref{problem_aoi_dummy}-\eqref{constraint_aoi_dummy} is indexable.
\end{theorem}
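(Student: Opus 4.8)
The plan is to follow the three-step template used for Theorem~\ref{indexability}, with the state being the AoI $\delta$ instead of the estimation error $\varepsilon$. By Proposition~\ref{opt_sampler_theorem_age}, for every activation cost $\lambda\in\mathbb R$ the per-bandit sub-problem \eqref{per_arm_problem_age} admits the threshold-type optimal policy \eqref{eq_opt_solution_age}, and the unique root $\beta_{n,\text{age}}$ of \eqref{thm1_eq22_age} equals the optimal long-run average cost $\bar m_{n,\text{age-opt}}$ of \eqref{per_arm_problem_age}; since the data of \eqref{thm1_eq22_age} depends on $\lambda$, write $\beta_{n,\text{age}}(\lambda)$ for this common value. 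Combining the threshold rule \eqref{eq_opt_solution_age} with the definition of the passive set yields the expression \eqref{def_3}: $(\delta,\gamma)$ is passive-optimal iff $\gamma>0$ (a sample is in service, so the active action is unavailable) or $\mathbb{E}[p_n(\delta+Y_{n,i+1})]<\beta_{n,\text{age}}(\lambda)$. The strict-monotonicity hypothesis on $p_n$ enters precisely here, guaranteeing the threshold structure \eqref{eq_opt_solution_age} and the uniqueness of $\beta_{n,\text{age}}(\lambda)$.

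Given \eqref{def_3}, indexability of regular bandit $n$ reduces to showing that $\lambda\mapsto\beta_{n,\text{age}}(\lambda)$ is non-decreasing: if $\lambda_1\le\lambda_2$ and $(\delta,\gamma)\in\Psi_{n,\text{age}}(\lambda_1)$, then either $\gamma>0$, or $\mathbb{E}[p_n(\delta+Y_{n,i+1})]<\beta_{n,\text{age}}(\lambda_1)\le\beta_{n,\text{age}}(\lambda_2)$; in both cases $(\delta,\gamma)\in\Psi_{n,\text{age}}(\lambda_2)$, so $\Psi_{n,\text{age}}(\lambda_1)\subseteq\Psi_{n,\text{age}}(\lambda_2)$. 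To prove the monotonicity of $\beta_{n,\text{age}}(\cdot)$ I would use the variational form $\beta_{n,\text{age}}(\lambda)=\inf_{\pi_n\in\Pi_{n,\text{agnostic}}}\limsup_{T\to\infty}\mathbb{E}_{\pi_n}[\frac1T\int_0^T w_n p_n(\Delta_n(t))+\lambda c_n(t)\,dt]$: for each fixed sub-policy $\pi_n$ the bracketed quantity is affine in $\lambda$ with slope $\limsup_{T\to\infty}\mathbb{E}_{\pi_n}[\frac1T\int_0^T c_n(t)\,dt]\ge 0$, hence non-decreasing in $\lambda$, and the pointwise infimum of a family of non-decreasing functions is non-decreasing. (One gets strict monotonicity on the range of $\lambda$ for which the optimal policy activates at all, since there the activation frequency is strictly positive; this sharper statement is not needed for Definition~\ref{def_2}, but is convenient for the Whittle index.) An equivalent route, directly paralleling the proof of Theorem~\ref{indexability}, is a Dinkelbach-type argument: the renewal-reward ratio in \eqref{thm1_eq23_age} increases in the additive term $\lambda\,\mathbb{E}[Y_{n,i+1}]$, so its fixed point $\beta_{n,\text{age}}(\lambda)$ does too.

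The dummy bandits are treated exactly as in Theorem~\ref{indexability}: by \eqref{dummy_problem} activating a dummy bandit is optimal if and only if $\lambda<0$, so its passive set is empty for $\lambda<0$ and the entire state space for $\lambda\ge 0$, which is monotone in $\lambda$; hence each of the $L$ dummy bandits is indexable. Since all $N+1$ bandits are indexable, the RMAB \eqref{problem_aoi_dummy}-\eqref{constraint_aoi_dummy} is indexable. I expect the main obstacle to be the same as in the signal-aware case: carefully justifying that the fixed-point equation \eqref{thm1_eq22_age} genuinely characterizes the optimal average cost $\bar m_{n,\text{age-opt}}$ (this is where Proposition~\ref{opt_sampler_theorem_age} and Corollary~3 of \cite{SunNonlinear2019} do the heavy lifting), and that the resulting $\beta_{n,\text{age}}(\lambda)$ is well-defined and monotone even though $p_n$ is only assumed strictly increasing—in particular, not necessarily bounded or differentiable—so that the stopping-time and renewal-reward manipulations remain valid.
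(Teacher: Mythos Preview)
Your proposal is correct and follows essentially the same approach as the paper: use Proposition~\ref{opt_sampler_theorem_age} to obtain the passive-set characterization \eqref{def_3}, show that $\beta_{n,\text{age}}(\lambda)$ is monotone in $\lambda$, and handle the dummy bandits exactly as in Theorem~\ref{indexability}. The paper establishes strict monotonicity via the Dinkelbach fixed-point argument of Lemma~\ref{monotonicity_1} (carried over to the age setting), which is precisely the alternative route you mention; your primary envelope argument---$\beta_{n,\text{age}}(\lambda)$ as a pointwise infimum of affine functions with non-negative slope---is a clean equivalent that already delivers the non-decreasingness required by Definition~\ref{def_2}.
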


\ifreport
\begin{proof}
See Appendix \ref{proof_of_aoi_indexability}.
\end{proof}
\else
{}
\fi

{\begin{theorem} \label{theorem3}
If $p_n(\delta)$ is a strictly increasing function of $\delta$ and the $Y_{n,i}$'s are i.i.d. with $0 < \mathbb{E} [Y_{n,i}] < \infty$, then the following assertions are true for the Whittle index of problem \eqref{per_arm_problem_age} at state $(\delta, \gamma)$:

(a) If $\gamma = 0$, then
\emph{\begin{align} \label{age-based_index}
& \alpha_{n, \text{age}} (\delta, 0) = \nonumber\\
& \frac{w_n}{\mathbb{E} [Y_{n,i}]} \bigg\{\mathbb{E}[D_{n,i+1} (\delta) - D_{n,i} (\delta)] \mathbb{E} [p_n (\delta+Y_{n, i+1})] \nonumber\\
& ~~~~~~~~~~~- \mathbb{E} \bigg[\int_{D_{n,i} (\delta)}^{D_{n, i+1} (\delta)} p_n (s) ds\bigg]\bigg\},
\end{align}}
where 
$D_{n,i}(\delta) = S_{n,i}(\delta) + Y_{n,i}$ and 
\begin{align}
S_{n,i+1}(\delta) = D_{n,i}(\delta) + \max\{\delta- Y_{n,i}, 0\}.
\end{align}

(b) If $\gamma > 0$, then
\emph{\begin{align} \label{Whitt_infty_age}
\alpha_{n, \text{age}} (\delta, \gamma) = -\infty.
\end{align}}
\end{theorem}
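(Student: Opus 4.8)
\emph{Proof proposal.} The plan is to follow the route used for Theorem~\ref{Whittle index}, with the AoI $\Delta_n(t)$ in place of the estimation error. First I would collect the facts already available: by Proposition~\ref{opt_sampler_theorem_age}, for each activation cost $\lambda$ the optimal policy for \eqref{per_arm_problem_age} is the threshold policy \eqref{eq_opt_solution_age} whose threshold level $\beta_{n,\text{age}}$ is the unique root of \eqref{thm1_eq22_age}; writing this root as an implicit function $\beta_{n,\text{age}}(\lambda)$, the argument behind Theorem~\ref{aoi_indexability} shows $\beta_{n,\text{age}}(\lambda)$ is continuous and strictly increasing in $\lambda$, with $\beta_{n,\text{age}}(\lambda)\to\pm\infty$ as $\lambda\to\pm\infty$. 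Combined with \eqref{def_3}, a state $(\delta,\gamma)$ is passive under cost $\lambda$ iff $\gamma>0$ or $\mathbb{E}[p_n(\delta+Y_{n,i+1})]<\beta_{n,\text{age}}(\lambda)$; here strict monotonicity of $p_n$ is what lets the abstract threshold be re-expressed as one on the AoI itself, since $\delta\mapsto\mathbb{E}[p_n(\delta+Y_{n,i+1})]$ is then strictly increasing. Part~(b) is then immediate: if $\gamma>0$, then $(\delta,\gamma)\in\Psi_{n,\text{age}}(\lambda)$ for every $\lambda\in\mathbb R$, so the infimum in \eqref{Whittle_index_general_definition} is $-\infty$, which is \eqref{Whitt_infty_age}.

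For part~(a), fix $\gamma=0$. By the passivity characterization together with the strict monotonicity and continuity of $\beta_{n,\text{age}}(\cdot)$, the infimum $\alpha_{n,\text{age}}(\delta,0)=\inf\{\lambda:\mathbb{E}[p_n(\delta+Y_{n,i+1})]<\beta_{n,\text{age}}(\lambda)\}$ equals the unique $\lambda$ solving $\beta_{n,\text{age}}(\lambda)=\mathbb{E}[p_n(\delta+Y_{n,i+1})]$ (up to the weight $w_n$, to match the scaling in \eqref{thm1_eq22_age}). To convert this into the closed form \eqref{age-based_index} I would then (i) identify the renewal cycle induced by cost $\lambda=\alpha_{n,\text{age}}(\delta,0)$: at a delivery the AoI equals $Y_{n,i}$, and since under this cost the AoI threshold is exactly $\delta$, the next sample is taken after an idle period of length $\max\{\delta-Y_{n,i},0\}$, which gives $D_{n,i}(\delta)=S_{n,i}(\delta)+Y_{n,i}$ and $S_{n,i+1}(\delta)=D_{n,i}(\delta)+\max\{\delta-Y_{n,i},0\}$; and (ii) substitute $\beta_{n,\text{age}}(\lambda)=\mathbb{E}[p_n(\delta+Y_{n,i+1})]$ into \eqref{thm1_eq22_age} and solve the resulting linear equation for $\lambda$, reading off \eqref{age-based_index}. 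The cycle-cost term reduces to $w_n\,\mathbb{E}\big[\int_{D_{n,i}(\delta)}^{D_{n,i+1}(\delta)} p_n(s)\,ds\big]$ because, normalizing $S_{n,i}(\delta)=0$, the AoI satisfies $\Delta_n(t)=t$ throughout the cycle, and $\mathbb{E}[Y_{n,i+1}]=\mathbb{E}[Y_{n,i}]$ by the i.i.d.\ assumption.

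The main obstacle has essentially been dispatched already when proving indexability: one needs that $\beta_{n,\text{age}}(\lambda)$ depends continuously on $\lambda$ and that its range is all of $\mathbb R$, so that the defining equation $\beta_{n,\text{age}}(\lambda)=\mathbb{E}[p_n(\delta+Y_{n,i+1})]$ has a unique solution for every admissible $\delta$ and the infimum in the Whittle-index definition is attained rather than merely approached. This rests on the renewal-reward analysis of \eqref{per_arm_problem_age} behind Proposition~\ref{opt_sampler_theorem_age} and on Theorem~\ref{aoi_indexability}. Once that monotone, continuous dependence $\beta_{n,\text{age}}(\lambda)$ is in hand, the remaining steps---the deterministic AoI dynamics between deliveries and the one-line linear solve in \eqref{thm1_eq22_age}---are routine, which is precisely why the AoI case avoids the free-boundary and Dynkin's-formula machinery needed for Theorem~\ref{Whittle index}.
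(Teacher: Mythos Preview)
Your proposal is correct and follows essentially the same route as the paper's own proof: use the passive-set characterization \eqref{def_3} coming from Proposition~\ref{opt_sampler_theorem_age}, read off $\alpha_{n,\text{age}}(\delta,\gamma)=-\infty$ when $\gamma>0$, and for $\gamma=0$ identify the Whittle index as the unique $\lambda$ with $\beta_{n,\text{age}}(\lambda)=w_n\,\mathbb{E}[p_n(\delta+Y_{n,i+1})]$, then substitute into \eqref{thm1_eq22_age}/\eqref{thm1_eq23_age} and solve linearly for $\lambda$. You are, if anything, slightly more explicit than the paper in justifying the waiting-time formula $S_{n,i+1}(\delta)=D_{n,i}(\delta)+\max\{\delta-Y_{n,i},0\}$ and in flagging that the monotone, surjective-onto-the-relevant-range dependence of $\beta_{n,\text{age}}$ on $\lambda$ is what makes the defining equation uniquely solvable.
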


\ifreport
\begin{proof}
See Appendix \ref{proof of age-based index}.
\end{proof}
\else
{}
\fi

The expectations in \eqref{age-based_index} can be easily evaluated using the following lemma: 
\begin{lemma} \label{aoi_lemma}
In Theorem \ref{theorem3}, it holds that 
\begin{align} \label{age_exp1}
\mathbb{E}[D_{n,i+1} (\delta) - D_{n,i} (\delta)] = \mathbb{E} [\max\{\delta, Y_{n,i}\}],
\end{align}
\begin{align} \label{age_exp2}
\mathbb{E} \bigg[\int_{D_{n,i} (\delta)}^{D_{n, i+1} (\delta)} p_n (s) ds\bigg] = &\mathbb{E} [R_{n,3} (\max\{\delta, Y_{n,i}\} + Y_{n, i+1})] \nonumber\\
& - \mathbb{E} [R_{n,3} (Y_{n,i})],
\end{align}
where 
\begin{align}
R_{n,3} (\delta) = \int_0^{\delta} p_n (s) ds.
\end{align}
\end{lemma}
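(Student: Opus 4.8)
The plan is to evaluate the two expectations by first pinning down the sample-path behavior of the AoI process $\Delta_n(t)$ on the renewal interval $[D_{n,i}(\delta), D_{n,i+1}(\delta))$ under the threshold-type optimal policy of Theorem \ref{theorem3}. From the formula $S_{n,i+1}(\delta) = D_{n,i}(\delta) + \max\{\delta - Y_{n,i}, 0\}$ together with $D_{n,i}(\delta) = S_{n,i}(\delta) + Y_{n,i}$, I first compute the inter-delivery time: $D_{n,i+1}(\delta) - D_{n,i}(\delta) = (S_{n,i+1}(\delta) - D_{n,i}(\delta)) + Y_{n,i+1} = \max\{\delta - Y_{n,i}, 0\} + Y_{n,i+1}$. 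The key observation is that at the delivery epoch $D_{n,i}(\delta)$ the AoI equals $Y_{n,i}$ (the transmission time of the just-delivered sample), the AoI then grows linearly at unit rate, and at $S_{n,i+1}(\delta)$ a fresh sample is taken, so upon its delivery at $D_{n,i+1}(\delta)$ the AoI drops to $Y_{n,i+1}$. Hence on $[D_{n,i}(\delta), S_{n,i+1}(\delta))$ we have $\Delta_n(t) = t - D_{n,i}(\delta) + Y_{n,i}$, ranging over $[Y_{n,i}, \max\{\delta, Y_{n,i}\})$, and on $[S_{n,i+1}(\delta), D_{n,i+1}(\delta))$ we have $\Delta_n(t) = t - S_{n,i+1}(\delta)$, ranging over $[0, Y_{n,i+1})$.

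For \eqref{age_exp1}, $\mathbb{E}[\max\{\delta - Y_{n,i},0\} + Y_{n,i+1}] = \mathbb{E}[\max\{\delta, Y_{n,i}\}] - \mathbb{E}[Y_{n,i}] + \mathbb{E}[Y_{n,i+1}]$, and since the $Y_{n,j}$'s are i.i.d. the last two terms cancel, giving $\mathbb{E}[\max\{\delta, Y_{n,i}\}]$. For \eqref{age_exp2}, I would change variables in $\int_{D_{n,i}(\delta)}^{D_{n,i+1}(\delta)} p_n(\Delta_n(s))\,ds$ using the two linear pieces above: conditioned on $Y_{n,i}$ and $Y_{n,i+1}$, the first piece contributes $\int_{Y_{n,i}}^{\max\{\delta, Y_{n,i}\}} p_n(u)\,du$ and the second contributes $\int_0^{Y_{n,i+1}} p_n(u)\,du$; summing gives $R_{n,3}(\max\{\delta,Y_{n,i}\}) - R_{n,3}(Y_{n,i}) + R_{n,3}(Y_{n,i+1})$ with $R_{n,3}(\delta) = \int_0^\delta p_n(s)\,ds$. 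Taking expectations and again using that $Y_{n,i}, Y_{n,i+1}$ are identically distributed turns $\mathbb{E}[R_{n,3}(Y_{n,i+1})]$ into $\mathbb{E}[R_{n,3}(Y_{n,i})]$; to match the stated right-hand side I would additionally note that $R_{n,3}(\max\{\delta, Y_{n,i}\}) + R_{n,3}(Y_{n,i+1})$ can be rewritten, using the identity $\int_a^{a+c} p_n = R_{n,3}(a+c) - R_{n,3}(a)$ applied with $a = \max\{\delta, Y_{n,i}\}$ and $c = Y_{n,i+1}$, as $R_{n,3}(\max\{\delta, Y_{n,i}\} + Y_{n,i+1}) - (R_{n,3}(\max\{\delta,Y_{n,i}\} + Y_{n,i+1}) - R_{n,3}(\max\{\delta,Y_{n,i}\}) - R_{n,3}(Y_{n,i+1}))$; a short bookkeeping check with $p_n \geq 0$ confirms the rearrangement reduces exactly to $\mathbb{E}[R_{n,3}(\max\{\delta, Y_{n,i}\} + Y_{n,i+1})] - \mathbb{E}[R_{n,3}(Y_{n,i})]$.

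The only real subtlety — and the step I expect to require the most care — is the boundary case $\delta \le Y_{n,i}$, where $\max\{\delta - Y_{n,i},0\} = 0$, i.e.\ the fresh sample is generated immediately at the delivery epoch $D_{n,i}(\delta)$ (zero-wait), so the first linear piece of the AoI trajectory is empty; one must verify that the formulas $\max\{\delta, Y_{n,i}\} = Y_{n,i}$ and $\int_{Y_{n,i}}^{Y_{n,i}} p_n = 0$ make both \eqref{age_exp1} and \eqref{age_exp2} degenerate consistently. A second minor point is justifying the interchange of expectation and integral (Tonelli, valid since $p_n \ge 0$) and invoking the independence of $Y_{n,i+1}$ from $Y_{n,i}$ and $\delta$ when splitting the expectation of the product term in \eqref{age_exp2}. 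Everything else is elementary calculus on piecewise-linear AoI sample paths plus the i.i.d.\ property of the transmission times, so I would organize the write-up as: (1) derive the sample-path description of $\Delta_n(\cdot)$ on one renewal cycle; (2) prove \eqref{age_exp1}; (3) prove \eqref{age_exp2} via the change of variables and the cancellation; (4) dispatch the zero-wait boundary case.
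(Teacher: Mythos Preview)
Your proof of \eqref{age_exp1} is fine and matches the paper's approach. The argument for \eqref{age_exp2}, however, rests on an incorrect description of the AoI sample path. You write that on $[S_{n,i+1}(\delta), D_{n,i+1}(\delta))$ the AoI equals $t - S_{n,i+1}(\delta)$, but this is wrong: on that interval sample $i+1$ has been \emph{generated} but not yet \emph{delivered}, so the freshest received sample is still sample $i$, and the AoI remains $\Delta_n(t) = t - S_{n,i}$ throughout the entire interval $[D_{n,i}(\delta), D_{n,i+1}(\delta))$. You even state correctly that the AoI drops only ``upon its delivery at $D_{n,i+1}(\delta)$'', which directly contradicts your subsequent two-piece decomposition.

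With the correct single linear piece $\Delta_n(t) = t - S_{n,i}$ for $t \in [D_{n,i}, D_{n,i+1})$, the change of variables $u = t - S_{n,i}$ gives
\[
\int_{D_{n,i}(\delta)}^{D_{n,i+1}(\delta)} p_n(\Delta_n(s))\,ds
= \int_{Y_{n,i}}^{\max\{\delta, Y_{n,i}\} + Y_{n,i+1}} p_n(u)\,du
= R_{n,3}\big(\max\{\delta, Y_{n,i}\} + Y_{n,i+1}\big) - R_{n,3}(Y_{n,i}),
\]
and taking expectations yields \eqref{age_exp2} immediately; this is exactly what the paper does, split into the two cases $Y_{n,i}\ge \delta$ and $Y_{n,i}<\delta$. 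Your attempted ``bookkeeping'' rearrangement could not have worked: the identity you invoke merely rewrites $A+B$ as $C - (C-A-B)$, and the bracketed term $R_{n,3}(a+c) - R_{n,3}(a) - R_{n,3}(c)$ vanishes only when $R_{n,3}$ is additive, i.e.\ when $p_n$ is a constant. For any non-constant increasing $p_n$ (such as the Gauss--Markov penalties in \eqref{aoi_penalty}) your two-piece expression and the lemma's right-hand side differ already at the sample-path level.
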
 

\ifreport
\begin{proof}
See Appendix \ref{proof_aoi_lemma}.
\end{proof}
\else
{}
\fi

Theorems \ref{aoi_indexability}-\ref{theorem3} and Lemma \ref{aoi_lemma} hold for all increasing functions $p_n(\delta)$ of the AoI $\delta$, not necessarily the mean-square estimation error function in \eqref{aoi_penalty}.
\ifreport
{}
\else
Due to space limitation, the proofs of Theorems \ref{aoi_indexability}-\ref{theorem3} and Lemma \ref{aoi_lemma} are relegated to our technical report \cite{Ornee2023}.
\fi 


\ifreport
\begin{algorithm}[t] 
\caption{Whittle Index Policy for Signal-aware Remote Estimation to solve \eqref{problem_aoi_dummy}-\eqref{constraint_aoi_dummy}} \label{alg1_dummy_age}
\begin{algorithmic}[1]
\For{all time $t$}
      \State Update $\Delta_n (t)$, $\gamma_n (t)$, and the Whittle index $\alpha_{n, \text{age}} (\Delta_n (t), \gamma_n (t))$ for all $n=1,2, \ldots, N$ using \eqref{age_eq}, \eqref{age-based_index}, \eqref{Whitt_infty_age}, \eqref{age_exp1}, and \eqref{age_exp2}.
      \State Update $\Delta_0 (t)$, $\gamma_0 (t)$, and the Whittle index $\alpha_{0, \text{age}} (\Delta_0 (t), \gamma_n (t))$ for $L$ dummy bandits.
      \For{all $l = 1, 2, \ldots, L$}
      \If{channel $l$ is idle}
         \State Choose bandit $n$ with the highest non-negative Whittle index $\alpha_{n, \text{age}} (\Delta_n (t), \gamma_n (t))$ for $n=0, 1, \ldots, N$. 
      \EndIf
      \EndFor
\EndFor         
\end{algorithmic}
\end{algorithm}
\else
{}
\fi

\ifreport
The Algorithms for solving RMAB \eqref{problem_aoi_dummy}-\eqref{constraint_aoi_dummy} is provided in Algorithm \ref{alg1_dummy_age} and for solving the original RMAB \eqref{problem_aoi}-\eqref{aoi_constraint} is provided in  Algorithm \ref{alg2}. 
\else
{}
\fi
 
\ifreport
\else 
The Whittle index scheduling policy for solving the sampling and scheduling problem \eqref{problem_aoi}-\eqref{aoi_constraint} is illustrated  in Algorithm \ref{alg2}. 
\fi 
Theorems \ref{aoi_indexability}-\ref{theorem3}, Lemma \ref{aoi_lemma}, and
\ifreport
Algorithms \ref{alg1_dummy_age}-\ref{alg2}
\else
Algorithm \ref{alg2}
\fi
generalize prior studies on AoI-based Whittle index policies, e.g.,  \cite{tripathi2019whittle, hsu2018age, kadota2019scheduling}. More specifically, the Whittle index policies detailed in \cite{tripathi2019whittle, hsu2018age, kadota2019scheduling} were derived for the scenario of constant transmission times where the zero-wait sampling policy \cite{yates2015lazy, sun2017update} is an optimal solution for the sub-problem \eqref{per_arm_problem_age}, and the resulting Whittle index always maintains a non-negative value. In contrast, our current study accommodates scenarios involving \emph{i.i.d.} random transmission times. In such cases, the optimality of zero-wait sampling is not assured for sub-problem \eqref{per_arm_problem_age}, resulting in the potential for both positive and negative values for the Whittle index derived in Theorem \ref{theorem3}.

\subsubsection{Unity of Whittle Index-based Scheduling and Threshold-based Sampling}
For single-source, single-channel special case with $w_1 =1$, problem \eqref{problem_aoi}-\eqref{aoi_constraint} reduces to
\begin{align} \label{problem_aoi_single}
\inf_{\pi \in \Pi_{\text{agnostic}}} &\limsup_{T \to \infty}   \mathbb{E}_{\pi} \bigg[\frac{1}{T} \int_{0}^{T} p_1 (\Delta_1 (t)) dt\bigg].
\end{align}

The single-source, single-channel sampling and scheduling problem \eqref{problem_aoi_single} is a special case of Proposition \ref{opt_sampler_theorem_age} with $n=1$ and $\lambda =0$. A threshold-based optimal solution to \eqref{problem_aoi_single} is provided by the following Corollary of Proposition \ref{opt_sampler_theorem_age}. 

\begin{corollary} \label{single_theorem_age}
If the $Y_{1,i}$'s are i.i.d. with $0<\mathbb{E}[Y_{1,i}] < \infty$ and $p_1 (\cdot)$ in \eqref{aoi_penalty} is strictly increasing, then \emph{$(S_{1,1} (\beta_{1, \text{age}}),S_{1,2} (\beta_{1, \text{age}}),\ldots)$}  with a parameter \emph{$\beta_{1, \text{age}}$} is an optimal solution to \eqref{problem_aoi_single}, 
where 
\emph{\begin{align}\label{single_solution_age}
S_{1,i+1} \!(\beta_{1, \text{age}}\!)\!\!=\! \inf_t\!\! \left\{ t\!\! \geq \!\!D_{1,i} (\beta_{1, \text{age}}\!) \!:\! \mathbb{E} [p_1 \!(\!\Delta_1 \!(t \!+\! Y_{1, i+1})\!)\!] \!\!\geq\!\! \beta_{1, \text{age}}\!\right\}\!\!,
\end{align}}
\emph{$D_{1, i} (\beta_{1, \text{age}})= S_{1, i} (\beta_{1, \text{age}})+ Y_{1,i}$}, and \emph{$\beta_{1, \text{age}}$} is the unique root of
\emph{\begin{align}\label{single_beta_age}
& \mathbb{E}\left[\int_{D_{1,i} (\beta_{1, \text{age}})}^{D_{1,i+1}(\beta_{1, \text{age}})} p_1 (\Delta_1 (t)) dt\right]\! \nonumber\\
& - \!{\beta_{1, \text{age}}} {\mathbb{E}[D_{1,i+1}(\beta_{1, \text{age}})-D_{1,i} (\beta_{1, \text{age}})]} = 0.
\end{align}}
The optimal objective value to \eqref{problem_aoi_single} is given by
\emph{\begin{align}\label{single_obj_age}
{\bar m}_{1, \text{age-opt}}  = \frac{\mathbb{E}\left[\int_{D_{1,i} (\beta_{1, \text{age}})}^{D_{1,i+1}(\beta_{1, \text{age}})}\! p_1 (\Delta_1 (t)) dt\right]}{\mathbb{E}[D_{1,i+1}(\beta_{1, \text{age}})\!-\!D_{1,i} (\beta_{1, \text{age}})]}. {\noindent}
\end{align}}
{\!\!Furthermore}, \emph{$\beta_{1, \text{age}}$} is exactly the optimal objective value of \eqref{problem_aoi_single}, i.e., \emph{$\beta_{1, \text{age}} = {\bar m}_{1,\text{age-opt}}$}.
\end{corollary}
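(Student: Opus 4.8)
The plan is to derive Corollary~\ref{single_theorem_age} as a direct specialization of Proposition~\ref{opt_sampler_theorem_age} (the per-bandit optimal sampling result for signal-agnostic scheduling) to the single-source, single-channel setting. First I would note that with $N=L=1$ there is only one bandit, so the channel resource constraint $\sum_{n=1}^N c_n(t)\le L$ reduces to $c_1(t)\in\{0,1\}$, which imposes no real restriction beyond non-preemption; consequently the relaxation and Lagrangian dual decomposition are vacuous here and the constant $\lambda L$ term can be dropped. Setting $\lambda=0$ in the per-arm problem \eqref{per_arm_problem_age} then yields exactly \eqref{problem_aoi_single} with $w_1=1$. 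Thus every conclusion of Proposition~\ref{opt_sampler_theorem_age}, evaluated at $n=1$, $w_1=1$, and $\lambda=0$, transfers verbatim.

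Concretely, I would substitute these parameter choices into the statement of Proposition~\ref{opt_sampler_theorem_age}: the optimal sampling times \eqref{eq_opt_solution_age} become the threshold rule \eqref{single_solution_age}, since $\mathbb{E}[p_n(\Delta_n(t+Y_{n,i+1}))]\ge\beta_{n,\text{age}}$ specializes to $\mathbb{E}[p_1(\Delta_1(t+Y_{1,i+1}))]\ge\beta_{1,\text{age}}$; the fixed-point equation \eqref{thm1_eq22_age} with $\lambda=0$ and $w_1=1$ becomes \eqref{single_beta_age}; the optimal value formula \eqref{thm1_eq23_age} with the $\lambda\mathbb{E}[Y_{n,i+1}]$ term removed becomes \eqref{single_obj_age}; and the identity $\beta_{n,\text{age}}=\bar m_{n,\text{age-opt}}$ becomes $\beta_{1,\text{age}}=\bar m_{1,\text{age-opt}}$. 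The hypothesis that $p_1(\cdot)$ in \eqref{aoi_penalty} is strictly increasing is exactly what Proposition~\ref{opt_sampler_theorem_age} needs (it ensures the threshold policy is well-defined and the root $\beta_{1,\text{age}}$ is unique), so no additional argument about existence or uniqueness of the root is required.

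The one point that deserves a sentence of care — and the closest thing to an obstacle — is confirming that the per-bandit formulation truly coincides with \eqref{problem_aoi_single} in the single-bandit case, i.e., that relaxing the constraint and then dualizing does not lose anything when $N=L=1$. This is immediate: with a single bandit and a single channel, the per-time-slot constraint $c_1(t)\le 1$ is automatically respected by any causal non-preemptive policy, so the relaxed constraint \eqref{relatxed_constraint} (which would read $\limsup_T \mathbb{E}_\pi[\frac1T\int_0^T c_1(t)\,dt]\le 1$) is never binding; hence $\lambda=0$ is dual-optimal and the dual problem \eqref{per_arm_problem_age} at $\lambda=0$ is precisely the original problem \eqref{problem_aoi_single}. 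Therefore Corollary~\ref{single_theorem_age} follows. I would close by remarking, as the paper already does for the signal-aware case, that for $\theta_1=0$ (Wiener) and $\theta_1>0$ (stable OU) the threshold structure recovers earlier results, while the treatment here applies uniformly to all increasing penalty functions $p_1(\cdot)$.
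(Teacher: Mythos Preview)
Your proposal is correct and matches the paper's own justification: the paper simply states that Corollary~\ref{single_theorem_age} ``follows directly from Proposition~\ref{opt_sampler_theorem_age},'' and your argument is precisely this specialization to $n=1$, $w_1=1$, $\lambda=0$. Your extra sentence explaining why $\lambda=0$ is the right choice (the constraint is vacuous when $N=L=1$) is a helpful elaboration but not something the paper spells out.
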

Corollary \ref{single_theorem_age} follows directly from Proposition \ref{opt_sampler_theorem_age}. This result was reported earlier in \cite[Theorem 1]{SunNonlinear2019}.
The threshold-based policy in Corollary \ref{single_theorem_age} and the Whittle index policy in the following theorem are equivalent.

\begin{algorithm}[t] 
\caption{Whittle Index Policy for Signal-agnostic Remote Estimation} \label{alg2}
\begin{algorithmic}[1]
\State Initialize the set $A$ of unserved bandits $A=\{1, 2, \ldots, N\}$.
\For{all time $t$}
      \State Update $\Delta_n (t)$, $\gamma_n (t)$, and the Whittle index $\alpha_{n, \text{age}} (\Delta_n (t), \gamma_n (t))$ for all $n = 1, 2, \ldots, N$ using \eqref{age_eq}, \eqref{age-based_index}, \eqref{Whitt_infty_age}, \eqref{age_exp1}, and \eqref{age_exp2}.
      \State Update $A=\{n \in \{1,2,\ldots, N \}: \gamma_n (t) =0\}$.
      \For{all $l = 1, 2, \ldots, L$}
      \If{channel $l$ is idle and $\max\limits_{n \in A} \alpha_{n, \text{age}} (\Delta_n (t), \gamma_n (t))\!\! \geq \!\!0$}
         \State $n = \argmax_{n \in A} \alpha_{n, \text{age}} (\Delta_n (t), \gamma_n (t))$.
         \State Take a sample of bandit $n$ and send it on channel $l$.
         \State $A \gets A - \{n\}$.
      \EndIf
      \EndFor
\EndFor         
\end{algorithmic}
\end{algorithm}

\begin{theorem} \label{single_theorem_whittle_age}
If $p_1 (\delta)$ is a strictly increasing function of $\delta$ and the $Y_{1,i}$'s are i.i.d. with $0 < \mathbb{E} [Y_{1,i}] < \infty$, then $(S_{1, 1}, S_{1, 2}, \ldots)$ is an optimal solution to \eqref{problem_aoi_single}, where
\emph{\begin{align}
S_{1, i+1} = \inf_t \{t \geq S_{1, i} : \alpha_{1, \text{age}} (\Delta_1 (t), \gamma_1 (t)) \geq 0\}, 
\end{align}}
where \emph{$\alpha_{1, \text{age}} (\Delta_1 (t), \gamma_1 (t))$} is the Whittle index of source $1$, defined by \eqref{age-based_index} and \eqref{Whitt_infty_age} for $n=1$.
\end{theorem}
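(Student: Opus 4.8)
\emph{Proof proposal.} The plan is to mirror the argument behind Theorem~\ref{single_theorem_whittle}. Corollary~\ref{single_theorem_age} already provides an optimal solution to \eqref{problem_aoi_single}, namely the threshold policy $(S_{1,1}(\beta_{1,\text{age}}),S_{1,2}(\beta_{1,\text{age}}),\ldots)$; hence it is enough to show that the Whittle-index stopping rule in the statement generates exactly the same sampling times as that threshold policy, after which optimality of the Whittle-index policy follows at once.

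First I would express the Whittle index as an explicit function of the AoI. Let $\beta_{1,\text{age}}(\lambda)$ denote the unique root of \eqref{thm1_eq22_age} with $n=1$, viewed as an implicit function of the activation cost $\lambda$. The proof of Theorem~\ref{aoi_indexability} (which parallels that of Theorem~\ref{indexability}) shows that $\beta_{1,\text{age}}(\lambda)$ is strictly increasing in $\lambda$. Combining this with the description \eqref{def_3} of the passive set and the definition \eqref{Whittle_index_general_definition} of the Whittle index, one gets, for $\gamma = 0$,
\begin{align}
\alpha_{1,\text{age}}(\delta,0)=\inf\big\{\lambda\in\mathbb{R}:\mathbb{E}[p_1(\delta+Y_{1,i+1})]<\beta_{1,\text{age}}(\lambda)\big\}=\beta_{1,\text{age}}^{-1}\big(\mathbb{E}[p_1(\delta+Y_{1,i+1})]\big),
\end{align}
while \eqref{Whitt_infty_age} gives $\alpha_{1,\text{age}}(\delta,\gamma)=-\infty$ for $\gamma>0$. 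Since $\beta_{1,\text{age}}(\cdot)$ is increasing, $\alpha_{1,\text{age}}(\delta,0)\geq 0$ if and only if $\mathbb{E}[p_1(\delta+Y_{1,i+1})]\geq\beta_{1,\text{age}}(0)$, and $\beta_{1,\text{age}}(0)$ is exactly the root $\beta_{1,\text{age}}$ appearing in \eqref{single_beta_age}, because \eqref{problem_aoi_single} is the special case of \eqref{per_arm_problem_age} with $n=1$ and $\lambda=0$.

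Next I would turn this equivalence of conditions into equality of stopping times. For $t\in[S_{1,i},D_{1,i})$ the $i$-th sample is still in transmission, so $\gamma_1(t)>0$ and $\alpha_{1,\text{age}}(\Delta_1(t),\gamma_1(t))=-\infty<0$; therefore the infimum defining $S_{1,i+1}$ in the Whittle-index rule cannot be attained before $D_{1,i}$. For $t\geq D_{1,i}$ the channel is idle, $\gamma_1(t)=0$, and by the previous step the event $\alpha_{1,\text{age}}(\Delta_1(t),\gamma_1(t))\geq 0$ coincides with $\mathbb{E}[p_1(\Delta_1(t)+Y_{1,i+1})]\geq\beta_{1,\text{age}}$, which is precisely the event used in \eqref{single_solution_age} to define $S_{1,i+1}(\beta_{1,\text{age}})$. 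Starting from the common initial state and arguing by induction on $i$ (using that the two policies see the same transmission-time realizations, so $D_{1,i}=D_{1,i}(\beta_{1,\text{age}})$ whenever $S_{1,i}=S_{1,i}(\beta_{1,\text{age}})$), the two sampling sequences coincide almost surely. Optimality of the Whittle-index policy then follows from Corollary~\ref{single_theorem_age}.

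The main obstacle I anticipate is the first step: rigorously establishing the strict monotonicity of $\beta_{1,\text{age}}(\lambda)$ in $\lambda$, so that $\beta_{1,\text{age}}^{-1}$ is well defined and $\alpha_{1,\text{age}}(\cdot,0)$ is a bona fide function of the AoI, together with a careful treatment of the boundary behaviour at the instant a sample completes service (where $\gamma_1$ drops from a positive value back to $0$). Once these monotonicity and measurability facts are secured, the rest is routine bookkeeping paralleling the signal-aware case in Theorem~\ref{single_theorem_whittle}.
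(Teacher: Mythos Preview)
Your proposal is correct and takes essentially the same approach as the paper. The paper's own proof is a one-line remark that the result follows from Theorem~\ref{theorem3} (with $n=1$, $\lambda=0$) and Corollary~\ref{single_theorem_age}; you have simply spelled out the implicit equivalence argument, exactly mirroring what the paper does in detail for the signal-aware analogue (Appendix~\ref{proof_of_single_whittle}), so there is nothing substantively new or different here.
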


Theorem \ref{single_theorem_whittle_age} follows from Theorem \ref{theorem3} for $n=1$ and $\lambda=0$ and Corollary \ref{single_theorem_age}. 

In the AoI literature, threshold-based scheduling and Whittle index have been two distinct approaches for AoI minimization. Our study unifies the two approaches: for AoI minimization of a single source, the threshold policy in Corollary \ref{single_theorem_age} and the Whittle index policy based in Theorem \ref{single_theorem_whittle_age} are equivalent. Specifically, if the Whittle index $\alpha_{1, \text{age}} (\Delta_1 (t), \gamma_1 (t)) = 0$, then (i) the channel is idle at time $t$ and (ii) the expected age-penalty function surpasses the threshold in Corollary \ref{single_theorem_age} at time $t$.

\section{Numerical Results}

In this section, we compare the following three scheduling policies for multi-source remote estimation:

\begin{itemize}

\item Maximum Age First, Zero-Wait (MAF-ZW) policy: 
Whenever one channel $l$ becomes free, the MAF-ZW policy will take a sample from the source with the highest AoI among the sources that are currently not served by any channel, and send the sample over channel $l$. 

\item Signal-agnostic, Whittle Index policy: The policy that we proposed in Algorithm \ref{alg2}.

\item Signal-aware, Whittle Index policy: The policy that we proposed in Algorithm \ref{alg1}.

\end{itemize}

Figure \ref{figure1} depicts the total time-average mean-squared estimation error versus the parameter $\sigma_1$ of the Gauss-Markov source 1, where the number of sources is $N=4$ and the number of channels is $L=2$. The other parameters of the Gauss-Markov processes are $\sigma_2 = 0.8, \sigma_3 = 0.9, \sigma_4=1$, and $\theta_1=-0.1, \theta_2 = \theta_3= \theta_4=0.1$. The transmission times are \emph{i.i.d.} and follow a normalized log-normal distribution, where $Y_{n,i} = e^{\rho Q_{n,i}}/\mathbb{E} [e^{\rho Q_{n,i}}]$, $\rho >0$ is the scale parameter of the log-normal distribution, and $(Q_{n,1}, Q_{n,2}, \ldots)$ are \emph{i.i.d.} Gaussian random variables with zero mean and unit variance. In our simulation, $\rho = 1.5$. All sources are given the same weight $w_1 = w_2 = w_3 = w_4 = 1$. In Figure \ref{figure1}, the signal-aware Whittle index policy has a smaller total MSE than the signal-agnostic Whittle index policy and the MAF-ZW policy. The total MSE of the signal-aware Whittle index policy achieves up to 1.58 times performance gain over the signal-agnostic Whittle index policy, and up to 1.65 times over the MAF-ZW policy. 

\begin{figure}
\vspace{-0.0cm}
\centering
\includegraphics[width=6cm]{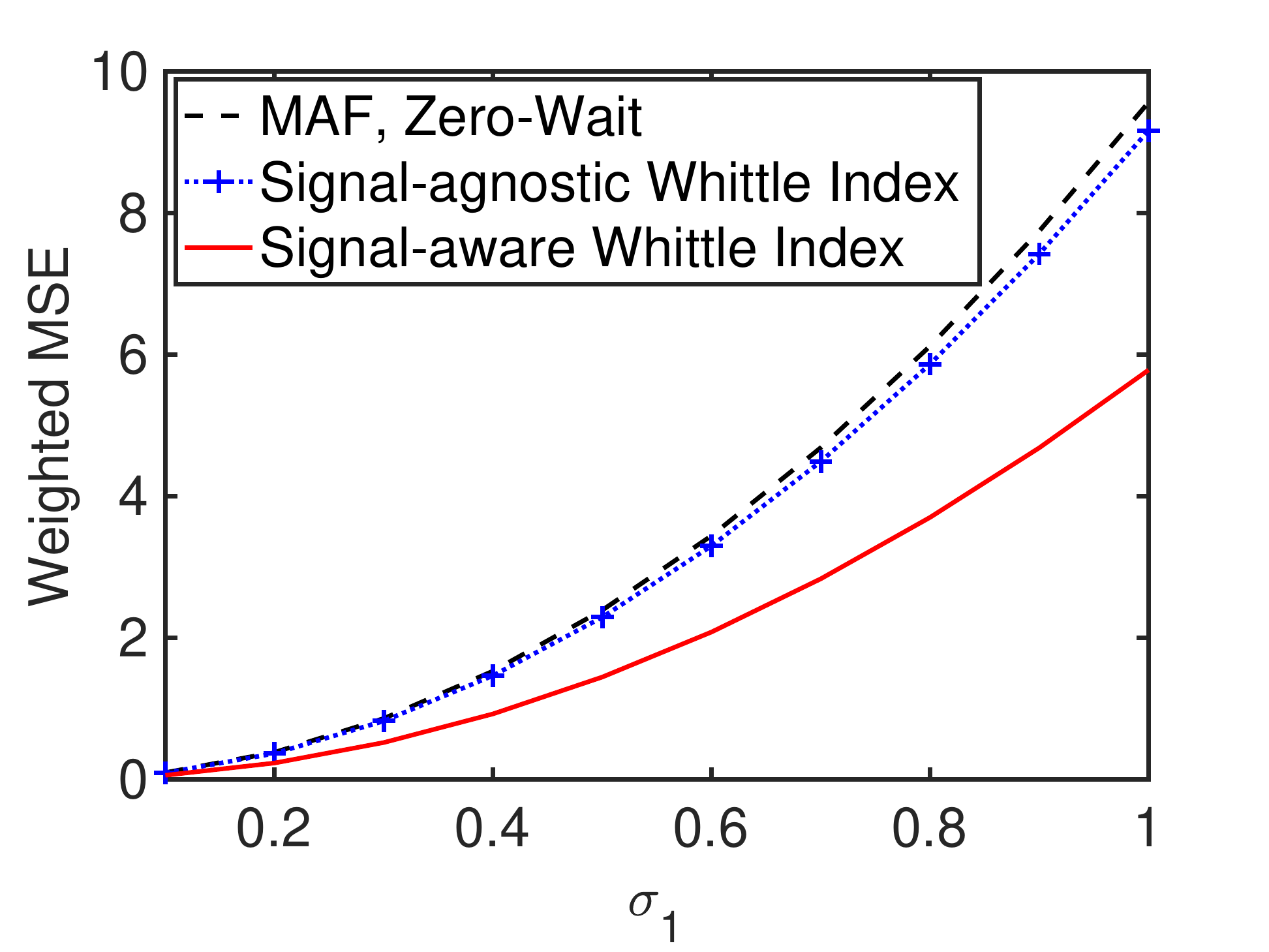}  
\caption{\small Total time-average MSE vs the parameter $\sigma_1$ of the Gauss-Markov source 1, where the number of sources is $N=4$ and the number of channels is $L=2$. The transmission times are \emph{i.i.d}, following a normalized log-normal distribution with parameter $\rho=1.5$, and $\mathbb{E} [Y_{n,i}] =1$. The other parameters of the Gauss-Markov sources are $\sigma_2 = 0.8, \sigma_3 = 0.9, \sigma_4=1$, and $\theta_1=-0.1, \theta_2 = \theta_3= \theta_4=0.1$.}\vspace{-0.0cm}
\label{figure1}
\end{figure}

Figure \ref{figure2} illustrates the total time-average mean-squared estimation error versus the parameter $\theta_1$ of the Gauss-Markov source 1, where the number of sources is $N=4$, and the number of channels is $L=2$. The other parameters of the Gauss-Markov processes are $\theta_2 = 0.2, \theta_3 = 0.3, \theta_4= 0.1$, and $\sigma_1= \sigma_2= \sigma_3 = \sigma_4=1$. The transmission time distribution and the weights of the sources are the same as in Figure \ref{figure1}. In Figure \ref{figure2}, the total MSE of the signal-aware Whittle index policy achieves up to 8.6 times performance gain over the MAF-ZW policy and up to 1.32 times over the signal-agnostic Whittle index policy. 
When $\theta_1<0$, the performance gain of the signal-aware Whittle index policy is much higher than that in the case of $\theta_1>0$.  This suggests a high performance gain can be achieved if the Gauss-Markov sources are highly unstable. For all three policies, the total MSE decreases,
as $\theta_1$ increases. 

\begin{figure}
\vspace{-0.0cm}
\centering
\includegraphics[width=6cm]{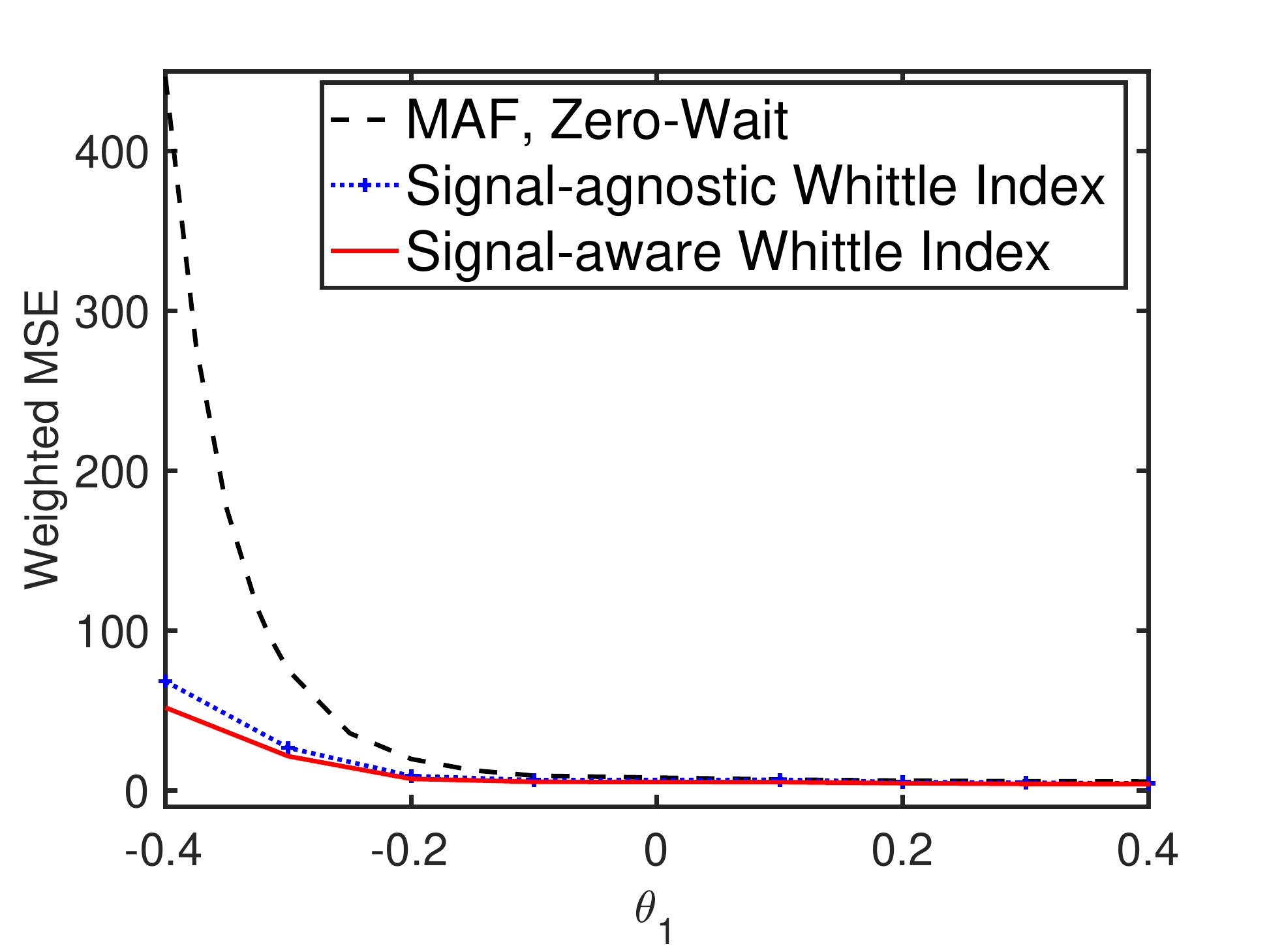}  
\caption{ \small Total time-average MSE vs the parameter $\theta_1$ of the Gauss-Markov source 1, where the number of sources is $N=4$ and the number of channels is $L=2$. The transmission times are \emph{i.i.d.}, following a normalized log-normal distribution with parameter $\rho=1.5$, and $\mathbb{E} [Y_{n,i}] =1$. The other parameters for the Gauss-Markov sources are $\sigma_1 = \sigma_2 = \sigma_3= \sigma_4=1$, and $\theta_2=0.2, \theta_3= 0.3, \theta_4= 0.1$.}\vspace{-0.0cm}
\label{figure2}
\end{figure}
\section{conclusion}


In this paper, we have studied a sampling and scheduling problem in which samples of multiple Gauss-Markov sources are sent to remote estimators that need to monitor the sources in real-time. The formulated sampling and scheduling problem is a restless multi-armed bandit problem, where each bandit process has a continuous state space and requires continuous-time control. We have proved that the problem is indexable and proposed a Whittle index policy. Analytical expressions of the Whittle index have been obtained. For single-source, single-channel scheduling, we have showed that it is optimal to take a sample at the earliest time 
when the Whittle index is no less than zero. This result provides a new interpretation of earlier studies on threshold-based sampling policies for the Wiener and Ornstein-Uhlenbeck processes. 

%

\bibliographystyle{IEEEtran}
\bibliography{ref,ref1,ref_2,sueh}

\newpage

\appendices
\section{Proof of (\ref{eq_solution}) for $\theta_n <0$} \label{proof_unstable_OU_process}
A solution to \eqref{SDE} for initial state $X_{n,0} = 0$ and parameter $\mu_n =0$ can be written in terms of a stochastic integral as follows
\begin{align}
X_{n,t} = \sigma_n e^{-\theta_n (t-S_{n,i})} \int_{0}^{t-S_{n,i}} e^{\theta_n s} dW_s,
\end{align}
which holds for any value of $\theta_n$.
To derive an alternative formula for $\theta_n <0$, let consider the following well-known lemma:
\begin{lemma} \label{lemma_Y_t}
Let $Y_{n,t}$ be a Gaussian process with $Y_{n,0}=0$, $\mathbb{E} [Y_{n,t}] = 0$, and it has independent increments. Then the distribution of $Y_{n,t}$ can be completely determined by its variance function $\mathbb{E} [Y^2 _{n,t}]$.
\end{lemma}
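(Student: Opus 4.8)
Proof proposal for Lemma \ref{lemma_Y_t}.

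The plan is to reduce the claim to the standard fact that a Gaussian process is characterized by its mean and covariance functions, and then to show that, under the stated hypotheses, the covariance function is itself an explicit function of the variance function. Throughout, ``the distribution of $Y_{n,t}$'' is understood as the law of the process, i.e., the collection of its finite-dimensional distributions.

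First I would record that, since $Y_{n,t}$ is a Gaussian process, every finite-dimensional vector $(Y_{n,t_1},\ldots,Y_{n,t_k})$ is jointly Gaussian, hence completely determined by its mean vector and covariance matrix. By hypothesis $\mathbb{E}[Y_{n,t}]=0$ for all $t$, so the mean vector is identically zero, and the finite-dimensional distributions are determined solely by the covariance function $(s,t)\mapsto\mathbb{E}[Y_{n,s}Y_{n,t}]$. By Kolmogorov's extension theorem, this in turn determines the law of the whole process.

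Next I would compute the covariance function from the independent-increments property. Fix $0\le s\le t$. Using $Y_{n,0}=0$, write $Y_{n,t}=Y_{n,s}+(Y_{n,t}-Y_{n,s})$, where $Y_{n,s}=Y_{n,s}-Y_{n,0}$ and $Y_{n,t}-Y_{n,s}$ are independent increments, each of zero mean; hence
\[
\mathbb{E}[Y_{n,s}Y_{n,t}]=\mathbb{E}[Y_{n,s}^2]+\mathbb{E}[Y_{n,s}]\,\mathbb{E}[Y_{n,t}-Y_{n,s}]=\mathbb{E}[Y_{n,s}^2].
\]
By symmetry, $\mathbb{E}[Y_{n,s}Y_{n,t}]=\mathbb{E}\big[Y_{n,\min\{s,t\}}^2\big]$ for all $s,t\ge 0$, so the covariance function is an explicit function of the variance function $t\mapsto\mathbb{E}[Y_{n,t}^2]$ alone. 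Combining the two steps yields the claim: the variance function determines the covariance function, which together with the zero mean determines all finite-dimensional distributions, and hence the law of the process.

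I do not anticipate a serious obstacle; the only points needing care are (i) invoking the joint Gaussianity of $(Y_{n,t_1},\ldots,Y_{n,t_k})$ — which is part of the definition of a Gaussian process — so that marginal variances plus the independent-increments structure pin down the full covariance matrix, and (ii) being explicit that the conclusion is a statement about the law of the process (its finite-dimensional distributions), for which Kolmogorov's extension theorem is the appropriate tool. This lemma will then be applied by matching the variance functions of the two sides of \eqref{eq_solution} in the case $\theta_n<0$.
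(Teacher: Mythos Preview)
Your argument is correct and is the standard one: a Gaussian process is determined by its mean and covariance, the mean is zero by hypothesis, and independent increments together with $Y_{n,0}=0$ force $\mathbb{E}[Y_{n,s}Y_{n,t}]=\mathbb{E}[Y_{n,\min\{s,t\}}^2]$, so the covariance is a function of the variance alone. The paper does not supply a proof of this lemma; it simply labels it ``well-known'' and invokes it, so there is nothing to compare against beyond noting that your write-up is exactly the folklore justification the paper has in mind.
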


Define
\begin{align} \label{Y_t}
Y_{n,t} = e^{\theta_n (t- S_{n,i})} X_{n,t} =  \sigma_n  \int_{0}^{t-S_{n,i}} e^{\theta_n s} dW_s.
\end{align}
Lemma \ref{lemma_Y_t} implies that $Y_{n,t}$ in \eqref{Y_t} is a Gaussian process and its variance function is given by
\begin{align}
\mathbb{E} [{Y^2 _{n,t}}] = \frac{\sigma_n ^2}{2 \theta_n} (e^{2 \theta_n (t- S_{n,i})} - 1). 
\end{align}
Consider the following process
\begin{align} \label{Z_t}
Z_{n,t} = \frac{\sigma_n}{\sqrt{2 \theta_n}} W_{n, e^{2 \theta_n (t-S_{n,i})} - 1}. 
\end{align}
Because Brownian motion $W_{n,t}$ is a Gaussian process with variance function $t$, $Z_{n,t}$ in \eqref{Z_t} is a Gaussian process with variance function $\frac{\sigma_n ^2}{2 \theta_n} (e^{2 \theta_n (t-S_{n,i})} - 1)$ which is the same as the variance function of $Y_{n,t}$. From Lemma \ref{lemma_Y_t}, both the processes $Y_{n,t}$ and $Z_{n,t}$ are equal in distribution. 

When $\theta_n < 0$, consider $\rho_n = -\theta_n > 0$. In this setting, the variance function of $Y_{n,t}$ can be written as
\begin{align} \label{stoc_proof_1}
\mathbb{E} [{Y^2 _{n,t}}] = \frac{\sigma_n ^2}{2 \rho_n} (1-e^{-2 \rho_n (t-S_{n,i})}). 
\end{align}
Let
\begin{align} \label{stoc_proof_2}
Z'_{n,t} = \frac{\sigma_n}{\sqrt{2 \rho_n}} W_{n, 1 - e^{-2 \rho_n (t- S_{n,i})}},
\end{align}
which implies $Z'_{n,t}$ has the same variance function as $Y_{n,t}$. Hence, $Z'_{n,t}$ and $Y_{n,t}$ are equal in distribution. By using \eqref{stoc_proof_1} and \eqref{stoc_proof_2} in \eqref{Y_t}, we get that
\begin{align}
X_{n,t} =& \frac{\sigma_n ^2}{2 \rho_n} e^{\rho_n (t- S_{n,i})} W_{n, 1 - e^{-2 \rho_n (t- S_{n,i})}},
\end{align}
from which \eqref{eq_solution} for $\theta_n < 0$ follows. This completes the proof.

\section{Proof of the simplification of policy $\pi_n$} \label{simplification_policy}

The sampling and scheduling policy $\pi_n = ((S_{n,1}, G_{n,1}), (S_{n,2}, G_{n,2}), \ldots)$ consists of the sampling time $S_{n, i}$ and the transmission starting time $G_{n, i}$ for each sample $i$. In policy $\pi_n$, sample $i$ can be generated when the server is busy sending another sample, and hence sample $i$  
needs to wait for some time before being submitted to the server, i.e., $S_{n,i} <G_{n,i}$. 
Consider a sampling and scheduling policy $\pi_n ' = \{(S_{n,1}, G_{n,1}), \ldots, (S_{n, i-1}, G_{n, i-1}), G_{n,i}, (S_{n, i+1}, G_{n, i+1}),\\ \ldots\}$ such that the generation time and transmission starting time of sample $i$ are equal to each other, i.e., $S_{n,i} = G_{n, i}$. 
We will show that the MSE of the sampling policy $\pi_n '$ is smaller than that of the sampling policy $\pi_n$.

Note that $\{X_{n,t}: t\in[0, \infty)\}$ does not change according to the sampling policy, and the sample delivery times $\{D_{n,1},D_{n,2},\ldots\}$ remain the same in policy $\pi_n$ and policy $\pi_n '$. 
Hence, the only difference between policies $\pi_n$ and $\pi_n '$ is that \emph{the generation time of sample $i$}. 
The MMSE estimator under policy $\pi_n$ is given by \eqref{mse} and the  MMSE estimator under policy $\pi_n '$ is given by
\begin{align}\label{eq_esti_pi1}
& \hat X_{n,t} ' \nonumber\\
=& \mathbb{E}[X_{n,t}|(\!S_{n,j}, \!X_{n, S_{n,j}}\!, \!G_{n,j}\!, \!D_{n,j})_{j\leq i-1}, (\!G_{n,i}, \!X_{n, G_{n,i}}\!, \!D_{n,i})]\nonumber\\
 = &\left\{\begin{array}{l l} 
\mathbb{E}[X_{n,t}|G_{n,i}, X_{n, G_{n,i}}],& t\in[D_{n,i},D_{n, i+1}); \\
\mathbb{E}[X_{{n,t}}|S_{n,j}, X_{n, S_{n,j}}],& t\in[D_{n,j},D_{n, j+1}),~j\neq i. \\
\end{array}\right.
\end{align}

Next, we consider another sampling and scheduling policy $\pi_n ''$ in which the samples $(G_{n,i}, X_{n, G_{n,i}})$ and  $(S_{n,i}, X_{n, S_{n,i}})$ are both delivered to the estimator at the same time $D_{n,i}$. Clearly, the estimator under policy $\pi_n ''$ has more information than those under policies $\pi_n$ and $\pi_n '$. One can also show that the MMSE estimator under policy $\pi_n ''$ is
\begin{align}\label{eq_esti_pi2}
& \hat X_{n,t} '' \nonumber\\
=& \mathbb{E}[X_{n,t}|(\!S_{n,j}, \!X_{n, S_{n,j}}, \!G_{n,j}, \!D_{n,j})_{j\leq i}, (G_{n,i}, X_{n, G_{n,i}}, D_{n,i})] \nonumber\\
=&\left\{\begin{array}{l l} 
\mathbb{E}[X_{n,t}|G_{n,i}, X_{n, G_{n,i}}],& t\in[D_{n,i},D_{n,i+1}); \\
\mathbb{E}[X_{n,t}|S_{n,j}, X_{n, S_{n,j}}],& t\in[D_{n,j},D_{n, j+1}),~j\neq i. \\
\end{array}\right.
\end{align}
Notice that, because of the strong Markov property of OU process, the estimator under policy $\pi'' _n$ uses the fresher sample $(\!G_{n,i}, \!X_{n, G_{n,i}}\!)$, instead of the stale sample $(S_{n,i}, X_{n, S_{n,i}})$, to construct $\hat X_{n,t} ''$ during $[D_{n,i},D_{n, i+1})$. Because the estimator under policy $\pi_n ''$ has more information than that of under policy $\pi_n$, one can imagine that policy $\pi_n ''$ has a smaller estimation error than policy $\pi_n$, i.e., 
\begin{align}\label{eq_small_error}
& \mathbb{E}\left\{\int_{D_{n,i}} ^{D_{n, i+1}} (X_{n,t}-\hat X_{n,t})^2dt\right\}\geq \nonumber\\ 
& \mathbb{E}\left\{\int_{D_{n,i}} ^{D_{n, i+1}} (X_{n,t}-\hat X_{n,t} '')^2  dt\right\}.
\end{align}
To prove \eqref{eq_small_error}, we invoke the orthogonality principle of the MMSE estimator \cite[Prop. V.C.2]{poor1998introduction} under policy $\pi_n ''$ and obtain 
\begin{align}\label{eq_orthogonality}
\mathbb{E}\left\{\int_{D_{n,i}} ^{D_{n, i+1}}2 (X_{n,t}-\hat X_{n,t} '') (\hat X_{n,t} ''- \hat X_{n,t}) dt\right\}=0,
\end{align}
where we have used the fact that  $(G_{n,i}, X_{n, G_{n,i}})$ and  $(S_{n,i}, X_{n, S_{n,i}})$ are available by the MMSE estimator under policy $\pi_n ''$.
Next, from \eqref{eq_orthogonality}, we can get
\begin{align}
&\mathbb{E}\left\{\int_{D_{n,i}} ^{D_{n, i+1}} (X_{n,t}-\hat X_{n,t})^2 dt\right\} \nonumber\\
= & \mathbb{E}\left\{\int_{D_{n,i}} ^{D_{n, i+1}} (X_{n,t}-\hat X_{n,t} '')^2 +(\hat X_{n,t} ''- \hat X_{n,t})^2 dt\right\} \nonumber\\
& + \mathbb{E}\left\{\int_{D_{n,i}} ^{D_{n, i+1}}2 (X_{n,t}-\hat X_{n,t} '') (\hat X_{n,t} ''- \hat X_{n,t}) dt\right\} \nonumber 
\end{align}
\begin{align}
= & \mathbb{E}\left\{\int_{D_{n,i}} ^{D_{n, i+1}} (X_{n,t}-\hat X_{n,t} '')^2 +(\hat X_{n,t} ''- \hat X_{n,t})^2 dt\right\} \nonumber\\
\geq &\mathbb{E}\left\{\int_{D_{n,i}} ^{D_{n, i+1}} (X_{n,t}-\hat X_{n,t} '')^2  dt\right\}. 
\end{align}
In other words, the estimation error of policy $\pi_n ''$ is no greater than that of policy $\pi_n$. Furthermore, by comparing \eqref{eq_esti_pi1} and \eqref{eq_esti_pi2}, we can see  that the MMSE estimators under policies $\pi_n ''$ and $\pi_n '$ are exactly the same. Therefore, the estimation error of policy $\pi_n '$ is no greater than that of policy $\pi_n$.

By repeating the above arguments for all samples $i$ satisfying $S_{n,i} <G_{n,i}$, one can show that the sampling policy $\{G_{n,1}, G_{n,2}, \ldots\}$ is better than the sampling policy $\pi \!\!=\!\!\{(S_{n,1}, G_{n,1}), (S_{n,2}, G_{n,2}), \ldots\}$.
This completes the proof.


\section{Proof of Proposition \ref{opt_sampler_theorem}} \label{threshold_proof}

In this section, we present the proof of Proposition \ref{opt_sampler_theorem} for unstable OU process, i.e., for $\theta_n < 0$. The proofs for stable OU process (i.e., $\theta_n > 0$) and Wiener process (i.e., $\theta_n = 0$) follow the similar steps.

Define the $\sigma$-field
\begin{align}
\mathcal{N}_{n,t} = \sigma(X_{n,s} : 0 \leq s \leq t),
\end{align}
which is the set of events whose occurrence are determined by the realization of the process $\{X_{n,s} : 0 \leq s \leq t\}$. The right continuous filtration $\{\mathcal{N}_{n,t} ^{+}, t \geq 0\}$ is defined by
\begin{align}
\mathcal{N}_{n,t} ^{+} = \cup_{s > t} \mathcal{N}_{n,s}.
\end{align}
In causal sampling policies, each sampling time is a stopping time with respect to the filtration $\{\mathcal{N}_{n,t} ^{+}, t \geq 0\}$, i.e., \cite{Durrettbook10}
\begin{align} \label{stopping_time}
\{S_{n,i} \leq t\} \in \mathcal{N}_{n,t} ^{+}, \forall t \geq 0.
\end{align}

Let the sampling and scheduling policy $\pi_n = (S_{n,1}, S_{n,2}, \ldots)$ in \eqref{per_arm_problem} satisfy two conditions: (i) Each sampling policy $\pi_n \in \Pi_n$ satisfies \eqref{stopping_time} for all $i$. (ii) The sequence of inter-sampling times $\{T_{n,i} = S_{n, i+1}-S_{n,i}, i=0,1,\ldots\}$ forms a \emph{regenerative process} \cite[Section 6.1]{Haas2002}: There exists an increasing sequence  $0\leq {k_1}<k_2< \ldots$ of almost surely finite random integers such that the post-${k_j}$ process $\{T_{n, k_j+i}, i=0,1,\ldots\}$ has the same distribution as the post-${k_0}$ process $\{T_{n, k_0+i}, i=0,1,\ldots\}$ and is independent of the pre-$k_j$ process $\{T_{n, i}, i=0,1,\ldots, k_j-1\}$; further, we assume that $\mathbb{E}[{k_{j+1}}-{k_j}]<\infty$, $\mathbb{E}[S_{n, k_{1}}]<\infty$, and $0<\mathbb{E}[S_{n, k_{j+1}}-S_{n, k_j}]<\infty, ~j=1,2,\ldots$.

We will prove Proposition \ref{opt_sampler_theorem} in three steps: First , we show that it is better not to sample when no channel is free. Second, we decompose the MDP in \eqref{per_arm_problem} into a series of mutually independent per-sample MDPs. Finally, we solve the per-sample MDP analytically.


In Appendix \ref{simplification_policy}, by using the strong Markov property of the Gauss-Markov process and the orthogonality principle of MMSE estimation, we have shown that it is better not to take a sample before the previous sample is delivered. Hence, the sampling time and the transmission starting time are equal to each other. 
By this, let us consider a sub-class of sampling and scheduling policies $\Pi_{n,1} \subset \Pi_n$ such that each  sample is generated and sent out after all previous samples are delivered, i.e., 
\begin{align}
\Pi_{n,1} = \{\pi_n \in \Pi_n : S_{n,i} = G_{n,i} \geq D_{n,i-1} \text{ for all $i$}\}. \nonumber
\end{align}
For any policy $\pi_n \in \Pi_{n,1}$, the \emph{information} used for determining $S_{n,i}$ includes: (i) the history of signal values $(X_{n,t}: t\in[0, S_{n,i}])$ and (ii) the service times  $(Y_{n,1},\ldots, Y_{n,i-1})$ of previous samples. Let us define the $\sigma$-fields $\mathcal{F}_{n,t} = \sigma(X_{n,s}: s\in[0, t])$ and $\mathcal{F}_{n,t}^+ = \cap_{r>t}\mathcal{F}_{n,r}$. 
Then, $\{\mathcal{F}_{n,t}^+,t\geq0\}$ is the {filtration} (i.e., a non-decreasing and right-continuous family of $\sigma$-fields) of the Gauss-Markov process $X_{n,t}$.
Given the service times   $(Y_{n,1},\ldots, Y_{n,i-1})$ of previous samples, $S_{n,i} $ is a \emph{stopping time} with respect to the filtration $\{\mathcal{F}_{n,t}^+,t\geq0\}$ of the Gauss-Markov process $X_{n,t}$, that is
\begin{align}
[\{S_{n,i}\leq t\} | Y_{n,1},\ldots, Y_{n,i-1}] \in \mathcal{F}_{n,t} ^+.\label{eq_stopping}
\end{align}  
Hence, the policy space $\Pi_{n,1}$ can be expressed as
\begin{align}\label{eq_policyspace}
\!\!\!\!\Pi_{n,1} = & \{S_{n,i} : [\{S_{n,i}\leq t\} | Y_{n,1},\ldots, Y_{n,i-1}] \in \mathcal{F}_{n,t}^+, \nonumber\\
&~~~~~~~\text{$T_{n,i}$ is a regenerative process} \}.\!\!
\end{align}  
Let $Z_{n,i} = S_{n,i+1} - D_{n,i}\geq0$ represent the \emph{waiting time} between the delivery time $D_{n,i}$ of the $i$-th sample and the generation time $S_{n,i+1}$ of the $(i+1)$-th sample. Then, 
\begin{align} 
& S_{n,i} =  \sum_{j=0}^{i-1} (Y_{n,j} + Z_{n,j}), \label{sample_time}\\
& D_{n,i} = \sum_{j=0}^{i-1} (Y_{n, j} + Z_{n,j}) + Y_{n,i} \label{deliver_time}
\end{align}
for each $i=1,2,\ldots$. 
Given $(Y_{n,0},Y_{n,1},\ldots)$, $(S_{n,1},S_{n,2},\ldots)$ is uniquely determined by $(Z_{n,0},Z_{n,1},\ldots)$. Hence, one can also use $\pi = (Z_{n,0}, Z_{n,1},\ldots)$ to represent a sampling and scheduling policy.

By using \eqref{est_err}, \eqref{sample_time}, \eqref{deliver_time}, and the assumption that the inter-sampling times follow a regenerative process, the MDP in \eqref{per_arm_problem} can be transformed as the following.
\begin{align} \label{relaxed_MDP}
& \bar m_{n, \text{opt}} = \nonumber\\
& \inf_{\pi_n \!\in\! \Pi_{n,1}} \!\lim_{t \to \infty} \!\!\!\frac{\sum_{i=1}^{t} \!\!\mathbb{E}\! \bigg[\!\! \int_{Y_{n,i}}^{Y_{n,i} + Z_{n,i} + Y_{n, i+1}} \!\!w_n \varepsilon_n ^2 \!(s)\! ds\!\bigg] \!\!+\!\! \lambda \mathbb{E} [Y_{n,i+1}]}{\sum_{i=1}^{t} \mathbb{E} [Y_{n,i} + Z_{n,i}]}.
\end{align} 
In order to solve \eqref{relaxed_MDP}, let consider the following MDP with parameter $k \geq 0$:
\begin{align} \label{relaxed_MDP_1}
h(k) = & \inf_{\pi_n \in \Pi_{n,1}} \lim_{t \to \infty} \sum_{i=1}^{t} \mathbb{E} \bigg[ \int_{Y_{n,i}}^{Y_{n,i} + Z_{n,i} + Y_{n, i+1}} \!\!\!\!\!\!\!\!\!\!w_n \varepsilon_n ^2 (s) ds  \nonumber\\
& + \lambda Y_{n,i+1} - k (Y_{n,i} + Z_{n,i})\bigg],
\end{align}
where $h(k)$ is the optimum value of \eqref{relaxed_MDP_1}. Similar to the Dinkelbach's method \cite{dinkelbach1967nonlinear} for non-linear fractional programming, the following lemma holds for the MDP in \eqref{relaxed_MDP}:
\begin{lemma} \label{lem_ratio_to_minus}
\cite{Ornee_TON}, \cite{Wiener_TIT} 
The following assertions are true: 
\begin{itemize}
\vspace{0.5em}
\item[(a).] \emph{$\bar m_{n,\text{opt}} \gtreqqless k$} if and only if $h(k)\gtreqqless 0$. 
\vspace{0.5em}
\item[(b).] If $h(k)=0$, the solutions to \eqref{relaxed_MDP}
and \eqref{relaxed_MDP_1} are identical. 
\end{itemize}
\end{lemma}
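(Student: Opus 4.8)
\textbf{Proof proposal for Lemma \ref{lem_ratio_to_minus}.}
The plan is to run the classical Dinkelbach reduction for fractional programming, specialized to the long-run average MDP \eqref{relaxed_MDP}. First I would set up compact notation for the numerator and denominator appearing in \eqref{relaxed_MDP}: for a policy $\pi_n\in\Pi_{n,1}$ write
\begin{align}
\mathcal{E}(\pi_n) &= \lim_{t\to\infty}\frac{1}{t}\sum_{i=1}^{t}\mathbb{E}\!\left[\int_{Y_{n,i}}^{Y_{n,i}+Z_{n,i}+Y_{n,i+1}} w_n\varepsilon_n^2(s)\,ds + \lambda Y_{n,i+1}\right], \nonumber\\
\mathcal{T}(\pi_n) &= \lim_{t\to\infty}\frac{1}{t}\sum_{i=1}^{t}\mathbb{E}[Y_{n,i}+Z_{n,i}], \nonumber
\end{align}
so that $\bar m_{n,\text{opt}}=\inf_{\pi_n\in\Pi_{n,1}}\mathcal{E}(\pi_n)/\mathcal{T}(\pi_n)$ and (up to the common $1/t$ normalization, which cancels in \eqref{relaxed_MDP}) $h(k)=\inf_{\pi_n\in\Pi_{n,1}}\bigl(\mathcal{E}(\pi_n)-k\,\mathcal{T}(\pi_n)\bigr)$. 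The regenerative-process assumptions imposed on $\Pi_{n,1}$, together with the renewal--reward theorem, guarantee that both limits exist and are finite for every $\pi_n$; and the standing assumption $0<\mathbb{E}[Y_{n,i}]<\infty$ gives $0<\mathbb{E}[Y_{n,i}]\le\mathcal{T}(\pi_n)<\infty$, so that $\mathcal{T}(\pi_n)$ is bounded away from $0$ \emph{uniformly} in $\pi_n$.

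The engine of the proof is the trivial identity $\mathcal{E}(\pi_n)-k\,\mathcal{T}(\pi_n)=\bigl(\mathcal{E}(\pi_n)/\mathcal{T}(\pi_n)-k\bigr)\mathcal{T}(\pi_n)$ combined with $\mathcal{T}(\pi_n)>0$: the sign of $\mathcal{E}(\pi_n)-k\,\mathcal{T}(\pi_n)$ matches the sign of $\mathcal{E}(\pi_n)/\mathcal{T}(\pi_n)-k$. For part (a), I would take infima over $\pi_n$. If $\bar m_{n,\text{opt}}>k$, then $\mathcal{E}(\pi_n)/\mathcal{T}(\pi_n)-k\ge\bar m_{n,\text{opt}}-k>0$ for all $\pi_n$, and multiplying by $\mathcal{T}(\pi_n)\ge\mathbb{E}[Y_{n,i}]>0$ yields $\mathcal{E}(\pi_n)-k\,\mathcal{T}(\pi_n)\ge(\bar m_{n,\text{opt}}-k)\mathbb{E}[Y_{n,i}]>0$, hence $h(k)>0$. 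If instead $\bar m_{n,\text{opt}}\le k$, I would take a minimizing sequence $\pi_n^{(j)}$ with $\mathcal{E}(\pi_n^{(j)})/\mathcal{T}(\pi_n^{(j)})\to\bar m_{n,\text{opt}}$; for the terms whose ratio is $\le k$ the product $\mathcal{E}(\pi_n^{(j)})-k\,\mathcal{T}(\pi_n^{(j)})$ is $\le 0$, so $h(k)\le 0$. Running the same comparison with the relations $<$, $=$, $>$ in place of the above gives the full three-way equivalence $\bar m_{n,\text{opt}}\gtreqqless k \iff h(k)\gtreqqless 0$; the boundary case $\bar m_{n,\text{opt}}=k$ is precisely where the minimizing-sequence argument is needed, since the infimum in \eqref{relaxed_MDP} need not be attained a priori.

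For part (b), assume $h(k)=0$; by part (a) this forces $\bar m_{n,\text{opt}}=k$. If $\pi_n^\star$ solves \eqref{relaxed_MDP}, i.e.\ $\mathcal{E}(\pi_n^\star)/\mathcal{T}(\pi_n^\star)=k$, then $\mathcal{E}(\pi_n^\star)-k\,\mathcal{T}(\pi_n^\star)=0=h(k)$, so $\pi_n^\star$ also solves \eqref{relaxed_MDP_1}. Conversely, if $\pi_n^\circ$ solves \eqref{relaxed_MDP_1}, then $\mathcal{E}(\pi_n^\circ)-k\,\mathcal{T}(\pi_n^\circ)=0$, and dividing by $\mathcal{T}(\pi_n^\circ)>0$ gives $\mathcal{E}(\pi_n^\circ)/\mathcal{T}(\pi_n^\circ)=k=\bar m_{n,\text{opt}}$, so $\pi_n^\circ$ solves \eqref{relaxed_MDP}. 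Hence the two problems have the same optimal set. I expect the elementary algebra to be routine; the hard part will be the measure-theoretic bookkeeping that legitimizes it, namely verifying that the time-average limits $\mathcal{E}(\pi_n)$ and $\mathcal{T}(\pi_n)$ genuinely exist over all of $\Pi_{n,1}$ (this is exactly what the regenerative restriction and the finite-mean-cycle hypotheses buy, via renewal--reward), and that ``infimum of the sign equals sign of the infimum'' is valid — which reduces to the uniform lower bound $\mathcal{T}(\pi_n)\ge\mathbb{E}[Y_{n,i}]>0$. Since \eqref{relaxed_MDP} and \eqref{relaxed_MDP_1} are posed over the identical policy class, no issue of mismatched feasible sets arises, and the argument parallels the proofs in \cite{Wiener_TIT, Ornee_TON}.
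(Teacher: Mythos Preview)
The paper does not prove this lemma itself; it is stated with citations to \cite{Ornee_TON,Wiener_TIT} and used as a black box. Your Dinkelbach reduction is exactly the argument those references contain, so your approach matches the intended proof.

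One point to tighten: your handling of the equality case $\bar m_{n,\text{opt}}=k$ does not quite close. If the infimum in \eqref{relaxed_MDP} is not attained, your minimizing sequence may have ratio strictly above $k$ for every $j$, so there are no ``terms whose ratio is $\le k$'', and $(\mathcal{E}(\pi_n^{(j)})/\mathcal{T}(\pi_n^{(j)})-k)\,\mathcal{T}(\pi_n^{(j)})$ need not tend to zero because $\mathcal{T}$ is unbounded above (the waiting times $Z_{n,i}$ are unconstrained in $\Pi_{n,1}$). One can in fact build abstract fractional programs with $\bar m=k$ but $h(k)>0$ when the denominator is unbounded, so the uniform \emph{lower} bound $\mathcal{T}(\pi_n)\ge\mathbb{E}[Y_{n,i}]$ that you invoke is not sufficient on its own. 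In the present paper the gap is harmless: the free-boundary analysis that follows (Theorem~\ref{thm_optimal_stopping} and Corollary~\ref{coro_stop}) shows the infimum in \eqref{relaxed_MDP_1} is attained by a threshold policy for every $k$, after which your part-(b) argument runs verbatim and simultaneously settles the equality case of (a).
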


Hence, the solution to \eqref{relaxed_MDP} can be obtained by solving \eqref{relaxed_MDP_1} and finding $k = \bar m_{n, \text{opt}}$ for which $h(\bar m_{n, \text{opt}}) = 0$.


Define 
\begin{align} \label{beta}
\beta_n = \bar m_{n, \text{opt}}.
\end{align}
In this sequel, we need to introduce the following lemma.
\begin{lemma} \label{preliminary_lemma}
For any $\beta_n \geq 0$, it holds that
\begin{align} \label{decomposed_MDP}
& \mathbb{E} \bigg[\!\!\int_{Y_{n,i}}^{Y_{n,i} + Z_{n,i} + Y_{n, i+1}} \!\!\!\!\!\!\!\!\!\!\!w_n \varepsilon_n ^2 \!(s) ds\! +\! \lambda Y_{n,i+1} - \beta_n (Y_{n,i} + Z_{n,i}) \bigg] \nonumber\\
=& \mathbb{E} \bigg[\!\int_{Y_{n,i}}^{Y_{n,i} + Z_{n,i} + Y_{n, i+1}} \!\!\!\!\!\!\!\!\!\!(w_n \varepsilon_n ^2 (s) - \beta_n) ds + w_n \gamma_n O^2 _{n, Y_{n,i} + Z_{n,i}}\!\bigg] \nonumber\\
& + w_n \frac{\sigma_n ^2}{2 \theta_n}(\mathbb{E} [Y_{n,i}] - \gamma_n) + \lambda \mathbb{E} [Y_{n,i+1}] - \beta_n \mathbb{E} [Y_{n,i}],
\end{align}
where $O_{n, Y_{n,i} + Z_{n,i}}$ can be obtained from \eqref{OU_shifted}, and $\gamma_n$ is a constant defined as
\begin{align} \label{gamma}
{\gamma}_n = \frac{1}{2 \theta_n} \mathbb{E} [1 - e^{-2 \theta_n Y_{n,i}}].
\end{align}
\end{lemma}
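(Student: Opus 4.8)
The plan is to reduce the claimed identity to a single second-moment computation for the shifted process $O_{n,\cdot}$ over a window whose left endpoint is a stopping time and whose width is an independent service time, and then to evaluate that expectation by the strong Markov property.

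First I would use the distributional identification recorded just after \eqref{OU_shifted}: on the $i$-th cycle, i.e.\ for $s\in[Y_{n,i},\,Y_{n,i}+Z_{n,i}+Y_{n,i+1}]$ measured from the sampling instant $S_{n,i}$, the error $\varepsilon_n(s)$ has the same law as $O_{n,s}$ (started from $O_{n,0}=0$), and the next service time $Y_{n,i+1}$ is independent of this process and has the same law as $Y_{n,i}$. Hence every $\varepsilon_n(\cdot)$ appearing inside an expectation in \eqref{decomposed_MDP} may be replaced by $O_{n,\cdot}$. Using linearity, cancelling the term $\lambda\,\mathbb{E}[Y_{n,i+1}]$ that occurs verbatim on both sides, writing $\mathbb{E}[Y_{n,i}+Z_{n,i}]=\mathbb{E}[Y_{n,i}]+\mathbb{E}[Z_{n,i}]$, and splitting the integration interval at the renewed sampling instant $s=Y_{n,i}+Z_{n,i}$ (so the portion over $[Y_{n,i},\,Y_{n,i}+Z_{n,i}]$ matches verbatim and the leftover $-\beta_n$ contributes $-\beta_n\mathbb{E}[Z_{n,i}]$), the identity \eqref{decomposed_MDP} collapses to the single claim
\begin{align*}
& \mathbb{E}\!\left[\int_{Y_{n,i}+Z_{n,i}}^{Y_{n,i}+Z_{n,i}+Y_{n,i+1}} w_n\,O_{n,s}^2\, ds\right] \\
& \qquad = w_n\,\gamma_n\,\mathbb{E}\!\left[O_{n,\,Y_{n,i}+Z_{n,i}}^{2}\right] + w_n\,\frac{\sigma_n^2}{2\theta_n}\big(\mathbb{E}[Y_{n,i}]-\gamma_n\big),
\end{align*}
which is the heart of the lemma: it says that the expected squared error accumulated while the $(i{+}1)$-th sample is in transit is an affine function of the squared error left at the new sampling instant, with a deterministic intercept.

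To prove this, put $a=Y_{n,i}+Z_{n,i}$ and note that $a$ is a stopping time of the filtration $\{\mathcal{F}_{n,t}^{+},\,t\ge 0\}$ while $Y_{n,i+1}$ is independent of $\mathcal{F}_{n,a}^{+}$. Conditioning on $\mathcal{F}_{n,a}^{+}$ and invoking the strong Markov property of the Gauss--Markov process (equivalently, the independent-increments structure of the time-changed Brownian motion in \eqref{OU_shifted}), one obtains for $s\ge a$ the elementary second-moment formula
\begin{align*}
\mathbb{E}\big[O_{n,s}^2 \mid \mathcal{F}_{n,a}^{+}\big] &= O_{n,a}^2\,e^{-2\theta_n(s-a)} + \frac{\sigma_n^2}{2\theta_n}\big(1-e^{-2\theta_n(s-a)}\big),
\end{align*}
valid for every $\theta_n$. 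Substituting $u=s-a$, integrating over $u\in[0,Y_{n,i+1}]$, then taking expectation first over $Y_{n,i+1}$ (using $\mathbb{E}[1-e^{-2\theta_n Y_{n,i}}]=2\theta_n\gamma_n$, so that $\mathbb{E}\big[(1-e^{-2\theta_n Y_{n,i+1}})/(2\theta_n)\big]=\gamma_n$) and then over $\mathcal{F}_{n,a}^{+}$, the $O_{n,a}^2$ term picks up exactly the factor $\gamma_n$ and the remaining pieces collapse exactly into $\frac{\sigma_n^2}{2\theta_n}(\mathbb{E}[Y_{n,i}]-\gamma_n)$; multiplying by $w_n$ yields the displayed equation. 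The interchanges of expectation and integral are legitimate because $\mathbb{E}[Y_{n,i}]<\infty$ and because of the regenerative/finite-moment restrictions placed on the policy class $\Pi_{n,1}$.

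I expect the main obstacle to be the bookkeeping around the stopping time $a$: one must justify that $(O_{n,a},a)$ is independent of $Y_{n,i+1}$, that the conditional second-moment formula holds at the random time $a$ (strong Markov rather than ordinary Markov), and that the relevant moments are finite so the rearrangements and Fubini steps are valid. Once these are in place the rest is a short calculus exercise, and the sample path of $W_n$ enters only through the scalar $O_{n,a}$, which is exactly why the answer is linear in $\mathbb{E}[O_{n,a}^2]$. The cases $\theta_n>0$ and $\theta_n=0$ go through by the identical argument — for $\theta_n=0$ the constant $\gamma_n$ degenerates to $\mathbb{E}[Y_{n,i}]$ and all formulas are read as the $\theta_n\to0$ limit — so, as in the surrounding proof of Proposition~\ref{opt_sampler_theorem}, it suffices to carry out the details for $\theta_n<0$.
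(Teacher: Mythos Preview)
Your proposal is correct and arrives at the same reduction as the paper: both recognize that everything hinges on evaluating
\[
\mathbb{E}\!\left[\int_{a}^{a+Y_{n,i+1}} w_n\,O_{n,s}^{2}\,ds\right]
= w_n\gamma_n\,\mathbb{E}\!\left[O_{n,a}^{2}\right] + w_n\frac{\sigma_n^{2}}{2\theta_n}\big(\mathbb{E}[Y_{n,i}]-\gamma_n\big),
\qquad a=Y_{n,i}+Z_{n,i},
\]
which is exactly the paper's identity \eqref{eq_13}. The difference is only in how this key expectation is computed. The paper first applies the Dynkin-type identity of Lemma~\ref{lem_stop}, $\mathbb{E}\big[\int_0^{\tau}O_{n,s}^2\,ds\big]=\frac{\sigma_n^2}{2\theta_n}\mathbb{E}[\tau]-\frac{1}{2\theta_n}\mathbb{E}[O_{n,\tau}^2]$, to convert the time integral into a difference of endpoint squares, and then evaluates $\mathbb{E}[O_{n,a+Y_{n,i+1}}^2-O_{n,a}^2]$ by squaring the explicit OU representation and exploiting independence of $Y_{n,i+1}$. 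You instead write down the pointwise conditional second moment $\mathbb{E}[O_{n,s}^2\mid\mathcal{F}_{n,a}^{+}]=O_{n,a}^2e^{-2\theta_n(s-a)}+\frac{\sigma_n^2}{2\theta_n}(1-e^{-2\theta_n(s-a)})$, integrate in $s$, and average over $Y_{n,i+1}$. Your route is a bit more elementary here: it bypasses the martingale lemma for this particular step and makes the affine dependence on $\mathbb{E}[O_{n,a}^2]$ visible immediately. The paper's route, on the other hand, establishes Lemma~\ref{lem_stop} once and reuses it repeatedly elsewhere (e.g.\ in Lemma~\ref{lemma1}), so its extra machinery is amortized. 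One small caution: your verbal bookkeeping of the $\beta_n$ terms in the reduction is a little loose; to make it airtight, note that the right-hand integral in \eqref{decomposed_MDP} should really be read with upper limit $Y_{n,i}+Z_{n,i}$ (as in the paper's \eqref{first}), after which the cancellation you describe goes through cleanly.
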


\begin{proof}
We can write \eqref{decomposed_MDP} as
\begin{align} \label{eq_2}
& \mathbb{E} \bigg[\!\!\int_{Y_{n,i}}^{Y_{n,i} + Z_{n,i} + Y_{n, i+1}} \!\!\!\!\!\!\!\!\!\!w_n \varepsilon_n ^2 (s) ds + \lambda Y_{n,i+1} - \beta_n (Y_{n,i} + Z_{n,i}) \bigg] \nonumber\\
=& \mathbb{E} \bigg[\!\int_{Y_{n,i}}^{Y_{n,i} + Z_{n,i}} \!\!\!\!\!\!\!\!\!\!w_n \varepsilon_n ^2 (s) ds\!\bigg] \!\!+\!\! \mathbb{E} \bigg[\int_{Y_{n,i} + Z_{n,i}}^{Y_{n,i} + Z_{n,i} + Y_{n, i+1}} w_n \varepsilon_n ^2 (s) ds\bigg] \nonumber\\
& + \lambda \mathbb{E} [Y_{n,i+1}] - \beta_n \mathbb{E} (Y_{n,i} + Z_{n,i}).
\end{align}
In order to prove Lemma \ref{preliminary_lemma}, we need to compute the second term in \eqref{eq_2}. The Gauss-Markov process $O_{n,t}$ in \eqref{OU_shifted} is the solution to the following SDE
\begin{align}
dO_{n,t} = - \theta_n O_{n,t} dt + \sigma_n dW_{n,t}. 
\end{align}
In addition, the infinitesimal generator of $O_{n,t}$ is \cite[Eq. A1.22]{Borodin1996}
\begin{align}\label{eq_generator}
  \mathcal{G}= -\theta_n u \frac{\partial} {\partial u} + \frac{\sigma_n ^2}{2}\frac{\partial^2} {\partial u^2}.
\end{align}

Now, let us introduce the following lemma which is more general than Lemma 5 in \cite{Ornee_TON} and works for any OU process irrespective of the signal structure, i.e., the value of parameter $\theta_n$. By using Dynkin's formula and optional stopping theorem, we get the following useful lemma.

\begin{lemma}\label{lem_stop}
\cite{Ornee_TON} Let $\tau \geq0$ be a stopping time of the OU process $O_{n,t}$ with $ \EE\left[ O_\tau ^2 \right]<\infty$, 
then
\begin{align}
\mathbb{E}\left[\int_0^\tau O_{n,t} ^2 dt\right] & = \mathbb{E}\left[\frac{\sigma_n ^2}{2\theta_n}\tau - \frac{1}{2\theta_n}O_{n, \tau} ^2\right].\label{eq_stop}
\end{align}
If, in addition, $\tau$ is the first exit time of a bounded set, then
\begin{align}\label{eq_stop11}
&\mathbb{E}\left[\tau\right] = \mathbb{E} [R_{n,1} (O_{n, \tau})] ,\\
&\mathbb{E}\left[\int_0^\tau O_{n,t} ^2 dt\right]  = \mathbb{E} [R_{n,2} (O_{n, \tau})],\label{eq_stop12}
\end{align}
where $R_{n,1} (\cdot)$ and $R_{n,2} (\cdot)$ are defined in \eqref{eq_R_1} and \eqref{eq_R_2}, respectively. 
\end{lemma}

\begin{proof}
We first prove \eqref{eq_stop}. It is known that the OU process $O_{n,t}$ is a Feller process \cite[Section 5.5]{liggett2010continuous}. By using a property of Feller process in   \cite[Theorem 3.32]{liggett2010continuous}, we get that
\begin{align}
&O_{n,t} ^2 -  \int_0^t \mathcal{G} (O_{n,s} ^2 ) ds\nonumber\\
=&O_{n,t} ^2 - \int_0^t (- \theta_n O_{n,s} 2 O_{n,s} + \sigma_n ^2) ds\nonumber\\
=& O_{n,t} ^2 - \sigma_n ^2 t +  2 \theta_n \int_0^t  O_{n,s} ^2 ds
\end{align}
is a martingale.
According to \cite{Durrettbook10}, the minimum of two stopping times is a stopping time and constant times are stopping times. Hence, $t{\wedge}{\tau}$ is a bounded stopping time for every $t\in[{0,\infty})$, where $x{\wedge}y=\min\{x, y\}$. Then, by \cite[Theorem 8.5.1]{Durrettbook10}, for all $t\in[{0,\infty})$
\begin{align}\label{eq_app_lem4_1}
&\mathbb{E}\left[\int_{0}^{t{\wedge}{\tau}}{O_{n,s} ^2}ds\right] =\mathbb{E}\bigg[\frac{{\sigma_n}^2}{2\theta_n}(t{\wedge}{\tau})\bigg]-\mathbb{E}\bigg[\frac{1}{2\theta_n}O_{n, t{\wedge}{\tau}}^2\bigg].
\end{align}
Because $\mathbb{E}\left[\int_{0}^{t{\wedge}{\tau}}O_{n,s} ^{2}ds\right]$ and $\mathbb{E}[t{\wedge}{\tau}]$ are positive and increasing with respect to $t$, by using the monotone convergence theorem \cite[Theorem 1.5.5]{Durrettbook10}, we get
\begin{align}\label{eq_app_lem4_4}
\lim_{t\to\infty} \mathbb{E}\left[\int_{0}^{t{\wedge}{\tau}}O_{n,s} ^2ds\right]&=\mathbb{E}\left[\int_{0}^{\tau}O_{n,s} ^2ds\right],\\
\lim_{t\to\infty}\mathbb{E}[(t{\wedge}{\tau})]&=\mathbb{E}[{\tau}] . \label{eq_app_lem4_3}
\end{align}
In addition, according to Doob's maximal inequality \cite{Durrettbook10}, we get that
\begin{align} \label{eq_doob}
\EE \left[\sup_{0\leq s\leq \tau}O_{n,s} ^2\right] \leq 4 \mathbb{E} [O_{n, \tau} ^2] < \infty.
\end{align}
Because $0 \leq O_{n, t\wedge \tau}^2 \leq \sup_{0\leq s\leq \tau} O_{n, s}^2$ for all $t$ and \eqref{eq_doob} implies that $\sup_{0\leq s\leq \tau} O_{n, s}^2$ is integratable, by invoking the dominated convergence theorem \cite[Theorem 1.5.6]{Durrettbook10}, we have
\begin{align}\label{eq_goal}
&\lim_{t\to\infty}\mathbb{E}\left[O_{n, t{\wedge}{\tau}}^2\right]=\mathbb{E}\left[{O_{n, \tau}^2}\right].
\end{align}
Combining \eqref{eq_app_lem4_4}-\eqref{eq_goal}, \eqref{eq_stop} is proven. 

Next, we prove \eqref{eq_stop11} and \eqref{eq_stop12}. By using the solution of the ODE in \eqref{eq_free1}, one can show that $R_{n,1} (\varepsilon)$ in \eqref{eq_R_1} is the solution to the following ODE
\begin{align}
\frac{\sigma_n ^2}{2} R''_{n,1} (\varepsilon) - {\theta_n} \varepsilon R'_{n,1} (\varepsilon) =1,
\end{align}
and $R_{n,2} (\varepsilon)$ in \eqref{eq_R_2} is the solution to the following ODE
\begin{align}
\frac{\sigma_n ^2}{2} R''_{n,2} (\varepsilon) - {\theta_n} \varepsilon R'_{n,2} (\varepsilon) =\varepsilon^2.
\end{align}
In addition, $R_{n,1} (\varepsilon)$ and $R_{n,2} (\varepsilon)$ are twice continuously differentiable.
According to Dynkin's formula in \cite[Theorem 7.4.1 and the remark afterwards]{Bernt2000}, because the initial value of $O_{n,t}$ is $O_{n,0}=0$, if $\tau$ is the first exit time of a bounded set, then
\begin{align}
\!\!\!\EE_0 [R_{n,1} (O_{n,\tau})] &\!\!=\!\! R_{n,1} (0) \!+\! \EE_0 \left[\int_0^\tau 1 ds\right] = R_{n,1} (0) + \EE_0 [\tau],\!\!\!\\
\!\!\!\EE_0 [R_{n,2} (O_{n,\tau})] &\!\!=\!\! R_{n,2} (0) + \EE_0 \left[\int_0^\tau O_{n,s}^2 ds\right].\!\!\!
\end{align}
Because $R_{n,1} (0)=R_{n,2} (0)  = 0$, \eqref{eq_stop11} and \eqref{eq_stop12} follow. 
This completes the proof.
\end{proof}

By using Lemma \ref{lem_stop}, we can write
\begin{align} \label{eq_12}
& \mathbb{E} \bigg[\int_{Y_{n,i} + Z_{n,i}}^{Y_{n,i} + Z_{n,i} + Y_{n, i+1}} w_n \varepsilon_n ^2 (s) ds\bigg] \nonumber\\
& =\!\! w_n \frac{\sigma_n^2}{2 \theta_n} \mathbb{E} [Y_{n, i+1}]\! -\!  w_n \frac{1}{2 
\theta_n} \mathbb{E} \big[O^2_{n, Y_{n,i} + Z_{n,i} + Y_{n, i+1}} \! \nonumber\\
& ~~~~~~~~~~~~~~~~~~~~~~~~~~~~~~~~~~~~- \! O^2_{n, Y_{n,i} + Z_{n,i}}\big],
\end{align}
where
\begin{align}
&\EE \left[O_{n, Y_{n,i}+Z_{n,i}+Y_{n,i+1}}^2 - O_{n, Y_{n,i}+Z_{n,i}}^2\right]\nonumber\\
=&\EE \!\!\left[\!\! \left(\!\!O_{n, Y_{n,i}\!+\!Z_{n,i}} e^{-\theta_n\! Y_{n,i+1} } \!\!+\!\! \frac{\sigma_n}{\sqrt{\!-2\theta_n}} \!e^{-{\theta_n}Y_{n,i+1}} \!W_{\!\!1\!-e^{2{\theta_n}Y_{n,i+1}}}\!\!\right)^2\right.\nonumber\\
&~~~-O_{n, Y_{n,i}+Z_{n,i}}^2\bigg],\!\!\label{eq_lemma5_1}
\end{align}
\begin{align}
=& \EE\!\! \left[\!O_{\!n, Y_{n,i}\!+Z_{n,i}}^2 \!(\!e^{\!-2{\theta_n}\!Y_{n,i+1}} \!\!-\!\! 1\!) \!\!-\!\! \frac{\sigma_n ^2}{{2\theta_n}} \!e^{\!-2{\theta_n}\!Y_{n,i+1}} \!W_{\!\!n, \!1\!- e^{2{\theta_n}\!Y_{n,i+1}}}^2 \!\right] \nonumber\\
& \!+\!\!\EE \!\!\left[\!2O_{\!n, Y_{n,i}+Z_{n,i}}\!e^{\!-{\theta_n}\!Y_{n,i+1}} \!\frac{\sigma_n}{\sqrt{\!-\!2\theta_n}} \!e^{\!-{\theta_n}\!Y_{n,i+1}} \!W_{\!\!n, 1\!-e^{2{\theta_n}\!Y_{n,i+1}}} \!\!\right].
\end{align}
Because $Y_{n,i+1}$ is independent of $O_{n, Y_{n,i}+Z_{n,i}}$ and $W_{n,t}$, we have 
\begin{align}
 &\EE \left[O_{n, Y_{n,i}+Z_{n,i}}^2 (e^{-2{\theta_n}Y_{n,i+1}} - 1)\right] \nonumber\\
 =& \EE \left[O_{n, Y_{n,i}+Z_{n,i}}^2\right] \EE \left[e^{-2{\theta_n}Y_{n,i+1}} - 1\right], 
\end{align}
 and
\begin{align}
& \EE \!\!\left[\!2O_{\!n, Y_{n,i}\!+Z_{n,i}}\!e^{\!-{\theta_n}\!Y_{\!n,i+1}} \!\frac{\sigma_n}{\sqrt{\!-\!2\theta_n}} e^{\!-{\theta_n}\!Y_{\!n,i+1}} \!W_{\!\!n, 1\!-\!e^{2{\theta_n}\!Y_{n,i+1}}} \!\right] \nonumber\\
=& \EE \!\!\left[\!2O_{n, \!Y_{n,i}\!+\!Z_{n,i}}\!\right]\!\!\EE\!\!\left[\!e^{\!-{\theta_n}\!Y_{n,i+1}} \!\frac{\sigma_n}{\sqrt{\!-\!2\theta_n}} e^{\!-{\theta_n}\!Y_{n,i+1}} \!W_{\!\!n, 1\!-\!e^{2{\theta_n}Y_{n,i+1}}} \!\right]\nonumber\\
\overset{(a)}{=}& \EE \left[2O_{n, Y_{n,i}+Z_{n,i}}\right] \nonumber\\
& \!\EE\!\!\left[\!\EE \!\left[\!e^{-{\theta_n}Y_{n,i+1}} \!\frac{\sigma_n}{\sqrt{-2\theta_n}} e^{-{\theta}Y_{n,i+1}} W_{n, 1-e^{2{\theta_n}Y_{n,i+1}}} \bigg| Y_{n,i+1}\right]\right]\!\!,
\end{align}
where Step (a) is  due to the law of iterated expectations.
Because $\EE[W_{n,t}]=0 $ for all constant $t\geq 0$, it holds for all realizations of $Y_{n,i+1}$ that
\begin{align}
\EE \!\left[\!e^{-{\theta_n}Y_{n,i+1}} \frac{\sigma_n}{\sqrt{-2\theta_n}} e^{-{\theta_n}Y_{n,i+1}} W_{n, 1-e^{2{\theta_n}Y_{n,i+1}}} \bigg| Y_{n,i+1}\!\right] \!=\!0.
\end{align}
Hence, 
\begin{align}
\EE \!\left[\!2O_{n, Y_{n,i}+Z_{n,i}}e^{-{\theta_n}\!Y_{n,i+1}} \!\frac{\sigma_n}{\sqrt{\!-\!2\theta_n}}\! e^{\!-{\theta_n}\!Y_{n,i+1}} \!W_{\!\!n, 1-e^{2{\theta_n}\!Y_{n,i+1}}} \!\!\right]  \!\!\!=\!\! 0.
\end{align}
In addition,
\begin{align}
 &\EE \left[\frac{\sigma_n^2}{{2\theta_n}} e^{-2{\theta_n}Y_{n,i+1}} W_{n, 1-e^{2{\theta_n}Y_{n,i+1}}}^2 \right] \nonumber\\
\overset{(a)}{=} &\frac{\sigma_n ^2}{{2\theta_n}} \EE \left[\EE \left[e^{-2{\theta_n}Y_{n,i+1}} W_{n, 1-e^{2{\theta_n}Y_{n,i+1}}}^2 \bigg| Y_{n,i+1}\right]\right] \nonumber\\
\overset{(b)}{=}&  \frac{\sigma_n ^2}{{2\theta_n}}\EE \left[e^{-2{\theta_n}Y_{n,i+1}}-1 \right],
\end{align}
where Step (a) is due to the law of iterated expectations and Step (b) is due to $\EE [W_{n,t} ^2] = t$ for all constant $t\geq 0$. Hence,
\begin{align} \label{eq_11}
& \mathbb{E} \big[O^2_{n, Y_{n,i} + Z_{n,i} + Y_{n, i+1}} - O^2_{n, Y_{n,i} + Z_{n,i}}\big] =\nonumber\\
& \mathbb{E}\! \big[O^2_{n, Y_{n,i} + Z_{n,i}}\big] \mathbb{E} [e^{-2 \theta_n Y_{n, i+1}} -1] + \frac{\sigma_n^2}{2 \theta_n} \mathbb{E} [1-e^{-2 \theta_n Y_{n, i+1}}].
\end{align}
By using \eqref{eq_11} in \eqref{eq_12}, we get that
\begin{align} \label{eq_13}
& \mathbb{E} \bigg[\int_{Y_{n,i} + Z_{n,i}}^{Y_{n,i} + Z_{n,i} + Y_{n, i+1}} w_n \varepsilon_n ^2 (s) ds\bigg] \nonumber\\
= & w_n\! \frac{\sigma_n^2}{2 \theta_n} \mathbb{E} [Y_{n, i+1}] \!-\! w_n \!\frac{1}{2 \theta_n}\!  \mathbb{E}\! \big[O^2_{Y_{n,i} + Z_{n,i}}\big] \mathbb{E} [e^{-2 \theta_n \!Y_{n, i+1}} -1\!] \nonumber\\
& - w_n\! \frac{\sigma_n ^2}{4 \theta_n ^2} \mathbb{E} [1-e^{-2 \theta_n Y_{n, i+1}}], \nonumber\\
= & w_n \frac{\sigma_n^2}{2 \theta_n} \big\{ \mathbb{E} [Y_{n, i+1}] - \gamma_n\big\} + w_n \gamma_n \mathbb{E} \left[O^2_{Y_{n,i} + Z_{n,i}}\right],
\end{align}
where $\gamma_n$ is defined in \eqref{gamma}.
Substituting \eqref{eq_13} into \eqref{eq_2} yields
\begin{align} \label{first}
& \mathbb{E} \bigg[\int_{Y_{n,i}}^{Y_{n,i} + Z_{n,i} + Y_{n, i+1}} \!\!\!\!\!w_n \varepsilon_n ^2 (s) ds + \lambda Y_{n,i+1} - \beta_n [Y_{n,i} + Z_{n,i}] \bigg] \nonumber\\
=& \mathbb{E} \bigg[\int_{Y_{n,i}}^{Y_{n,i} + Z_{n,i}} w_n \varepsilon_n ^2 (s) ds\bigg] + w_n \frac{\sigma_n^2}{2 \theta_n} \big\{ \mathbb{E} [Y_{n, i+1}] - \gamma_n\big\} \nonumber\\
& + w_n \gamma_n \mathbb{E} \left[O^2_{Y_{n,i} + Z_{n,i}}\right] + \lambda \mathbb{E} [Y_{n,i+1}] - \beta_n \mathbb{E} [Y_{n,i} + Z_{n,i}],
\end{align}
\begin{align}
=& \mathbb{E} \bigg[\int_{Y_{n,i}}^{Y_{n,i} + Z_{n,i}} (w_n \varepsilon_n ^2 (s) - \beta_n) ds + w_n \gamma_n O^2_{Y_{n,i} + Z_{n,i}}\bigg] \nonumber\\
& +  w_n \frac{\sigma_n^2}{2 \theta_n} \big\{ \mathbb{E} [Y_{n, i+1}] - \gamma_n\big\} - \beta_n \mathbb{E} [Y_{n,i}] + \lambda \mathbb{E} [Y_{n,i+1}],
\end{align}
from which \eqref{decomposed_MDP} follows.
\end{proof}

For any $s\geq 0$, define the $\sigma$-fields $\mathcal{F}^{s}_{n,t} = \sigma(O_{n, s+r}-O_{n, s}: r\in[0, t])$ and  the right-continuous filtration $\mathcal{F}_{n,t} ^{s+} = \cap_{r>t}\mathcal{F}_{n,r} ^{s}$. Then, $\{\mathcal{F}^{s+}_{n,t} ,t\geq0\}$ is the {filtration}  of the {time-shifted OU process} $\{O_{n, s+t}-O_{n, s},t\in[0,\infty)\}$. Define ${\mathfrak{M}}_{n,s}$ as the set of integrable stopping times of $\{O_{n, s+t}-O_{n, s},t\in[0,\infty)\}$, i.e.,
\begin{align}
\mathfrak{M}_{n, s} = \{\tau \geq 0:  \{\tau\leq t\} \in \mathcal{F}^{s+}_{n,t}, \mathbb{E}\left[\tau\right]<\infty\}.
\end{align}
By using  a sufficient statistic of \eqref{relaxed_MDP_1},  
we can  obtain
\begin{lemma}\label{thm_solution_form}
An optimal solution  $(Z_{n,0}, Z_{n,1},\ldots)$ 
to 
\eqref{relaxed_MDP_1} 
satisfies 
\begin{align}\label{eq_opt_stopping}
\inf_{Z_{n,i}\in \mathfrak{M}_{Y_{n,i}}} \mathbb{E}\!\bigg[& \int_{Y_{n,i}}^{Y_{n,i}+Z_{n,i}} (w_n \varepsilon_n ^2 (s) -\beta_n)ds \nonumber\\
& + \gamma_n  O_{n, Y_{n,i}+Z_{n,i}}^2\bigg|  O_{n, Y_{n,i}},Y_{n,i} \bigg],
\end{align}
where  $\beta_n \geq 0$ and $\gamma_n \geq0$ are defined in \eqref{beta} and \eqref{gamma}, respectively.
\end{lemma}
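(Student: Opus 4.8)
\textbf{Proof proposal for Lemma \ref{thm_solution_form}.}
The plan is to reduce the infinite-horizon average-type problem \eqref{relaxed_MDP_1} (evaluated at the fixed point $k=\beta_n=\bar m_{n,\text{opt}}$ for which $h(\beta_n)=0$, cf.\ Lemma \ref{lem_ratio_to_minus} and \eqref{beta}) to a sequence of statistically identical per-sample optimal stopping problems, each of which is exactly \eqref{eq_opt_stopping}. Two facts drive the argument: Lemma \ref{preliminary_lemma}, which rewrites the $i$-th summand of the objective so that all cross-cycle coupling collapses into a policy-independent constant plus a boundary term $w_n\gamma_n O_{n,Y_{n,i}+Z_{n,i}}^2$; and the regenerative structure of $\Pi_{n,1}$ in \eqref{eq_policyspace}, together with the strong Markov property of $O_{n,t}$, which guarantees that the ``restart state'' of cycle $i$ depends on the history only through the fresh transmission time $Y_{n,i}$.

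First I would apply Lemma \ref{preliminary_lemma} (with the parameter identified as $\beta_n=k$) to the $i$-th term of the objective in \eqref{relaxed_MDP_1}, rewriting it as
\[
\mathbb{E}\!\left[\int_{Y_{n,i}}^{Y_{n,i}+Z_{n,i}}(w_n\varepsilon_n^2(s)-\beta_n)\,ds + w_n\gamma_n O_{n,Y_{n,i}+Z_{n,i}}^2\right] + c_{n,i},
\]
where $c_{n,i}=w_n\frac{\sigma_n^2}{2\theta_n}(\mathbb{E}[Y_{n,i}]-\gamma_n)+\lambda\mathbb{E}[Y_{n,i+1}]-\beta_n\mathbb{E}[Y_{n,i}]$ is a constant (the $Y_{n,i}$'s are i.i.d.\ and independent of the policy), so $\sum_i c_{n,i}$ drops out of the optimization. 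Next I would use that, within cycle $i$, the error process $\{\varepsilon_n(s):s\in[Y_{n,i},Y_{n,i}+Z_{n,i})\}$ (time measured from the $i$-th sampling instant) is by \eqref{err_new} and \eqref{OU_shifted} a time-shifted copy of $O_{n,\cdot}$ started at $O_{n,Y_{n,i}}\overset{d}{=}\varepsilon_n(D_{n,i})$, which is a function of $Y_{n,i}$ alone. Since $Z_{n,i}\in\mathfrak M_{Y_{n,i}}$ is a stopping time of the post-delivery filtration $\{\mathcal F^{Y_{n,i}+}_{n,t}\}$, conditioning on $(O_{n,Y_{n,i}},Y_{n,i})$ and applying the tower property turns the $i$-th term into $\mathbb{E}[\Phi_i(Z_{n,i})]$, where $\Phi_i(Z_{n,i})$ is the conditional expectation displayed in \eqref{eq_opt_stopping}. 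Because the choice of $Z_{n,i}$ affects only when cycle $i+1$ begins, not its (statistically identical) behaviour, selecting in every cycle the $Z_{n,i}$ that minimizes $\Phi_i(\cdot)$ pointwise over realizations of $(O_{n,Y_{n,i}},Y_{n,i})$ minimizes every summand at once, hence the limiting sum; this is precisely \eqref{eq_opt_stopping}.

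The step I expect to be the main obstacle is making rigorous the interchange ``$\inf$ over policies of $\lim_{t\to\infty}\sum_{i=1}^{t}\mathbb{E}[\Phi_i(Z_{n,i})]$ equals $\lim_{t\to\infty}\sum_{i=1}^{t}\inf\mathbb{E}[\Phi_i(Z_{n,i})]$'', i.e., that no gain is possible by coordinating the $Z_{n,i}$'s across cycles or by using a history-dependent, non-stationary rule. This needs the regenerative property of \eqref{eq_policyspace} (each cycle is statistically identical and $O_{n,Y_{n,i}}$ carries no information about the past), the existence of an admissible stopping time attaining the pointwise infimum of $\Phi_i(\cdot)$, and a check of the integrability conditions ($\mathbb{E}[O_{n,Y_{n,i}+Z_{n,i}}^2]<\infty$, etc.) under which the decomposition \eqref{decomposed_MDP} of Lemma \ref{preliminary_lemma} is valid for all admissible policies. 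Once these are established, the claimed reduction to the per-sample optimal stopping problem \eqref{eq_opt_stopping} follows.
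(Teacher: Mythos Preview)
Your proposal is correct and follows essentially the same route as the paper: apply Lemma~\ref{preliminary_lemma} to rewrite each cycle's contribution as a policy-dependent term plus a constant, then observe that $(Y_{n,i},O_{n,Y_{n,i}})$ is a sufficient statistic for the $i$-th decision so that the problem decouples into identical per-sample stopping problems~\eqref{eq_opt_stopping}. The paper's own argument is considerably terser (it simply asserts sufficiency and decomposition after invoking~\eqref{first}), whereas you spell out the tower-property step and flag the interchange of $\inf$ and the limiting sum as the point requiring care; that additional scrupulousness is appropriate but does not change the underlying strategy.
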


\begin{proof}
Because the $Y_{n,i}$'s are \emph{i.i.d.}, \eqref{first} is determined by the control decision $Z_{n,i}$ and the information $(Y_{n,i}, O_{n, Y_{n,i}})$. Hence, $(Y_{n,i}, O_{n, Y_{n,i}})$ is a \emph{sufficient statistic} for determining $Z_{n,i}$ in \eqref{relaxed_MDP_1}. Therefore, there exists an optimal policy $(Z_{n,0},Z_{n,1},\ldots)$ to \eqref{relaxed_MDP_1}, in which $Z_{n,i}$ is determined based on only  $(Y_{n,i}, O_{n, Y_{n,i}})$. 
By this, \eqref{relaxed_MDP_1} is decomposed into a sequence of per-sample MDPs, given by \eqref{eq_opt_stopping}. 
This completes the proof. 
\end{proof}


Next, we solve \eqref{eq_opt_stopping} by using free-boundary method for optimal stopping problems. Let 
consider an OU process $V_{n,t}$ with initial state $V_{n,0}=v_n$ and $\mu_n =0$. Define the $\sigma$-fields $\mathcal{F}^{V}_{n,t} = \sigma(V_{n, s}: s\in[0, t])$, $\mathcal{F}^{V+}_{n,t} = \cap_{r>t}\mathcal{F}_{n,r}^{V}$, and the {filtration} $\{\mathcal{F}^{V+}_{n,t},t\geq0\}$ associated to $\{V_{n,t},t\geq0\}$. Define ${\mathfrak{M}}_V$ as the set of integrable stopping times of $\{V_{n,t},t\in[0,\infty)\}$, i.e.,
\begin{align}
\mathfrak{M}_{V} = \{\tau \geq 0:  \{\tau\leq t\} \in \mathcal{F}^{V+}_{n,t}, \mathbb{E}\left[\tau\right]<\infty\}.
\end{align}

Our  goal is to solve the following optimal stopping problem for any given initial state $v_n \in \mathbb{R}$ and for any $\beta_n > 0$
\begin{align}\label{eq_stop_problem}
\!\sup_{\tau \in \mathfrak{M}_{V} } \mathbb{E}_{v_n} \left[-w_n \gamma_n  V_{n, \tau}^2 - \int_{0}^{\tau} \!\!\!(w_n V_{n, s}^2-\beta_n)ds\right],
\end{align}
where $\mathbb{E}_{v_n} [\cdot]$ is the conditional expectation for given initial state $V_{n,0} =v_n$, 
where the supremum is taken over all stopping times $\tau$ of $V_{n,t}$, and $\gamma_n$ is defined in \eqref{gamma}. In this subsection, we focus on the case that $\beta_n$ in \eqref{eq_stop_problem} satisfies $\frac{\sigma_n ^2}{2 \theta_n} \mathbb{E} [1 - e^{-2 \theta_n Y_{n,i}}] \leq {\beta_n} < {\infty}$. 

In order to solve \eqref{eq_stop_problem} for $\theta_n < 0$, we first find a candidate solution to \eqref{eq_stop_problem} by solving a free boundary problem; then we prove that the free boundary solution is indeed the value function of \eqref{eq_stop_problem}:

The general optimal stopping theory in Chapter I of \cite{Peskir2006} tells us that the following guess of the stopping time should be optimal for Problem \eqref{eq_stop_problem}: 
\begin{align}\label{eq_optimal_stopping123}
\tau_{*} = \inf\{t \geq 0: |V_{n, t}|\geq v_* \},
\end{align}
where $v_*\geq 0$ is the optimal stopping threshold to be found.
Observe that in this guess, the continuation region $(-v_*,v_*)$ is assumed symmetric around zero. This is because the  OU process is symmetric, i.e., the process $\{-V_{n,t},t\geq 0\}$ is also an OU process started at $-{V_{n,0}} = -v_n$. Similarly, we can also argue that the value function of problem \eqref{eq_stop_problem} should be even. According to \cite[Chapter 8]{Peskir2006}, and \cite[Chapter 10]{Bernt2000},  the value function 
and the optimal stopping threshold $v_*$ should satisfy the following free boundary  problem:
\begin{align}\label{eq_free1}
&\frac{\sigma_n ^2}{2} H'' _n (v_n) -\theta_n v_n H' _n (v_n) 
=  w_n v_n ^2 - \beta_n,~ ~~ v_n \in(-v_*,v_*),\\
&H_n(\pm v_*) =  -w_n \gamma_n  v_*^2, \label{eq_free2}\\
&H' _n(\pm v_*) = \mp 2 w_n \gamma_n  v_*.\label{eq_free3}
\end{align}
In this sequel, we solve \eqref{eq_free1} to find $H(v)$.

We need to use the following indefinite integrals to solve \eqref{eq_free1} that  can be obtained by \cite[Sec. 15.3.1, (Eq. 36)] {JEFFREY1995}, \cite[Sec. 3.478 (Eq. 3), 8.250 (Eq. 1,4)] {Math_table_book}. Let $\theta_n = -\rho_n$. 
\begin{align}
&\int\!\!\!\frac{2}{\sigma_n^2} w_n {e^{-{\frac{\theta_n}{\sigma_n^2}} v_n ^2} v_n ^2} dv \nonumber\\
=& \frac{{\sqrt{\pi}} w_n {\sigma_n}}{2{\theta_n}^{\frac{3}{2}}}\text{erf}\left(\frac{\sqrt{\theta_n}}{\sigma_n} v_n \right) - \frac{w_n v_n }{\theta_n} e^{-{\frac{\theta_n} {{\sigma_n}^2}} v_n ^2} + C_1,\nonumber\\
 =& \frac{{\sqrt{\pi}} w_n {\sigma_n}}{-2j \rho_n \sqrt{\rho_n}}\text{erf}\left(\frac{j\sqrt{\rho_n}}{\sigma_n} v_n \right) + \frac{w_n v_n }{\rho_n} e^{{\frac{\rho_n} {{\sigma_n}^2}} v_n ^2} + C_1,\\
 & \int \frac{2}{\sigma_n^2} {\beta_n} {e^{-{\frac{\theta_n}{\sigma_n^2}} v_n ^2}} dv \nonumber\\
 =& \frac{{\sqrt{\pi}}{\beta_n}}{{\sigma_n}{\sqrt{\theta_n}}} \text{erf}\left(\frac{\sqrt{\theta_n}}{\sigma_n} v_n \right) +C_2, \nonumber\\
 =& \frac{{\sqrt{\pi}}{\beta_n}}{j {\sigma_n}{\sqrt{\rho_n}}} \text{erf}\left(\frac{j \sqrt{\rho_n}}{\sigma_n} v_n \right) +C_2, \label{eq_app_eq_W3}
\end{align}
\begin{align}
& {\int \text{erf}{\bigg(\frac{\sqrt{\theta_n}}{\sigma_n} v_n \bigg)} e^{\frac{\theta_n}{{\sigma_n}^2}v_n ^2}  dv} \nonumber\\
=& \frac{\sigma_n}{\sqrt{\theta_n} \sqrt{\pi}} \frac{\theta_n}{{\sigma_n}^2} {v_n ^2} {}_2F_2\left(1,1;\frac{3}{2},2;\frac{\theta_n}{\sigma_n^2}v_n ^2\right) +C, \nonumber\\
=& -\frac{\sigma_n}{j \sqrt{\rho_n} \sqrt{\pi}} \frac{\rho_n}{{\sigma_n}^2} {v_n ^2} {}_2F_2\left(1,1;\frac{3}{2},2;-\frac{\rho_n}{\sigma_n^2}v_n ^2\right) +C, 
\end{align}
\begin{align}
& \int \bigg(\frac{{\sqrt{\pi}} w_n {\sigma_n}}{2{\theta_n}^{\frac{3}{2}}} - \frac{{\sqrt{\pi}}{\beta_n}}{{\sigma_n}{\sqrt{\theta_n}}}\bigg) \text{erf}{\bigg(\frac{\sqrt{\theta_n}}{\sigma_n} v_n \bigg)} e^{\frac{\theta_n}{{\sigma_n}^2}v_n ^2}  dv \nonumber\\
= & 
\left({\frac{w_n}{2{\theta_n}}} - {\frac{\beta_n}{{\sigma_n}^2}}\right) 
{v_n ^2}~ {}_2F_2\left(1,1;\frac{3}{2},2;\frac{\theta_n}{\sigma_n^2}v_n ^2\right) +C_4, \nonumber\\
= & 
\left({-\frac{w_n}{2{\rho_n}}} - {\frac{\beta_n}{{\sigma_n}^2}}\right) 
{v_n ^2}~ {}_2F_2\left(1,1;\frac{3}{2},2;-\frac{\rho_n}{\sigma_n^2}v_n ^2\right) +C_4,
\end{align}
\begin{align}
\int C_3 e^{\frac{\theta_n}{{\sigma_n}^2}v_n ^2} dv =  & C_5\text{erfi}\left(\frac{\sqrt{\theta_n}}{\sigma_n} v_n \right) +C_6,\nonumber\\
=  & C_5\text{erfi}\left(-j\frac{\sqrt{\rho_n}}{\sigma_n} v_n \right) +C_6,
\end{align}
\begin{align}
- \int \frac{w_n v_n}{\theta_n} dv = w_n \frac{v_n ^2}{2\rho_n} + C_7,
\end{align}
where $\text{erf}(\cdot)$ and $\text{erfi}(\cdot)$ are the error function and imaginary error functions, respectively.
Hence, $H_n(v_n)$  is given  by  
\begin{align} \label{eq_W}
H_n(v_n) =& \bigg(\!\!-\frac{w_n}{2 \rho_n} - \frac{\beta_n}{\sigma_n ^2}\bigg) v_n ^2 {}_2F_2\left(1,1;\frac{3}{2},2;-\frac{\rho_n}{\sigma_n^2}v_n ^2\right) +  \nonumber\\
& C_1 \text{erfi} \bigg(j\frac{\sqrt{\rho_n}}{\sigma_n} v_n \bigg)\!+\! w_n \frac{v_n ^2}{2 \rho_n} + C_2, {\thinspace} v_n \in (-v_*, v_*),  
\end{align}
where $C_1$ and $C_2$ are constants to be found for satisfying \eqref{eq_free2}-\eqref{eq_free3}, and {erfi}$(x)$ is the imaginary error function, i.e.,
\begin{align}\label{eq_erfi}
\text{erfi}(x) = \frac{2}{\sqrt \pi} \int_0^x e^{t^2} dt.
\end{align}
Because $H_n(v_n)$ should be even but {erfi}$(x)$ is odd, we should choose $C_1=0$. Further, in order to satisfy the boundary condition \eqref{eq_free2}, $C_2$ is chosen as 
\begin{align} 
& C_2 = \nonumber\\
& \!-\!\!\frac{1}{2\rho_n}\!\EE \!\left(e^{2\rho_n Y_{n, i}} \!\right) \! v_*^2\! \! +\! \left(\!{\frac{1}{2{\rho_n}}}\! + \!{\frac{\beta_n}{{\sigma_n^2}}}\!\right)\!\! {}_2F_2\big(1,1;\frac{3}{2},2;-\frac{\rho_n}{\sigma_n^2}v_*^2\big){v_*^2}, \!\!\!
\end{align}
where we have used \eqref{gamma}.
With this, the expression of $H_n(v_n)$ is obtained in the continuation region ($-v_*$, $v_*$). In the stopping region $|v_n|\geq v_*$, the stopping time in \eqref{eq_optimal_stopping123} is simply $\tau_*=0$, because $|V_{n,0}| = |v_n|\geq v_*$. Hence, if $|v_n|\geq v_*$, the objective value  achieved by the sampling time \eqref{eq_optimal_stopping123} is
\begin{align} \label{val_new}
\!\! \mathbb{E}_{v_n} \left[-\gamma_n w_n  v_n ^2 -\! \int_{0}^{0} \!\!\!(w_n V_{n,s}^2-\beta_n)ds\right] \!= \!-\gamma_n w_n  v_n ^2 . \!\!\!\! 
\end{align} 
Combining \eqref{eq_W}-\eqref{val_new}, we obtain a candidate of the value function for \eqref{eq_stop_problem}:
\begin{align}\label{eq_W1}
& H_n(v_n)=\nonumber\\
& \left\{\!\! \begin{array}{l l}\!\!w_n \frac{v_n ^2}{2\rho_n} \!\!-\!\! \left({\frac{w_n}{2{\rho_n}}} + {\frac{\beta_n}{{\sigma_n}^2}}\right) 
\!{}_2F_2\big(1,1;\frac{3}{2},2;\!\!&\!\!\!\!-\frac{\rho_n}{\sigma_n^2}v_n ^2\big) {v_n ^2}+ C_2, \\
& \text{ if }~ |v_n|<v_*,\\
\!\!-\gamma_n w_n  v_n ^2, & \text{ if }~ |v_n|\geq v_*. \end{array}\right. \!\!\!\!\!\!
\end{align}

Next, we find a candidate value of the optimal stopping threshold $v_*$. 
By taking the gradient of $H_n (v_n)$, we get
\begin{align}\label{H_gradient}
& H’ _n (v_n) = \nonumber\\
& w_n \frac{v_n}{\rho_n} \!\!-\!\! \left(\!{\frac{w_n \sigma_n}{j \rho_n \sqrt{\rho_n}}} \!+\! {\frac{2\beta_n}{j {\sigma_n}{\sqrt{\rho_n}}}}\!\right) \!\!F\!\!\left(j \frac{\sqrt \rho_n}{\sigma_n}v_n\right)\!, v_n\!\in\!(-v_*,v_*), \!\!\!
\end{align}
where 
\begin{align}
F(x) = e^{x^2} \int_0^{x} e^{-t^2} dt.  
\end{align}
The boundary condition \eqref{eq_free3} implies that $v_*$ is the root of 
\begin{align}\label{eq_threshold1}
 w_n \frac{v_n}{\rho_n} \!\!-\!\! \left({\frac{w_n \sigma_n}{{j \rho_n \sqrt{\rho_n}}}} + {\frac{2\beta_n}{{j\sigma_n}{\sqrt{\rho_n}}}}\right) \!F\!\left(j \frac{\sqrt \rho_n}{\sigma_n}v_n \right) \!=\! - 2\gamma_n w_n v_n.
\end{align}
Substituting \eqref{gamma} into \eqref{eq_threshold1}, yields that $v_*$ is the root of 
\begin{align}\label{eq_threshold11}
& - \left({w_n \frac{\sigma_n^2}{2 \rho_n} + {\beta_n}}\right) Q{\bigg(j \frac{\sqrt{\rho_n}}{\sigma_n} v_n \bigg)} = -w_n \frac{\sigma_n^2}{2 \rho_n} \mathbb{E} [e^{2 \rho_n Y_{n,i}}],
 \end{align}
where $Q(\cdot)$ is defined in \eqref{G}. By using \eqref{KG} in \eqref{eq_threshold11}, we get that
\begin{align} \label{eq_threshold12}
& - \left({w_n \frac{\sigma_n^2}{2 \rho_n} + {\beta_n}}\right) K{\bigg(\frac{\sqrt{\rho_n}}{\sigma_n} v_n \bigg)} = -w_n \frac{\sigma_n^2}{2 \rho_n} \mathbb{E} [e^{2 \rho_n Y_{n,i}}],
\end{align}
where $K(\cdot)$ is defined in \eqref{K}.
Rearranging \eqref{eq_threshold12}, we obtain the threshold as follows
\begin{align} \label{K_threshold}
v_n (\beta_n) = \frac{\sigma_n}{\sqrt{\rho_n}} K^{-1} \bigg(\frac{w_n \frac{\sigma_n ^2}{2 \rho_n} \mathbb{E} [e^{2 \rho_n Y_{n,i}}]}{w_n \frac{\sigma_n ^2}{2 \rho_n} + \beta_n}\bigg)
\end{align}
Substituting $\rho_n = -\theta_n$ in \eqref{K_threshold},
we get \eqref{threhsold} for $\theta_n < 0$.

In addition, when $\theta_n \to 0$, \eqref{eq_threshold11} can be expressed as 
\begin{align}\label{Eq_v}
 \left({\sigma_n ^2} - \frac{2{\theta_n}{\beta_n}}{w_n}\right) Q\left(\frac{\sqrt \theta_n}{\sigma_n}v_n\right) = \sigma_n ^2\EE \left[e^{-2\theta_n Y_{n,i}}\right].
\end{align}
The error function  $\text{erf}(x)$
has a Maclaurin series representation, given by
\begin{align}
&  \text{erf}(x) = {\frac{2}{\sqrt{\pi}}} \left[x - \frac{x^3}{3} + o(x^3)\right]. 
\end{align}
Hence, the Maclaurin series representation of $Q(x)$ in \eqref{G} is
\begin{align}
& Q(x) =  1 + \frac{2x^2}{3} + o(x^2). 
\end{align}
Let $x = \frac{\sqrt{\theta_n}}{\sigma_n} v_n$, we get
\begin{align}
& Q\left(\frac{\sqrt{\theta_n}}{\sigma_n}v_n\right) = 1+ \frac{2}{3} \frac{\theta_n}{\sigma_n ^2} v_n ^2 + o(\theta_n). 
\end{align}
In addition, 
\begin{align}
\EE \left[e^{-2\theta_n Y_{n,i}}\right]  = 1-2\theta_n \EE[Y_{n,i}]+o(\theta_n).
\end{align}
Hence, \eqref{Eq_v} can be expressed as
\begin{align}\label{Eq_taylor}
& \left({\sigma_n ^2}{}-\frac{2\beta_n\theta_n}{\sigma_n ^2 w_n}\right)\! \left[1+ \frac{2}{3} \frac{\theta_n}{\sigma_n ^2} v_n ^2 + o(\theta_n)\right] \nonumber\\
=& ~~\sigma_n ^2 (1-2\theta_n \EE[Y_{n,i}]+o(\theta_n)).\!\!
\end{align}
Expanding \eqref{Eq_taylor}, yields
\begin{align}\label{Eq_taylor1}
\sigma_n ^2 2\theta_n \EE[Y_{n,i}] - \frac{2{\beta_n} \theta_n}{\sigma_n ^2 w_n}+ \frac{2}{3} \frac{\theta_n} {\sigma_n ^2} {v_n ^2} + o(\theta_n) =0.
\end{align}
Divided by $\theta_n$ and let $\theta_n \to 0$ on both sides of \eqref{Eq_taylor1},  yields
\begin{align}\label{taylor}
v_n ^2- \frac{1}{w_n} 3(\beta_n- w_n \sigma_n ^2 \EE[Y_{n,i}])  =0. 
\end{align}
Equation \eqref{taylor} has two roots $ \!-(\!1\!/\!\sqrt{w_n}) \!\sqrt{\!3(\!\beta_n \!-\! w_n \sigma_n ^2 \EE[Y_{n,i}])}$, and $(\!1\!/\!\sqrt{w_n}) \!\sqrt{\!3(\!\beta_n \!-\! w_n \sigma_n ^2 \EE[Y_{n,i}])}$. 

If $v_*= -(1/\sqrt{w_n}) \sqrt{3(\beta_n - w_n \sigma_n ^2 \EE[Y_{n,i}])}$, the free boundary problem in \eqref{eq_free1}-\eqref{eq_free3} are invalid. Hence, the root of \eqref{taylor} is $v_*= (1/\sqrt{w_n})
\sqrt{3(\beta_n- w_n \sigma_n ^2 \EE[Y_{n,i}])}$, from which \eqref{threhsold} follows for $\theta_n =0$.

\subsubsection{Verification of the Optimality of the Candidate Solution}
Next, we use It\^{o}'s formula to verify the above candidate solution is indeed optimal, as stated in the following theorem: 

\begin{theorem}\label{thm_optimal_stopping}
If $\frac{\sigma_n ^2}{2 \theta_n} \mathbb{E} [1- e^{-2 \theta_n Y_{n,i}}] \leq {\beta_n} < {\infty}$, then for all $v_n \in \mathbb{R}$, $H_n(v_n)$ in \eqref{eq_W1} is the value function of the optimal stopping problem \eqref{eq_stop_problem}. In addition,
the optimal stopping time for solving \eqref{eq_stop_problem} is $\tau_*$ in \eqref{eq_optimal_stopping123}, where $v_*=v_n(\beta_n)$ is given by \eqref{threhsold}. 
\end{theorem}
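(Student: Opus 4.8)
The plan is to prove Theorem \ref{thm_optimal_stopping} by the standard verification (``guess-and-verify'') technique for optimal stopping problems, using the candidate value function $H_n$ from \eqref{eq_W1} and the candidate stopping time $\tau_*$ from \eqref{eq_optimal_stopping123}. Concretely, I would first collect the analytic properties of $H_n$: it is $C^1$ on all of $\mathbb{R}$ by construction (the free boundary conditions \eqref{eq_free2}-\eqref{eq_free3} enforce the value-matching and smooth-fit conditions at $\pm v_*$), it is $C^2$ except possibly at $\pm v_*$, and it is even. Then I would establish two inequalities that together characterize $H_n$ as the value function: (i) $H_n(v_n) \geq -w_n\gamma_n v_n^2$ for all $v_n$ (the obstacle inequality, trivially an equality on the stopping region $|v_n|\geq v_*$, and to be checked on $(-v_*,v_*)$), and (ii) $(\mathcal{G}H_n)(v_n) - (w_n v_n^2 - \beta_n) \leq 0$ for all $v_n$ where $\mathcal{G} = \frac{\sigma_n^2}{2}\partial_{vv} - \theta_n v\,\partial_v$ is the generator of the (zero-mean) OU process (the variational inequality; it is an equality on $(-v_*,v_*)$ by \eqref{eq_free1}, and to be checked on $|v_n|> v_*$ where $H_n(v_n)=-w_n\gamma_n v_n^2$, giving $(\mathcal{G}H_n)(v_n)-(w_nv_n^2-\beta_n) = -w_n\gamma_n\sigma_n^2 + 2w_n\gamma_n\theta_n v_n^2 - w_n v_n^2 + \beta_n$, which must be shown $\le 0$ for $|v_n|\ge v_*$ using the hypothesis $\beta_n \ge \frac{\sigma_n^2}{2\theta_n}\mathbb{E}[1-e^{-2\theta_n Y_{n,i}}] = w_n^{-1}\cdot$ the relevant bound and the definition \eqref{gamma} of $\gamma_n$, noting $\gamma_n<0$ when $\theta_n<0$).

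The core of the verification is an application of It\^{o}'s formula (or Dynkin's formula, \cite[Theorem 7.4.1]{Bernt2000}) to $H_n(V_{n,t})$. For an arbitrary stopping time $\tau\in\mathfrak{M}_V$ and the OU process $V_{n,t}$ started at $v_n$, writing $\tau\wedge T$ to keep things bounded, It\^{o}'s formula gives
\begin{align}
\mathbb{E}_{v_n}[H_n(V_{n,\tau\wedge T})] = H_n(v_n) + \mathbb{E}_{v_n}\left[\int_0^{\tau\wedge T} (\mathcal{G}H_n)(V_{n,s})\,ds\right],
\end{align}
where the martingale part of the stochastic integral has zero expectation (using a localization argument and the growth bound $\mathbb{E}[\sup_{s\le\tau}V_{n,s}^2]<\infty$ that follows from Doob's maximal inequality, as in the proof of Lemma \ref{lem_stop}; a small technical point is that $H_n$ is only $C^1$ at $\pm v_*$, so one should invoke the It\^{o}--Tanaka / Meyer extension or note that the set $\{s: V_{n,s}=\pm v_*\}$ has zero Lebesgue measure by occupation-time density, which is standard for OU). Using the variational inequality (ii), $(\mathcal{G}H_n)(V_{n,s}) \le w_n V_{n,s}^2 - \beta_n$, hence
\begin{align}
\mathbb{E}_{v_n}[H_n(V_{n,\tau\wedge T})] \le H_n(v_n) + \mathbb{E}_{v_n}\left[\int_0^{\tau\wedge T} (w_n V_{n,s}^2 - \beta_n)\,ds\right],
\end{align}
and combining with the obstacle inequality (i), $H_n(V_{n,\tau\wedge T}) \ge -w_n\gamma_n V_{n,\tau\wedge T}^2$, then rearranging and letting $T\to\infty$ (justified by monotone/dominated convergence as in Lemma \ref{lem_stop}, using $\mathbb{E}[\tau]<\infty$ and the $L^2$ bounds), I get
\begin{align}
\mathbb{E}_{v_n}\left[-w_n\gamma_n V_{n,\tau}^2 - \int_0^\tau (w_n V_{n,s}^2 - \beta_n)\,ds\right] \le H_n(v_n).
\end{align}
Taking the supremum over $\tau$ shows the value of \eqref{eq_stop_problem} is at most $H_n(v_n)$. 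For the reverse direction, I would repeat the computation with $\tau = \tau_*$: on $[0,\tau_*)$ the process stays in $(-v_*,v_*)$ where $\mathcal{G}H_n = w_n v^2 - \beta_n$ exactly, so the first inequality becomes equality, and at $\tau_*$ we have $|V_{n,\tau_*}| = v_*$ (when $|v_n|\le v_*$; when $|v_n|>v_*$, $\tau_*=0$ and everything is trivially an equality), so $H_n(V_{n,\tau_*}) = -w_n\gamma_n V_{n,\tau_*}^2$ by \eqref{eq_free2}; hence $\tau_*$ attains $H_n(v_n)$. This proves $H_n$ is the value function and $\tau_*$ is optimal. Finally, I identify $v_* = v_n(\beta_n)$ from \eqref{threhsold} as already derived via the smooth-fit condition \eqref{eq_threshold1}--\eqref{K_threshold} in the preceding text.

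The main obstacle I anticipate is verifying the variational inequality (ii) on the stopping region $|v_n|>v_*$ and the obstacle inequality (i) on the continuation region $|v_n|<v_*$ — these are the places where the free-boundary construction could silently fail, and both require carefully using the hypothesis $\frac{\sigma_n^2}{2\theta_n}\mathbb{E}[1-e^{-2\theta_n Y_{n,i}}] \le \beta_n < \infty$ together with the monotonicity of $K(\cdot)$ (respectively $Q(\cdot)$) to control the sign of $\beta_n + w_n\frac{\sigma_n^2}{2\theta_n}$ and of $\gamma_n$ across the three cases $\theta_n>0$, $\theta_n=0$, $\theta_n<0$; the sign bookkeeping is delicate since $\theta_n<0$ flips several inequalities. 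A secondary, more routine obstacle is the rigor of the It\^{o}-formula step given that $H_n$ is only $C^1$ (not $C^2$) at $\pm v_*$ and that the integrands must be shown integrable to pass to the limit $T\to\infty$; this parallels the localization and uniform-integrability arguments already carried out in the proof of Lemma \ref{lem_stop}, so I would reference those and fill in the analogous estimates. The even-symmetry of $H_n$ and of the problem, already noted in the text, ensures it suffices to argue for $v_n\ge 0$ throughout.
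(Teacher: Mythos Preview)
Your approach is correct and is essentially the same verification argument the paper uses: the paper packages the same three ingredients as Lemmas~\ref{lem_stop1}--\ref{lem_stop3} (that $\tau_*$ attains $H_n$, the obstacle inequality $H_n\ge -w_n\gamma_n v_n^2$, and superharmonicity/excessiveness of $H_n$ via It\^{o}'s formula) and then invokes \cite[Theorem~1.11]{Peskir2006} rather than carrying out the It\^{o}-formula supermartingale computation inline as you propose. The only cosmetic difference is that the paper delegates the ``$H_n$ dominates every stopping payoff'' step to the Peskir--Shiryaev black box, whereas you reprove it directly.
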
 

In order to prove Theorem \ref{thm_optimal_stopping}, we need to establish the following properties of $H_n(v_n)$ in \eqref{eq_W1}, for the case that $\frac{\sigma_n ^2}{2 \theta_n} \mathbb{E} [1- e^{-2 \theta_n Y_{n,i}}] \leq {\beta_n} < {\infty}$ is satisfied in \eqref{eq_stop_problem}:
\begin{lemma}\label{lem_stop1} \cite{Ornee_TON}
$H_n(v_n) \!\!= \!\!{\mathbb{E}_{v_n}} {[-\gamma_n  V_{n,\tau_*}^2 - \int_{0}^{\tau_*} (V_{n,s}^2-\beta_n)ds]}.$
\end{lemma}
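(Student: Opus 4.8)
The plan is to apply Dynkin's formula to the candidate value function $H_n$ in \eqref{eq_W1} together with the exit time $\tau_*$ from \eqref{eq_optimal_stopping123}. The case $|v_n|\ge v_*$ is immediate: there $\tau_*=0$ almost surely, $V_{n,\tau_*}=v_n$, and the second branch of \eqref{eq_W1} already gives $H_n(v_n)=-\gamma_n w_n v_n^2$, which matches the right-hand side. So the content is in the case $|v_n|<v_*$.

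First I would record what the free boundary construction has already supplied: on the continuation region $(-v_*,v_*)$ the function $H_n$ is twice continuously differentiable and, by \eqref{eq_free1}, satisfies $\mathcal{G}H_n(v)=w_n v^2-\beta_n$, where $\mathcal{G}=-\theta_n v\,\partial_v+\tfrac{\sigma_n^2}{2}\,\partial_v^2$ is the generator of $V_{n,t}$ (cf. \eqref{eq_generator}); moreover $H_n$ is continuous on the closed interval with $H_n(\pm v_*)=-\gamma_n w_n v_*^2$ by \eqref{eq_free2}. Since $\tau_*$ is the first exit time of the bounded interval $(-v_*,v_*)$ by a non-degenerate diffusion, $\mathbb{E}_{v_n}[\tau_*]<\infty$, so $\tau_*\in\mathfrak{M}_V$; this holds in all three regimes $\theta_n>0$, $\theta_n=0$, $\theta_n<0$ (for $\theta_n<0$ the outward drift only shortens the exit). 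Dynkin's formula \cite[Theorem 7.4.1 and the remark afterwards]{Bernt2000}, applied with the bounded stopping time $t\wedge\tau_*$, then gives
\begin{align}
\mathbb{E}_{v_n}\big[H_n(V_{n,t\wedge\tau_*})\big] = H_n(v_n) + \mathbb{E}_{v_n}\!\left[\int_0^{t\wedge\tau_*}\!\!\big(w_n V_{n,s}^2-\beta_n\big)\,ds\right].
\end{align}

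The next step is to let $t\to\infty$. On $[0,\tau_*]$ one has $|V_{n,s}|\le v_*$, hence $|H_n(V_{n,t\wedge\tau_*})|\le\max_{|v|\le v_*}|H_n(v)|<\infty$ by continuity on a compact set, and $0\le\int_0^{t\wedge\tau_*}w_n V_{n,s}^2\,ds\le w_n v_*^2\,\tau_*$ with $\tau_*$ integrable. Dominated convergence therefore applies to the $V_{n,s}^2$ term and monotone convergence to the $\beta_n(t\wedge\tau_*)$ term, while path-continuity forces $|V_{n,\tau_*}|=v_*$, so $H_n(V_{n,\tau_*})=-\gamma_n w_n V_{n,\tau_*}^2$. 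Passing to the limit and rearranging yields
\begin{align}
H_n(v_n) = \mathbb{E}_{v_n}\!\left[-\gamma_n w_n V_{n,\tau_*}^2 - \int_0^{\tau_*}\!\!\big(w_n V_{n,s}^2-\beta_n\big)\,ds\right],
\end{align}
which is the asserted identity.

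The main obstacle is the $t\to\infty$ passage: one must verify $\mathbb{E}_{v_n}[\tau_*]<\infty$ uniformly across the stable, Wiener, and unstable cases and then exhibit the integrable dominating functions used above; the Dynkin step and the ODE identity $\mathcal{G}H_n=w_n v^2-\beta_n$ are routine once the free-boundary solution \eqref{eq_W1} and its boundary data \eqref{eq_free2}--\eqref{eq_free3} are in hand. A secondary technical point worth stating explicitly is that Dynkin's formula is being invoked for a function that is only $C^2$ on the open domain and merely continuous up to the boundary, which is precisely the ``remark afterwards'' version of \cite[Theorem 7.4.1]{Bernt2000} for exit times of bounded domains.
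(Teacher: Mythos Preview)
Your approach is correct and is the standard verification argument for candidate value functions obtained by the free-boundary method: apply Dynkin's formula to $H_n$ on the continuation region (where $\mathcal{G}H_n=w_n v^2-\beta_n$ by \eqref{eq_free1}), localize with $t\wedge\tau_*$, and pass to the limit using that the process stays in the bounded interval $[-v_*,v_*]$ up to $\tau_*$ and that $\mathbb{E}_{v_n}[\tau_*]<\infty$. The paper does not actually supply its own proof of this lemma---it simply cites \cite{Ornee_TON}---so there is nothing to compare against in the present document; the argument in \cite{Ornee_TON} is precisely the Dynkin-formula computation you outline.

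Two minor remarks. First, you invoke monotone convergence for the $\beta_n(t\wedge\tau_*)$ term; since on $[0,\tau_*]$ one has the uniform bound $|w_nV_{n,s}^2-\beta_n|\le w_nv_*^2+|\beta_n|$ with $\tau_*$ integrable, dominated convergence already handles the entire running-cost integral at once and avoids any sign discussion on $\beta_n$. Second, your identity carries the $w_n$ factors that appear in \eqref{eq_stop_problem} and \eqref{eq_W1}, whereas the lemma as stated in the paper drops them; your version is the one consistent with the surrounding derivation, and the omission in the lemma statement is evidently a typographical slip.
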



\begin{lemma}\label{lem_stop2} \cite{Ornee_TON}
$H_n(v_n) \geq {-\gamma_n}v_n ^2$ for all $v_n \in {\mathbb{R}}$.
\end{lemma}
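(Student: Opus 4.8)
The plan is to prove Lemma~\ref{lem_stop2} by a direct real‑analysis argument on the function $H_n$ defined in \eqref{eq_W1}, without invoking its explicit hypergeometric form: the only facts I will use are that, on the continuation region $(-v_*,v_*)$, $H_n$ is an even $C^2$ function solving the ODE \eqref{eq_free1} and satisfying the smooth‑fit conditions \eqref{eq_free2}--\eqref{eq_free3}, while on the stopping region $\{|v_n|\ge v_*\}$ it equals $-w_n\gamma_n v_n^2$. The argument then splits along these two regions.

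On $\{|v_n|\ge v_*\}$ the claimed inequality holds with equality by \eqref{eq_W1}, so all the work is on $(-v_*,v_*)$. There I would introduce the gap function $g(v):=H_n(v)+w_n\gamma_n v^2$ and show $g\ge 0$; by evenness of $H_n$ it suffices to take $v\in[0,v_*]$. The smooth‑fit conditions give $g(v_*)=0$ and $g'(v_*)=0$, and evenness gives $g'(0)=0$. Substituting $H_n=g-w_n\gamma_n v^2$ into \eqref{eq_free1} and simplifying with $\gamma_n=\frac{1}{2\theta_n}\mathbb{E}[1-e^{-2\theta_n Y_{n,i}}]$ shows that $g$ solves the linear ODE
\[
\frac{\sigma_n^2}{2}\,g''(v)-\theta_n v\,g'(v)=r(v),\qquad r(v):=w_n\,\mathbb{E}\big[e^{-2\theta_n Y_{n,i}}\big]\,v^2-\big(\beta_n-w_n\sigma_n^2\gamma_n\big).
\]

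Next I would integrate this ODE with the strictly positive integrating factor $m(v):=e^{-\theta_n v^2/\sigma_n^2}$, so that $\big(m g'\big)'=\frac{2}{\sigma_n^2}\,m\,r$. Integrating from $v$ to $v_*$ and using $g'(v_*)=0$ yields $g'(v)=-\frac{2}{\sigma_n^2 m(v)}\int_v^{v_*}m(u)r(u)\,du$; setting $v=0$ and using $g'(0)=0$ gives the key identity $\int_0^{v_*}m(u)r(u)\,du=0$. Now $r$ is strictly increasing on $[0,\infty)$ (positive leading coefficient) and $r(0)=-(\beta_n-w_n\sigma_n^2\gamma_n)\le 0$ by the activation‑cost condition of Theorem~\ref{thm_optimal_stopping} (the degenerate case $r(0)=0$ corresponds to $v_*=0$ and an empty continuation region). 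Hence $r$ changes sign exactly once on $(0,v_*)$, from $\le 0$ to $\ge 0$, say at $v_0\in(0,v_*)$, and combined with $\int_0^{v_*}mr=0$ this forces $\int_v^{v_*}m(u)r(u)\,du\ge 0$ for every $v\in[0,v_*]$ (for $v\ge v_0$ the integrand is nonnegative; for $v<v_0$ one writes $\int_v^{v_*}mr=-\int_0^v mr\ge 0$ since $r\le 0$ on $[0,v_0]$). Therefore $g'(v)\le 0$ on $[0,v_*]$, so $g$ is nonincreasing and $g(v)\ge g(v_*)=0$; by evenness $g\ge 0$ on $(-v_*,v_*)$, which is the claim.

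I expect the main obstacle to be this last sign‑bookkeeping step: the clean conclusion hinges on the identity $\int_0^{v_*}m r=0$, which is itself a consequence of $g$ being even --- the same symmetry that forced the odd $\mathrm{erfi}$‑coefficient $C_1$ to vanish in the construction of $H_n$ --- together with the monotonicity of $r$, and on making the activation‑cost inequality line up with the sign of $r(0)$ (including the $w_n$ bookkeeping that the lemma statement suppresses). The cases $\theta_n=0$ (where $m\equiv 1$ and $r$ is even simpler) and $\theta_n>0$ are handled by the identical ODE argument, the latter reproducing \cite{Ornee_TON}; only the unstable case $\theta_n<0$ is genuinely new, but the reasoning above is indifferent to the sign of $\theta_n$.
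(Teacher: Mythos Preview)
Your argument is correct. The paper itself does not supply a proof of this lemma; it simply cites \cite{Ornee_TON}, so there is no in-paper argument to compare against directly. Your approach is a clean, self-contained ODE/comparison argument that works uniformly in the sign of $\theta_n$: you correctly derive the linear ODE for the gap $g=H_n+w_n\gamma_n v^2$, use the two boundary conditions $g'(0)=g'(v_*)=0$ to obtain the balance identity $\int_0^{v_*} m\,r=0$, and then exploit the single sign change of the quadratic $r$ to conclude $g'\le 0$ on $[0,v_*]$, hence $g\ge g(v_*)=0$. You are also right to flag the $w_n$ bookkeeping: the stopping-region value in \eqref{eq_W1} is $-w_n\gamma_n v_n^2$, and the correct lower bound needed for $r(0)\le 0$ is $\beta_n\ge w_n\sigma_n^2\gamma_n$, which is exactly the condition that makes $v_*=v_n(\beta_n)\ge 0$ in \eqref{threhsold}; the paper's stated hypothesis in Theorem~\ref{thm_optimal_stopping} omits this $w_n$. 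Since the cited proof in \cite{Ornee_TON} handles only the stable case $\theta_n>0$ with $w_n=1$, your argument is in fact more general than what the paper leans on.
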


A function $f(v)$ is said to be \emph{excessive} for the process $V_{n,t}$ if 
\begin{align}
{\mathbb{E}_{v}} f(V_{n,t}) \leq f(v), \forall t \geq 0, v\in{\mathbb{R}}.
\end{align}
By using It\^{o}'s formula in stochastic calculus, we can obtain
\begin{lemma}\label{lem_stop3} \cite{Ornee_TON}
The function $H_n(v_n)$ is excessive for the process $V_{n,t}$.
\end{lemma}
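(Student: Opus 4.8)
The plan is to establish the excessivity of $H_n$ by a verification argument built on the generalized It\^o formula. The first step is to record the regularity of $H_n$ in \eqref{eq_W1}: by its free-boundary construction, $H_n$ is $C^2$ on the continuation region $(-v_*,v_*)$ and on the stopping region $\{|v_n|>v_*\}$, the smooth-fit conditions \eqref{eq_free2}--\eqref{eq_free3} make $H_n$ and $H_n'$ continuous across $\pm v_*$, and $H_n$ has at most quadratic growth; hence $H_n\in C^1(\mathbb{R})$ with $H_n'$ absolutely continuous and with locally bounded second derivative off $\{\pm v_*\}$. This is precisely the regularity under which It\^o's formula applies to the non-degenerate diffusion $V_{n,t}$: writing $\mathcal{G}$ for the OU generator \eqref{eq_generator},
\[
H_n(V_{n,t}) = H_n(v_n) + \int_0^t \sigma_n H_n'(V_{n,s})\,dW_{n,s} + \int_0^t \mathcal{G}H_n(V_{n,s})\,ds,
\]
with no local-time term at $\pm v_*$ (because $H_n'$ is continuous there) and with the value of $\mathcal{G}H_n$ at the two exceptional points being immaterial, since $V_{n,t}$ spends zero Lebesgue time at any fixed point.

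The heart of the argument is the sign of the drift. On $(-v_*,v_*)$ the free-boundary ODE \eqref{eq_free1} gives $\mathcal{G}H_n(v_n)=w_nv_n^2-\beta_n$; on $\{|v_n|\ge v_*\}$, where $H_n(v_n)=-w_n\gamma_nv_n^2$, direct differentiation gives $\mathcal{G}H_n(v_n)=2w_n\gamma_n\theta_nv_n^2-w_n\gamma_n\sigma_n^2$. From \eqref{gamma}, $2\gamma_n\theta_n=\mathbb{E}[1-e^{-2\theta_nY_{n,i}}]$, so $1-2\gamma_n\theta_n=\mathbb{E}[e^{-2\theta_nY_{n,i}}]>0$; therefore the ``excess drift'' $\mathcal{G}H_n(v_n)-(w_nv_n^2-\beta_n)$ vanishes on the continuation region and is a strictly decreasing affine function of $v_n^2$ on the stopping region, so it attains its maximum over $\{|v_n|\ge v_*\}$ at $|v_n|=v_*$. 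The remaining inequality thus reduces to the single relation $w_nv_*^2\,\mathbb{E}[e^{-2\theta_nY_{n,i}}]\ge\beta_n-w_n\tfrac{\sigma_n^2}{2\theta_n}\mathbb{E}[1-e^{-2\theta_nY_{n,i}}]$. Substituting the threshold identity \eqref{eq_threshold11} (equivalently $v_*=v_n(\beta_n)$ in \eqref{threhsold}) and using the monotonicity and range of $Q(\cdot)$ for $\theta_n>0$, of $K(\cdot)$ for $\theta_n<0$, and the explicit value of $v_*$ for $\theta_n=0$, this collapses to the standing lower bound $\tfrac{\sigma_n^2}{2\theta_n}\mathbb{E}[1-e^{-2\theta_nY_{n,i}}]\le\beta_n$ assumed in Theorem \ref{thm_optimal_stopping}. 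I expect verifying this stopping-region inequality in all three cases $\theta_n>0$, $\theta_n=0$, $\theta_n<0$ --- and, for $\theta_n<0$, carrying it through the imaginary-error-function algebra --- to be the main obstacle.

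Finally I would translate this differential inequality into the excessivity statement. Because the running cost has rate $w_nv_n^2-\beta_n$, the object that is literally excessive is $H_n$ read against a global particular solution $F$ of $\mathcal{G}F(v_n)=w_nv_n^2-\beta_n$; equivalently, one shows that $M_t:=H_n(V_{n,t})+\int_0^t(\beta_n-w_nV_{n,s}^2)\,ds$ is a $\mathbb{P}_{v_n}$-supermartingale, which is the sense in which $H_n$ is excessive for $V_{n,t}$ in the verification of Theorem \ref{thm_optimal_stopping} (cf. \cite{Ornee_TON}). By the It\^o expansion above, $M_t-M_0=\int_0^t\sigma_nH_n'(V_{n,s})\,dW_{n,s}+\int_0^t[\mathcal{G}H_n(V_{n,s})-(w_nV_{n,s}^2-\beta_n)]\,ds$; the time integral is nonpositive by the previous step, while the stochastic integral is a true martingale once a routine localization argument --- using the finite polynomial moments of $V_{n,t}$ (and exponential moments when $\theta_n<0$) together with the at most linear growth of $H_n'$ --- shows $\mathbb{E}_{v_n}\int_0^t\sigma_n^2H_n'(V_{n,s})^2\,ds<\infty$. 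Taking expectations gives $\mathbb{E}_{v_n}[M_t]\le M_0$, i.e.\ $\mathbb{E}_{v_n}[(H_n-F)(V_{n,t})]\le(H_n-F)(v_n)$, which together with Lemmas \ref{lem_stop1}--\ref{lem_stop2} supplies all the ingredients needed in Theorem \ref{thm_optimal_stopping}.
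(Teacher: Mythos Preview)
Your approach is essentially the one the paper indicates (``By using It\^{o}'s formula\ldots'') and the one carried out in \cite{Ornee_TON}: apply the It\^{o}--Tanaka formula to the $C^1$ function $H_n$, use smooth fit to kill the local-time term, and verify the differential inequality $\mathcal{G}H_n(v)\le w_nv^2-\beta_n$ piecewise, which yields the supermartingale property needed in the verification Theorem~\ref{thm_optimal_stopping}. Your observation that ``excessive'' here must be read relative to the running cost (equivalently, that $H_n-F$ is excessive for any particular solution $F$ of $\mathcal{G}F=w_nv^2-\beta_n$, e.g.\ $F=w_nR_{n,2}-\beta_nR_{n,1}$) is exactly right and is the sense in which the paper and \cite{Ornee_TON} use the term; the literal inequality $\mathbb{E}_v H_n(V_{n,t})\le H_n(v)$ can fail near the boundary when $w_nv_*^2>\beta_n$.

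One correction to your reduction step: the claim that the stopping-region inequality at $|v|=v_*$ ``collapses to the standing lower bound $\tfrac{\sigma_n^2}{2\theta_n}\mathbb{E}[1-e^{-2\theta_nY_{n,i}}]\le\beta_n$'' is accurate only for $\theta_n=0$, where a direct substitution gives the excess drift $-2(\beta_n-w_n\sigma_n^2\mathbb{E}[Y_{n,i}])\le0$. For $\theta_n>0$, substituting the threshold identity $Q(\tfrac{\sqrt{\theta_n}}{\sigma_n}v_*)=\tfrac{w_n\sigma_n^2/(2\theta_n)\,\mathbb{E}[e^{-2\theta_nY_{n,i}}]}{w_n\sigma_n^2/(2\theta_n)-\beta_n}$ and writing $x=\tfrac{\sqrt{\theta_n}}{\sigma_n}v_*$, the inequality you need becomes $Q(x)(1-2x^2)\le1$, which is a property of $Q$ valid for all $x\ge0$ and does not depend on the lower bound for $\beta_n$ (check that $g(x)=x-\tfrac{\sqrt{\pi}}{2}e^{x^2}\mathrm{erf}(x)(1-2x^2)$ has $g(0)=0$ and $g'(x)=2x^2+\sqrt{\pi}\,x\,e^{x^2}\mathrm{erf}(x)(1+2x^2)>0$). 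The case $\theta_n<0$ yields the analogous inequality $K(x)(1+2x^2)\ge1$ for $K$ in \eqref{K}. With this adjustment your verification goes through and matches the paper's route.
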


Now, we are ready to prove Theorem \ref{thm_optimal_stopping}.

\begin{proof}[Proof of Theorem \ref{thm_optimal_stopping}] In Lemmas \ref{lem_stop1}-\ref{lem_stop3}, we have shown that $H_n(v_n) = {\mathbb{E}_{v_n}} {\left[-\gamma_n  V_{n, \tau_*}^2 - \int_{0}^{\tau_*} (V_{n,s}^2-\beta_n)ds\right]}$, $H_n(v_n) \geq -\gamma_n v_n ^2$, and $H_n(v_n)$ is an excessive function. 
Moreover, from Lemma \ref{lem_stop1}, we know that $\mathbb{E}_{v_n} [\tau_* ] <\infty$ holds for all $v_n \in \mathbb{R}$. Hence, $\mathbb{P}_{v_n} (\tau_* < {\infty}) = 1$ for all $v_n \in \mathbb{R}$. 
These conditions and Theorem 1.11 in \cite[Section 1.2]{Peskir2006} imply that $\tau_*$ is an optimal stopping time of \eqref{eq_stop_problem}. This completes the proof.
\end{proof}
Because \eqref{eq_opt_stopping} is a special case of \eqref{eq_stop_problem}, we can get from Theorem \ref{thm_optimal_stopping} that 

\begin{corollary}\label{coro_stop}
If $\frac{\sigma_n ^2}{2 \theta_n} \mathbb{E} [1- e^{-2 \theta_n Y_{n,i}}] \leq {\beta_n} < {\infty}$, then a solution to \eqref{eq_opt_stopping} is $(Z_{n,1}(\beta_n), $ $Z_{n,2} (\beta_n), \ldots)$, where
\begin{align}\label{eq_optimal_stopping1234}
Z_{n,i} (\beta_n) = \inf\{t \geq 0: |O_{n, Y_{n,i}+t}|\geq v_n(\beta_n) \},
\end{align}
and $v_n(\beta_n)$ is defined  in  \eqref{threhsold}.
\end{corollary}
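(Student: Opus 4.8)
The plan is to reduce the per-sample control problem \eqref{eq_opt_stopping} to the generic optimal stopping problem \eqref{eq_stop_problem}, for which Theorem \ref{thm_optimal_stopping} already supplies the optimal threshold policy. Fix $i$ and condition on the pair $(O_{n, Y_{n,i}}, Y_{n,i})$; write $v_n := O_{n, Y_{n,i}}$. Because the transmission time $Y_{n,i}$ is independent of the signal process, conditioning on $Y_{n,i}$ leaves the law of $\{O_{n,t}\}$ unchanged, and since $O_{n,t}$ is the time-homogeneous OU process solving $dO_{n,t} = -\theta_n O_{n,t}\,dt + \sigma_n\,dW_{n,t}$, the Markov property implies that, conditioned on $O_{n,Y_{n,i}} = v_n$, the shifted process $V_{n,t} := O_{n, Y_{n,i}+t}$ is again an OU process with the same dynamics and initial state $V_{n,0} = v_n$. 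Since $v_n$ is deterministic after conditioning, the natural filtration $\{\sigma(V_{n,s}: s\le t)\}$ of $V_{n,\cdot}$ coincides with the increment filtration $\{\mathcal{F}_{n,t}^{Y_{n,i}+}\}$ used to define $\mathfrak{M}_{Y_{n,i}}$, so the correspondence $Z_{n,i} \leftrightarrow \tau$ is a bijection between $\mathfrak{M}_{Y_{n,i}}$ and $\mathfrak{M}_V$.

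Next I would rewrite the objective of \eqref{eq_opt_stopping}. Substituting $s = Y_{n,i}+t$ and using that on the interval $[Y_{n,i}, Y_{n,i}+Z_{n,i}]$ the error $\varepsilon_n(s)$ has the same distribution as $O_{n,s} = V_{n,s-Y_{n,i}}$ (the identification established just before Lemma \ref{lemma1}), one obtains
\[
\mathbb{E}\!\left[\int_{Y_{n,i}}^{Y_{n,i}+Z_{n,i}}\!\!(w_n\varepsilon_n^2(s)-\beta_n)\,ds + \gamma_n O_{n, Y_{n,i}+Z_{n,i}}^2 \,\Big|\, O_{n,Y_{n,i}}, Y_{n,i}\right] = \mathbb{E}_{v_n}\!\left[\int_0^\tau (w_n V_{n,t}^2 - \beta_n)\,dt + \gamma_n V_{n,\tau}^2\right],
\]
so minimizing over $Z_{n,i}$ is equivalent to maximizing $\mathbb{E}_{v_n}[-\gamma_n V_{n,\tau}^2 - \int_0^\tau(w_n V_{n,t}^2 - \beta_n)\,dt]$, which is exactly \eqref{eq_stop_problem} (the $w_n$ in front of the terminal term is a harmless overall scaling). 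The hypothesis $\frac{\sigma_n^2}{2\theta_n}\mathbb{E}[1-e^{-2\theta_n Y_{n,i}}] \le \beta_n < \infty$ is precisely the standing assumption of Theorem \ref{thm_optimal_stopping}, so that theorem applies for every initial state $v_n\in\mathbb{R}$ and gives that $\tau_* = \inf\{t\ge 0 : |V_{n,t}| \ge v_n(\beta_n)\}$ is optimal, with $v_n(\beta_n)$ as in \eqref{threhsold}. Translating back through $V_{n,t} = O_{n, Y_{n,i}+t}$ yields $Z_{n,i}(\beta_n) = \inf\{t\ge 0 : |O_{n, Y_{n,i}+t}| \ge v_n(\beta_n)\}$, i.e.\ \eqref{eq_optimal_stopping1234}; since this holds for a.e.\ realization of $(O_{n,Y_{n,i}}, Y_{n,i})$ and for every $i$, the sequence $(Z_{n,1}(\beta_n), Z_{n,2}(\beta_n), \ldots)$ solves \eqref{eq_opt_stopping}.

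The main obstacle is purely the conditioning bookkeeping in the first step: one must verify carefully that conditioning on $(O_{n,Y_{n,i}}, Y_{n,i})$ produces a bona fide OU process restarted at the conditioned value $v_n$, and that the admissible stopping-time class is preserved under the time shift — this is where the independence of $Y_{n,i}$ from the signal and the (strong) Markov property of $O_{n,t}$ enter. Once this identification is in hand, the corollary is an immediate specialization of Theorem \ref{thm_optimal_stopping}, and no further computation is required.
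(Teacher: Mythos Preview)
Your proposal is correct and follows essentially the same approach as the paper, which simply observes that \eqref{eq_opt_stopping} is a special case of \eqref{eq_stop_problem} and then invokes Theorem~\ref{thm_optimal_stopping}; you supply considerably more detail on the conditioning and time-shift bookkeeping than the paper does, but the core idea is identical. One minor remark: the discrepancy in the $w_n$ factor on the terminal term that you flag is not ``a harmless overall scaling'' of the objective (it affects only the terminal piece, not the integral), but rather a typo in \eqref{eq_opt_stopping}---the derivation from \eqref{decomposed_MDP} shows the terminal term should carry $w_n\gamma_n$, matching \eqref{eq_stop_problem} exactly.
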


This concludes the proof.


\section{Proof of Theorem \ref{indexability}} \label{proof_of_indexability}

If $\gamma =0$, i.e., no sample from source $n$ is currently in service, from Proposition \ref{opt_sampler_theorem}, we get that the optimal sampling policy of \eqref{per_arm_problem} is a threshold policy which is given by \eqref{eq_opt_solution}. Given Proposition \ref{opt_sampler_theorem}, for an instantaneous estimation error $|\varepsilon_n (t) = \varepsilon|$, it is optimal not to schedule source $n$ if
\begin{align} \label{threshold_policy}
|\varepsilon| < v_n (\bar m_{n} (\lambda)),
\end{align}
where
\begin{align} \label{mse_prev}
\bar m_n (\lambda) =
\frac{\mathbb{E} \bigg[\int_{D_{n,i} (\bar m_n (\lambda))}^{D_{n,i+1} (\bar m_n (\lambda))} w_n \varepsilon_n ^2 (s) ds\bigg] + \lambda \mathbb{E} [Y_{n,i+1}]}{\mathbb{E} [D_{n,i+1} (\bar m_n (\lambda)) - D_{n,i} (\bar m_n (\lambda))]},
\end{align}
and $\bar m_n (\lambda)$ is the optimal objective value of \eqref{per_arm_problem}. We use $\bar m_{n, \text{opt}}$ as the optimal objective value in \eqref{per_arm_problem}. For convenience of the proof and to illustrate the dependency of the activation cost $\lambda$, we express it as a function of $\lambda$ in this proof.
The numerator in \eqref{mse_prev} represents the expected penalty of source $n$ strating from $i$-th delivery time to $(i+1)$-th delivery time and the denominator represents the expected time from $i$-th delivery time to the end of $(i+1)$-th delivery time. In order to prove Theorem \ref{indexability}, we need to introduce the following Lemma.

\begin{lemma} \label{monotonicity_1}
$\bar m_n (\lambda)$ is a continuous and strictly increasing function of $\lambda$.
\end{lemma}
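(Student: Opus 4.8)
The plan is to show that $\bar m_n(\lambda)$, viewed as the unique root of the fixed-point equation \eqref{mse_prev} (equivalently, equation \eqref{thm1_eq22} with $\beta_n$ renamed), is continuous and strictly increasing in $\lambda$. The natural device is to package \eqref{thm1_eq22} into a single scalar function and apply an implicit-function / monotonicity argument. Define
\begin{align}
\Phi_n(\beta,\lambda) = \mathbb{E}\!\left[\int_{D_{n,i}(\beta)}^{D_{n,i+1}(\beta)} w_n \varepsilon_n^2(t)\,dt\right] - \beta\,\mathbb{E}[D_{n,i+1}(\beta)-D_{n,i}(\beta)] + \lambda\,\mathbb{E}[Y_{n,i+1}],
\end{align}
where $D_{n,i}(\beta), D_{n,i+1}(\beta)$ are the delivery times under the threshold policy with threshold $v_n(\beta)$ from \eqref{threhsold}. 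By Proposition \ref{opt_sampler_theorem}, for each $\lambda$ there is a unique root $\beta_n(\lambda)$ of $\Phi_n(\beta,\lambda)=0$, and $\beta_n(\lambda)=\bar m_n(\lambda)$. First I would record the sign behavior of $\Phi_n$ in its first argument: using Lemma \ref{lem_ratio_to_minus}(a) (the Dinkelbach-type equivalence), $h(k)\gtreqqless 0 \iff \bar m_{n,\text{opt}}\gtreqqless k$, and since $\Phi_n(\beta,\lambda)/\mathbb{E}[D_{n,i+1}(\beta)-D_{n,i}(\beta)]$ is exactly $h$ evaluated appropriately, $\Phi_n(\beta,\lambda)>0$ for $\beta<\beta_n(\lambda)$ and $\Phi_n(\beta,\lambda)<0$ for $\beta>\beta_n(\lambda)$; in particular $\Phi_n$ is strictly decreasing in $\beta$ as it crosses zero.

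Next, monotonicity in $\lambda$: for fixed $\beta$, $\Phi_n(\beta,\lambda)$ is strictly increasing in $\lambda$ because $\partial\Phi_n/\partial\lambda = \mathbb{E}[Y_{n,i+1}] > 0$ (the threshold policy and hence the delivery times $D_{n,i}(\beta)$ depend on $\beta$ but not directly on $\lambda$, so $\lambda$ enters only through the explicit linear term). Combining: if $\lambda_1 < \lambda_2$, then $\Phi_n(\beta_n(\lambda_1),\lambda_2) > \Phi_n(\beta_n(\lambda_1),\lambda_1) = 0$, and since $\Phi_n(\cdot,\lambda_2)$ is negative above its root $\beta_n(\lambda_2)$, we must have $\beta_n(\lambda_1) < \beta_n(\lambda_2)$. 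This gives strict monotonicity of $\bar m_n(\lambda) = \beta_n(\lambda)$.

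For continuity, I would argue that $\Phi_n$ is jointly continuous in $(\beta,\lambda)$ — the $\lambda$-dependence is affine, and the $\beta$-dependence is continuous because the threshold $v_n(\beta)$ is continuous in $\beta$ (via continuity of $Q^{-1}$, $K^{-1}$, and the elementary expressions in \eqref{threhsold}), and the expectations $\mathbb{E}[D_{n,i+1}(\beta)-D_{n,i}(\beta)]$ and $\mathbb{E}[\int w_n\varepsilon_n^2]$ are continuous in the threshold by Lemma \ref{lemma1} (which expresses them through continuous functions $R_{n,1}, R_{n,2}$ of $\max\{|\varepsilon|,\cdot\}$, with $|\varepsilon|$ replaced by $v_n(\beta)$ and dominated-convergence/monotone-convergence justifying passage of limits under the expectation). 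Then standard implicit-function reasoning for a strictly monotone continuous function — $\beta_n(\lambda)$ is the unique zero-level set of a jointly continuous, strictly-$\beta$-decreasing, strictly-$\lambda$-increasing function — yields that $\lambda \mapsto \beta_n(\lambda)$ is continuous. I expect the main obstacle to be the rigorous justification that the expectations appearing in $\Phi_n$ are continuous (and finite) as the threshold varies, i.e., interchanging limits with the expectations over the stopping times $D_{n,i}(\beta)$; this requires care with integrability, which is exactly where Lemma \ref{lemma1}'s closed-form reductions and the assumption $0<\mathbb{E}[Y_{n,i}]<\infty$ do the heavy lifting. The monotonicity-in-$\lambda$ step itself is essentially immediate once $\Phi_n$ is set up correctly.
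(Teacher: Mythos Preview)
Your approach is essentially the same as the paper's. The paper also packages \eqref{thm1_eq22} into a scalar function $f(\beta_n)+\lambda\,\mathbb{E}[Y_{n,i+1}]=0$ (your $\Phi_n(\beta,\lambda)=0$), invokes the fact that $f$ is continuous and strictly decreasing in $\beta_n$ (citing \cite[Lemma~2]{Ornee_TON} rather than deriving the sign pattern from Lemma~\ref{lem_ratio_to_minus} as you do), and then concludes continuity and strict monotonicity of the root directly from the linear $\lambda$-term with positive coefficient $\mathbb{E}[Y_{n,i+1}]>0$. Your treatment of the continuity step via Lemma~\ref{lemma1} and dominated convergence is more self-contained than the paper's citation, but the underlying argument is identical.
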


\begin{proof}
The $(i+1)$-th delivery time from source $n$ is given by
\begin{align} \label{delivery}
D_{n,i+1} (\bar m_n (\lambda)) = S_{n, i+1} (\bar m_n (\lambda)) + Y_{n, i+1},
\end{align}
and for \eqref{eq_opt_solution}, the $(i+1)$-th sampling time is 
\begin{align} \label{sampling}
& S_{n, i+1} (\bar m_n (\lambda)) = \nonumber\\
& \inf\{t \geq D_{n, i} (\bar m_n (\lambda)) : |\varepsilon_n (t)| \geq v_n (\bar m_n (\lambda))\}.
\end{align}
Let the waiting time after the delivery of the $i$-th sample is 
\begin{align} \label{wait_time}
& Z_{n,i} (\bar m_n (\lambda)) = \nonumber\\
& \inf\{z \geq 0: |\varepsilon_n (D_{n,i} (\bar m_n (\lambda))+ z)| \geq v_n (\bar m_n (\lambda))\},
\end{align}
which represents the minimum time $z$ upto which it needs to wait after the delivery of the $i$-th sample before generating the $(i+1)$-th sample.
Hence, by using \eqref{delivery}, \eqref{sampling}, and \eqref{wait_time} the sampling time $S_{n, i} (\bar m_n (\lambda))$ and the delivery time $D_{n, i} (\bar m_n (\lambda))$ can also be expressed as
\begin{align}
& S_{n, i} (\bar m_n (\lambda)) = \sum_{j=0}^{i-1} Y_{n,j} + Z_{n,j} (\bar m_n (\lambda)) \label{app1},\\
& D_{n, i} (\bar m_n (\lambda)) = \sum_{j=0}^{i-1} Y_{n,j} + Z_{n,j} (\bar m_n (\lambda)) + Y_{n,i}. \label{app2}
\end{align}
By substituting \eqref{app1} and \eqref{app2} into \eqref{mse_prev}, we get that
\begin{align} \label{app3}
& \bar m_n (\lambda) = \nonumber\\
& \frac{\mathbb{E} \bigg[\int_{Y_{n,i}}^{Y_{n,i} + Z_{n,i} (\bar m_n (\lambda)) + Y_{n, i+1}} w_n \varepsilon_n ^2 (s) ds\bigg] + \lambda \mathbb{E} [Y_{n,i+1}]}{\mathbb{E} [Y_{n,i+1} + Z_{n,i} (\bar m_n (\lambda))]}.
\end{align}
The optimal objective value $ \bar m_n (\lambda)$ in \eqref{app3} is exactly equal to the root of the following equation:
\begin{align} \label{fixed_pt}
f(\beta_n) + \lambda \mathbb{E} [Y_{n,i+1}] =0,
\end{align}
where
\begin{align}
f(\beta_n) = & \mathbb{E} \bigg[\int_{Y_{n,i}}^{Y_{n,i} + Z_{n,i} (\beta_n) + Y_{n, i+1}} w_n \varepsilon_n ^2 (s) ds\bigg] \nonumber\\
& - \beta_n \mathbb{E} [Z_{n,i} (\beta_n) + Y_{n,i+1}].
\end{align}
Because $f(\beta_n)$ is a concave, continuous, and strictly decreasing function of $\beta_n$ \cite[Lemma 2]{Ornee_TON}, from \eqref{fixed_pt}, it is evident that the root of \eqref{fixed_pt} is unique and continuous in $\lambda$. Hence, $\bar m_n (\lambda)$ is unique and continuous in $\lambda$. From \eqref{fixed_pt}, we get that
\begin{align} \label{app4}
f(\beta_n) = -\lambda \mathbb{E} [Y_{n, i+1}].
\end{align}
For any $0 \leq \lambda_1 \leq \lambda_2$ and $\beta_n = \bar m_n (\lambda)$, from \eqref{app4}, we have
\begin{align}
& f(\bar m_n (\lambda_1)) = -\lambda_1 \mathbb{E} [Y_{n,i+1}], \\
& f(\bar m_n (\lambda_2)) = -\lambda_2 \mathbb{E} [Y_{n,i+1}].
\end{align}
As $f(\beta_n)$ is a continuous and strictly decreasing function of $\beta_n$, for any non-negative $\lambda_2 > \lambda_1$ implies
$\bar m_n (\lambda_1) < \bar m_n (\lambda_2)$. Therefore, $\bar m_n (\lambda)$ is continuous and strictly increasing function of $\lambda$.
\end{proof}

The next task is to show the properties of the threshold $v_n (\bar m_n (\lambda))$ in \eqref{threshold_policy} by using the \eqref{threhsold} for the three cases of Gauss-Markov processes. In that sequel, we need to use the following lemma.

\begin{lemma} \label{lemma_threshold}
The threshold $v_n (\bar m_n (\lambda))$ is continuous and strictly increasing in $\lambda$ irrespective of the signal structure.
\end{lemma}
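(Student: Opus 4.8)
The plan is to split the claim into two independent facts and compose them. By Lemma~\ref{monotonicity_1}, the optimal objective value $\bar m_n(\lambda)$ is a continuous and strictly increasing function of $\lambda$, and (as is established in, or readily extracted from, the proof of Proposition~\ref{opt_sampler_theorem}) its range is contained in the interval of values $\beta_n$ on which the threshold formula \eqref{threhsold} is well defined. Hence it suffices to prove that $\beta_n \mapsto v_n(\beta_n)$, given by \eqref{threhsold}, is continuous and strictly increasing on that interval; the lemma then follows because a composition of continuous, strictly increasing maps is continuous and strictly increasing.

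To verify the monotonicity of $v_n(\cdot)$ I would argue case by case, which is what ``irrespective of the signal structure'' amounts to. For $\theta_n=0$ we have $v_n(\beta_n)=w_n^{-1/2}\sqrt{3(\beta_n-w_n\sigma_n^2\mathbb{E}[Y_{n,i}])}$, which is visibly continuous and strictly increasing in $\beta_n$ on $\{\beta_n\ge w_n\sigma_n^2\mathbb{E}[Y_{n,i}]\}$. For $\theta_n>0$, set $a_n:=w_n\sigma_n^2/(2\theta_n)>0$; since $Y_{n,i}>0$ we have $0<\mathbb{E}[e^{-2\theta_n Y_{n,i}}]<1$, so on the admissible range $a_n\bigl(1-\mathbb{E}[e^{-2\theta_n Y_{n,i}}]\bigr)\le\beta_n<a_n$ the argument $g_n(\beta_n):=a_n\mathbb{E}[e^{-2\theta_n Y_{n,i}}]/(a_n-\beta_n)$ of $Q^{-1}$ is finite, lies in $[1,\infty)$, and satisfies $g_n'(\beta_n)=a_n\mathbb{E}[e^{-2\theta_n Y_{n,i}}]/(a_n-\beta_n)^2>0$; together with the facts recorded after \eqref{K} that $Q$ is strictly increasing on $[0,\infty)$ with $Q(0)=1$ (hence $Q^{-1}$ is continuous and strictly increasing on $[1,\infty)$), this gives that $v_n(\beta_n)=(\sigma_n/\sqrt{\theta_n})\,Q^{-1}(g_n(\beta_n))$ is continuous and strictly increasing. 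For $\theta_n<0$, write $\rho_n:=-\theta_n>0$ and $b_n:=w_n\sigma_n^2/(2\rho_n)>0$; multiplying numerator and denominator of the argument of $K^{-1}$ in \eqref{threhsold} by $-1$ recasts it as $h_n(\beta_n):=b_n\mathbb{E}[e^{2\rho_n Y_{n,i}}]/(b_n+\beta_n)$, which on the admissible range $\beta_n\ge b_n\bigl(\mathbb{E}[e^{2\rho_n Y_{n,i}}]-1\bigr)$ is finite, lies in $(0,1]$, and is strictly \emph{decreasing} with $h_n'(\beta_n)=-b_n\mathbb{E}[e^{2\rho_n Y_{n,i}}]/(b_n+\beta_n)^2<0$; since $K$ is strictly decreasing on $[0,\infty)$ with $K(0)=1$, its inverse $K^{-1}$ on $(0,1]$ is continuous and strictly decreasing, so $v_n(\beta_n)=(\sigma_n/\sqrt{\rho_n})\,K^{-1}(h_n(\beta_n))$ is again continuous and strictly increasing, the two sign reversals cancelling.

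Assembling the pieces: Lemma~\ref{monotonicity_1} gives that $\lambda\mapsto\bar m_n(\lambda)$ is continuous and strictly increasing with range inside the admissible interval, and the previous paragraph gives that $\beta_n\mapsto v_n(\beta_n)$ is continuous and strictly increasing there, so $\lambda\mapsto v_n(\bar m_n(\lambda))$ is continuous and strictly increasing in all three cases, which is the assertion.

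I expect the only delicate point to be the bookkeeping of the admissible range: one has to confirm that $\beta_n=\bar m_n(\lambda)$ never leaves the interval where \eqref{threhsold} makes sense, i.e.\ that the denominators $a_n-\beta_n$ (resp.\ $b_n+\beta_n$) are bounded away from $0$ and the arguments of $Q^{-1}$ and $K^{-1}$ stay in $[1,\infty)$ (resp.\ $(0,1]$) — this is exactly the constraint $\frac{\sigma_n^2}{2\theta_n}\mathbb{E}[1-e^{-2\theta_n Y_{n,i}}]\le\beta_n<\infty$ (and $\beta_n<a_n$ when $\theta_n>0$) already used in the proof of Proposition~\ref{opt_sampler_theorem}, noting also the implicit finiteness of $\mathbb{E}[e^{2\rho_n Y_{n,i}}]$ in the unstable case. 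Once that is in hand, the rest is elementary calculus plus the stated monotonicity of $Q$ and $K$, with no further analytic difficulty.
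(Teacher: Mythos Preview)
Your proposal is correct and follows essentially the same approach as the paper: both factor the claim as the composition of Lemma~\ref{monotonicity_1} (monotonicity of $\bar m_n(\lambda)$) with a case-by-case verification that $\beta_n\mapsto v_n(\beta_n)$ is strictly increasing via the monotonicity of $Q$ and $K$. The only cosmetic difference is that the paper computes $Q'$ and $K'$ explicitly and applies the inverse-function derivative rule, whereas you invoke the stated monotonicity of $Q$ and $K$ directly and track the sign through the composition; your bookkeeping of the admissible range for $\beta_n$ is also more careful than the paper's.
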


\begin{proof}
For $\theta_n > 0$, $v_n (\bar m_n (\lambda))$ is as follows
\begin{align}
v_n (\bar m_n (\lambda)) = \frac{\sigma_n}{\sqrt{\theta_n}} Q^{-1} \bigg(\frac{w_n \frac{\sigma_n ^2}{2 \theta_n} \mathbb{E} [e^{-2 \theta_n Y_{n,i}}]}{w_n \frac{\sigma_n ^2}{2 \theta_n} - \bar m_n (\lambda)}\bigg).
\end{align}
The derivative of $v_n (\bar m_n (\lambda))$ is given by
\begin{align} \label{v'}
v'_n (\bar m_n (\lambda)) = \frac{\sigma_n}{\sqrt{\theta_n}} \bigg\{Q^{-1} \bigg(\frac{w_n \frac{\sigma_n ^2}{2 \theta_n} \mathbb{E} [e^{-2 \theta_n Y_{n,i}}]}{w_n \frac{\sigma_n ^2}{2 \theta_n} - \bar m_n (\lambda)}\bigg)\bigg\}'.
\end{align} 
Let 
\begin{align}
Q^{-1} \bigg(\frac{w_n \frac{\sigma_n ^2}{2 \theta_n} \mathbb{E} [e^{-2 \theta_n Y_{n,i}}]}{w_n \frac{\sigma_n ^2}{2 \theta_n} - \bar m_n (\lambda)}\bigg) = y.
\end{align}
By using the property of derivative of an inverse function \cite{strang1991calculus}, $v'_n (\bar m_n (\lambda))$ in \eqref{v'} can be expressed as
\begin{align}
v'_n (\bar m_n (\lambda)) = \frac{\sigma_n}{\sqrt{\theta_n}} \frac{1}{Q’ (y)} \frac{w_n \frac{\sigma_n ^2}{2 \theta_n} \mathbb{E} [e^{-2 \theta_n Y_{n,i}}]}{(w_n \frac{\sigma_n ^2}{2 \theta_n} - \bar m_n (\lambda))^2},
\end{align}
where $Q’(x)$ is as follows
\begin{align}
Q’(x) = -\frac{\sqrt{\pi}}{2} \frac{e^{x^2}}{x^2} \text{erf} (x) + \sqrt{\pi} e^{x^2} \text{erf} (x) + \frac{1}{x} > 0,  
\end{align}
for all $x > 0$. Hence, by using Lemma \ref{monotonicity_1} and the fact that $v'_n (\bar m_n (\lambda)) > 0$, it is proved that $v_n (\bar m_n (\lambda))$ is a strictly increasing function of $\lambda$.

In addition, for $\theta_n < 0$, $v_n (\bar m_n (\lambda))$ can be expressed as
\begin{align}
v_n (\bar m_n (\lambda)) = \frac{\sigma_n}{\sqrt{-\theta_n}} K^{-1} \bigg(\frac{w_n \frac{\sigma_n ^2}{2 \theta_n} \mathbb{E} [e^{-2 \theta_n Y_{n,i}}]}{w_n \frac{\sigma_n ^2}{2 \theta_n} - \bar m_n (\lambda)}\bigg).
\end{align}
The derivative of $v_n (\bar m_n (\lambda))$ is then given by
\begin{align} \label{v'}
v'_n (\bar m_n (\lambda)) = \frac{\sigma_n}{\sqrt{-\theta_n}} \bigg\{K^{-1} \bigg(\frac{w_n \frac{\sigma_n ^2}{2 \theta_n} \mathbb{E} [e^{-2 \theta_n Y_{n,i}}]}{w_n \frac{\sigma_n ^2}{2 \theta_n} - \bar m_n (\lambda)}\bigg)\bigg\}'.
\end{align} 
Let 
\begin{align}
K^{-1} \bigg(\frac{w_n \frac{\sigma_n ^2}{2 \theta_n} \mathbb{E} [e^{-2 \theta_n Y_{n,i}}]}{w_n \frac{\sigma_n ^2}{2 \theta_n} - \bar m_n (\lambda)}\bigg) = p.
\end{align}
Utilizing the property of the derivative of an inverse function, $v'_n (\bar m_n (\lambda))$ in \eqref{v'} can be expressed as
\begin{align}
v'_n (\bar m_n (\lambda)) = \frac{\sigma_n}{\sqrt{-\theta_n}} \frac{1}{K' (p)} \frac{w_n \frac{\sigma_n ^2}{2 \theta_n} \mathbb{E} [e^{-2 \theta_n Y_{n,i}}]}{(w_n \frac{\sigma_n ^2}{2 \theta_n} - \bar m_n (\lambda))^2},
\end{align}
where $K'(x)$ is as follows
\begin{align}
K'(x) = -\frac{\sqrt{\pi}}{2} \frac{e^{-x^2}}{x^2} \text{erfi} (x) - \sqrt{\pi} e^{-x^2} \text{erfi} (x) + \frac{1}{x} < 0,  
\end{align}
for all $x > 0$. Hence, from Lemma \ref{monotonicity_1} and the fact that $v'_n (\bar m_n (\lambda)) > 0$, it is proved that $v_n (\bar m_n (\lambda))$ is a strictly increasing function of $\lambda$. By using similar proof arguments, the result can be proven for $\theta_n =0$. Hence, combining the results for all of the three cases of $\theta_n$, Lemma \ref{lemma_threshold} is proven.
\end{proof}



From \eqref{index_set}, the set $\Psi_n (\lambda)$ is
\begin{align} \label{index_set_app}
\Psi_n (\lambda) = \{(\varepsilon, \gamma) : \gamma > 0 {\thinspace} {\text{or}} {\thinspace} |\varepsilon| < v_n (\bar m_n (\lambda))\}. 
\end{align}
Moreover, for a given $\varepsilon$, if $\gamma =0$ and $\varepsilon \in \Psi_n (\lambda_1)$, then
\begin{align}
|\varepsilon| < v_n (\bar m_n (\lambda_1)).
\end{align}
Because Lemma \ref{lemma_threshold} implies that $v_n (\bar m_n (\lambda))$ is continuous and strictly increasing in $\lambda$, we get that for $\gamma=0$, $(\varepsilon, \gamma) \in \Psi_n (\lambda_2)$ for any $\lambda_1 < \lambda_2$. Hence, $\Psi_n (\lambda_1) \subseteq \Psi_n (\lambda_2)$. Thus from the definition of indexability, the bandit $n$ is indexable for all $n$. This completes the proof.

\section{Proof of Theorem \ref{Whittle index}} \label{proof_of_Whittle_index}

Substituting \eqref{index_set} into the definition of Whittle index in \eqref{Whittle_index_general_definition}, we obtain that
\begin{align} \label{Whitt_2}
\alpha_n (\varepsilon, \gamma) = \inf_{\lambda} \{\lambda \in \mathbb{R} : \gamma > 0 {\thinspace} \text{or} {\thinspace} |\varepsilon| < v_n (\lambda)\}.
\end{align}
First, we consider the case when $\gamma =0$. By using Lemma \ref{lemma_threshold}, \eqref{Whitt_2} implies that the Whittle index $\alpha_n (\varepsilon, \gamma)$ is unique and it satisfies the following at $\lambda = \alpha_n (\varepsilon, \gamma)$:
\begin{align} \label{Whitt_3}
|\varepsilon| = v_n (\bar m_n (\alpha_n (\varepsilon, \gamma))),
\end{align}
where $v_n (\cdot)$ is defined in \eqref{threhsold}. First, consider the case of stable OU process (i.e., $\theta_n > 0$). Substituting \eqref{threhsold} for $\theta_n > 0$ into \eqref{Whitt_3}, we get that
\begin{align}
& |\varepsilon| = \frac{{\sigma_n}}{\sqrt \theta_n} Q^{-1} \frac{w_n \frac{\sigma_n^2}{2 \theta_n} \mathbb{E} [e^{-2 \theta_n Y_{n,i}}]}{w_n \frac{\sigma_n^2}{2 \theta_n} - {\bar m}_n (\alpha_n (\varepsilon, \gamma))},
\end{align}
which implies
\begin{align}
& Q \left(\frac{\sqrt{\theta_n}}{\sigma_n} |\varepsilon|\right) = \frac{w_n \frac{\sigma_n^2}{2 \theta_n} \mathbb{E} [e^{-2 \theta_n Y_{n,i}}]}{w_n \frac{\sigma_n^2}{2 \theta_n} - \bar m_n (\alpha_n (\varepsilon, \gamma))}. \label{W3}
\end{align}
After some rearrangements, \eqref{W3} becomes
\begin{align} \label{Whitt_4}
\bar m_n (\alpha_n (\varepsilon, \gamma)) = \frac{w_n \frac{\sigma_n^2}{2 \theta_n} \left(Q\left(\frac{\sqrt{\theta_n}}{\sigma_n} |\varepsilon|\right) - \mathbb{E} [e^{-2 \theta_n Y_{n,i}}]\right)}{Q\left(\frac{\sqrt{\theta_n}}{\sigma_n} |\varepsilon|\right)}. 
\end{align}

The optimal objective value $\bar m_n (\alpha_n (\varepsilon))$ to problem \eqref{per_arm_problem}  is defined by
\begin{align} \label{mse_Whitt}
& \bar m_n (\alpha_n (\varepsilon, \gamma))= \nonumber\\
& \frac{\mathbb{E} \!\bigg[\!\!\int_{D_{n,i} (\bar m_n (\alpha_n (\varepsilon, \gamma)))}^{D_{n,i+1} (\bar m_n (\alpha_n (\varepsilon, \gamma)))} \!\!w_n \varepsilon_n ^2 (s) ds\!\bigg] \!\!+\!\! \alpha_n\! (\!\varepsilon, \!\gamma) \mathbb{E} [Y_{n,i+1}]}{\mathbb{E} [D_{n,i+1} (\bar m_n (\alpha_n (\varepsilon, \gamma))) - D_{n,i} (\bar m_n (\alpha_n (\varepsilon, \gamma)))]}.
\end{align}
Substituting \eqref{mse_Whitt} into \eqref{Whitt_4} implies
\begin{align}
&  \frac{\mathbb{E} \!\bigg[\!\!\int_{D_{n,i} (\bar m_n (\alpha_n (\varepsilon, \gamma)))}^{D_{n,i+1} (\bar m_n (\alpha_n (\varepsilon, \gamma)))} \!\!w_n \varepsilon_n ^2 (s) ds\!\bigg] \!\!+\!\! \alpha_n\! (\!\varepsilon, \!\gamma) \mathbb{E} [Y_{n,i+1}]}{\mathbb{E} [D_{n,i+1} (\bar m_n (\alpha_n (\varepsilon, \gamma))) - D_{n,i} (\bar m_n (\alpha_n (\varepsilon, \gamma)))]} \nonumber\\ 
&= \frac{w_n \frac{\sigma_n^2}{2 \theta_n} \left(Q\left(\frac{\sqrt{\theta_n}}{\sigma_n} |\varepsilon|\right) - \mathbb{E} [e^{-2 \theta_n Y_{n,i}}]\right)}{Q\left(\frac{\sqrt{\theta_n}}{\sigma_n} |\varepsilon|\right)},
\end{align}
which yields
\begin{align} \label{proof_Whitt_1}
& \mathbb{E} \!\bigg[\!\!\int_{D_{n,i} (\bar m_n (\alpha_n (\varepsilon, \gamma)))}^{D_{n,i+1} (\bar m_n (\alpha_n (\varepsilon, \gamma)))} \!\!w_n \varepsilon_n ^2 (s) ds\!\bigg] \!\!+\!\! \alpha_n\! (\!\varepsilon, \!\gamma) \mathbb{E} [Y_{n,i+1}] = \nonumber\\
& \mathbb{E} [D_{n,i+1} (\bar m_n (\alpha_n (\varepsilon, \gamma))) - D_{n,i} (\bar m_n (\alpha_n (\varepsilon, \gamma)))] \nonumber\\
& \!\frac{w_n \!\frac{\sigma_n^2}{2 \theta_n}\!\! \!\left(\!Q\!\!\left(\!\frac{\sqrt{\theta_n}}{\sigma_n} \!|\varepsilon|\!\right)\!\! -\!\! \mathbb{E} [e^{\!-2 \theta_n \!Y_{n,i}}\!]\!\right)}{Q\left(\frac{\sqrt{\theta_n}}{\sigma_n} |\varepsilon|\right)}.
\end{align}
After rearranging \eqref{proof_Whitt_1}, we get that
\begin{align} \label{stable_index}
& \alpha_n (\varepsilon, \gamma) = \nonumber\\
& \frac{1}{\mathbb{E} [Y_{n,i}]} \bigg\{\mathbb{E} [D_{n,i+1} (\bar m_n (\alpha_n (\varepsilon, \gamma))) - D_{n,i} (\bar m_n (\alpha_n (\varepsilon, \gamma)))] \nonumber\\
& ~~~~~~~~~~~~\frac{w_n \frac{\sigma_n^2}{2 \theta_n} \left(Q\left(\frac{\sqrt{\theta_n}}{\sigma_n} |\varepsilon|\right) - \mathbb{E} [e^{-2 \theta_n Y_{n,i}}]\right)}{Q\left(\frac{\sqrt{\theta_n}}{\sigma_n} |\varepsilon|\right)} \nonumber\\
& ~~~~~~~~~~~~- \mathbb{E} \!\bigg[\!\!\int_{D_{n,i} (\bar m_n (\alpha_n (\varepsilon, \gamma)))}^{D_{n,i+1} (\bar m_n (\alpha_n (\varepsilon, \gamma)))} \!\!\!\!\!\!\!\!\!\!\!\!w_n \varepsilon_n ^2 (s) ds\!\bigg] \bigg\},
\end{align}
where \eqref{stable_index} holds because $Y_{n,i}$'s are \emph{i.i.d.}. In Theorem \ref{Whittle index}, we have used $D_{n,i} (\varepsilon)$ to represent $D_{n,i} (\varepsilon, \bar m_n (\alpha_n (\varepsilon, \gamma)))$ because $D_{n,i}$ is dependent on $m_n (\alpha_n (\varepsilon, \gamma))$ through the state $\varepsilon$. 

In addition, for unstable OU process, when $\theta_n < 0$, by using the similar proof arguments, we can prove \eqref{Whitt_index_unstable} for $\theta_n < 0$.

Subsequently, when $\theta_n =0$, by using \eqref{threhsold}, \eqref{Whittle_index_general_definition}, and Lemma \ref{lemma_threshold}, at $\lambda = \alpha_n (\varepsilon, \gamma)$, we get that
\begin{align}
& |\varepsilon| = \frac{1}{\sqrt{w_n}} \sqrt{3 (\bar m_n (\alpha_n (\varepsilon, \gamma)) - w_n \sigma_n ^2 \mathbb{E} [Y_{n,i}])}, 
\end{align}
which implies
\begin{align} \label{proof_Whitt_2}
& w_n \varepsilon^2 = 3 (\bar m_n (\alpha_n (\varepsilon, \gamma)) - w_n \sigma_n ^2 \mathbb{E} [Y_{n,i}]).
\end{align}
After some rearrangements, \eqref{proof_Whitt_2} becomes
\begin{align}
& w_n \bigg(\frac{\varepsilon^2}{3} + \sigma_n ^2 \mathbb{E} [Y_{n,i}]\bigg) = \bar m_n (\alpha_n (\varepsilon, \gamma)). \label{corollary_eq1}
\end{align}
Substituting \eqref{mse_Whitt} into \eqref{corollary_eq1}, we obtain that
\begin{align}
&  w_n \bigg(\frac{\varepsilon^2}{3} + \sigma_n ^2 \mathbb{E} [Y_{n,i}]\bigg) = \nonumber\\
& \frac{\mathbb{E} \!\bigg[\!\!\int_{D_{n,i} (\bar m_n (\alpha_n (\varepsilon, \gamma)))}^{D_{n,i+1} (\bar m_n (\alpha_n (\varepsilon, \gamma)))} \!\!w_n \varepsilon_n ^2 (s) ds\!\bigg] \!\!+\!\! \alpha_n\! (\!\varepsilon, \!\gamma) \mathbb{E} [Y_{n,i+1}]}{\mathbb{E} [D_{n,i+1} (\bar m_n (\alpha_n (\varepsilon, \gamma))) - D_{n,i} (\bar m_n (\alpha_n (\varepsilon, \gamma)))]}, \label{corollary_eq2}
\end{align}
which yields
\begin{align}
& w_n \mathbb{E} \!\bigg[\!\!\int_{D_{n,i} (\bar m_n (\alpha_n (\varepsilon, \gamma)))}^{D_{n,i+1} (\bar m_n (\alpha_n (\varepsilon, \gamma)))} \!\!w_n \varepsilon_n ^2 (s) ds\!\bigg] \!\!+\!\! \alpha_n\! (\!\varepsilon, \!\gamma) \mathbb{E} [Y_{n,i+1}] \nonumber\\
=& w_n \mathbb{E} [D_{n,i+1} (\bar m_n (\alpha_n (\varepsilon, \gamma))) - D_{n,i} (\bar m_n (\alpha_n (\varepsilon, \gamma)))] \nonumber\\
&\bigg(\frac{\varepsilon^2}{3} + \sigma_n ^2 \mathbb{E} [Y_{n,i}]\bigg),
\end{align}
from which \eqref{wiener_index} follows for $\theta_n =0$. 

Next, we consider $\gamma > 0$. From Definition \ref{def_1} and Definition \ref{def_2}, the possible infimum cost $\lambda$  for which to activate and not to activate are equally desirable is $-\infty$, from which \eqref{Whitt_infty} follows. This concludes the proof.


\section{Proof of Lemma \ref{lemma1}} \label{proof_lemma1}
In order to prove Lemma \ref{lemma1}, we need to consider the following 
two cases:  

\emph{Case 1:} If $|\varepsilon_n (D_{n,i})| = |O_{n, D_{n,i}-S_{n,i}} | = |O_{n, Y_{n,i}}| \geq |\varepsilon|$, then $S_{n,i+1} = D_{n,i}$. Hence, 
\begin{align} \label{eqn2}
\mathbb{E} [D_{n,i+1} (\varepsilon) - D_{n,i} (\varepsilon)] = & \mathbb{E} [S_{n,i+1} (\varepsilon) + Y_{n,i+1} - S_{n, i+1} (\varepsilon)], \nonumber\\
=& \mathbb{E} [Y_{n,i+1}].
\end{align}
Using the fact that the $Y_{n,i}$'s are independent of the OU process, we can obtain
\begin{align}\label{eq_expectation_9}
& \mathbb{E} \bigg[\int_{D_{n,i} (\varepsilon)}^{D_{n,i+1} (\varepsilon)} \varepsilon_n ^2 (s) ds \bigg| O_{n, Y_{n,i}}, |O_{n, Y_{n,i}} | \geq |\varepsilon|\bigg] \nonumber\\
=& \mathbb{E} \bigg[\int_{Y_{n,i}}^{Y_{n,i}+ Y_{n,i+1}} O_{n,s}^2 ds \bigg| O_{n, Y_{n,i}}, |O_{n, Y_{n,i}} | \geq |\varepsilon|\bigg] \nonumber\\
=& \mathbb{E} \bigg[\int_{0}^{Y_{n,i}+ Y_{n,i+1}} O_{n,s}^2 ds \bigg| O_{n, Y_{n,i}}, |O_{n, Y_{n,i}} | \geq |\varepsilon|\bigg] \nonumber\\
& -  \mathbb{E} \bigg[\int_{0}^{Y_{n,i}} O_{n,s}^2 ds \bigg| O_{n, Y_{n,i}}, |O_{n, Y_{n,i}} | \geq |\varepsilon|\bigg].
\end{align}
By invoking Lemma \ref{lem_stop}, we get that
\begin{align} \label{lemma1_new_1}
& \mathbb{E} \bigg[\int_{0}^{Y_{n,i}+ Y_{n,i+1}} O_{n,s}^2 ds \bigg|O_{n, Y_{n,i}}, |O_{n, Y_{n,i}} | \geq |\varepsilon|\bigg] \nonumber\\
= & R_{n,2} (O_{n, Y_{n,i} + Y_{n, i+1}}),
\end{align}
\begin{align} \label{lemma1_new_2}
\mathbb{E} \bigg[\int_{0}^{Y_{n,i}} O_{n,s}^2 ds \bigg| O_{n, Y_{n,i}}, |O_{n, Y_{n,i}} | \geq |\varepsilon|\bigg] =  R_{n,2} (O_{n, Y_{n,i}}).
\end{align}
Substituting \eqref{lemma1_new_1} and \eqref{lemma1_new_2} into \eqref{eq_expectation_9}, it becomes
\begin{align}
& \mathbb{E} \bigg[\int_{D_{n,i} (\varepsilon)}^{D_{n,i+1} (\varepsilon)} \varepsilon_n ^2 (s) ds \bigg| O_{n, Y_{n,i}}, |O_{n, Y_{n,i}} | \geq |\varepsilon|\bigg] \nonumber\\ 
=& R_{n,2} (O_{n, Y_{n,i} + Y_{n, i+1}}) - R_{n,2} (O_{n, Y_{n,i}}) \nonumber\\
=& R_{n,2} (|\varepsilon| + O_{n, Y_{n, i+1}}) - R_{n,2} (O_{n, Y_{n,i}}), \label{lemma1_new_3}
\end{align}
where \eqref{lemma1_new_3} holds because at $t=D_{n,i} (\varepsilon)$, the estimation error $O_{n, Y_{n,i}}$ reaches the threshold $|\varepsilon|$.

\emph{Case 2:} If $|\varepsilon_n (D_{n,i})|= |O_{n, Y_{n,i}} | < |\varepsilon|$, then,  almost surely,
\begin{align}
|\varepsilon_n (S_{n, i+1}) | = |\varepsilon|.
\end{align}
Then,
\begin{align} \label{proof_lemma1_1}
& \mathbb{E} [D_{n, i+1} (\varepsilon) - D_{n,i} (\varepsilon)] \nonumber\\
= & \mathbb{E} [D_{n, i+1} (\varepsilon) - S_{n, i+1} (\varepsilon) + S_{n, i+1} (\varepsilon) - S_{n,i} (\varepsilon) \nonumber\\
& ~~~+ S_{n,i} (\varepsilon) -D_{n,i} (\varepsilon)]. 
\end{align}
Because $D_{n, i+1} (\varepsilon) = S_{n, i+1} (\varepsilon) + Y_{n, i+1}$, by invoking Lemma \ref{lem_stop}, we can obtain the remaining expectations in \eqref{proof_lemma1_1} which are given by
\begin{align}\label{eq_expectation_2}
&\mathbb{E}\left[S_{n,i+1} (\varepsilon)- S_{n,i} (\varepsilon)  \Big|O_{n, Y_{n,i}}, |O_{n, Y_{n,i}} |< |\varepsilon|\right] = R_{n,1} (\varepsilon),\\
&\mathbb{E}\left[D_{n,i} (\varepsilon) \!-\! S_{n,i} (\varepsilon) \Big| O_{n, Y_{n,i}}, |O_{n, Y_{n,i}} |\!<\! |\varepsilon|\right] \!\!=\!\! R_{n,1} (O_{n, Y_{n,i}}).\label{eq_expectation_3}
\end{align}
Using \eqref{eq_expectation_2} and \eqref{eq_expectation_3}, we get that
\begin{align}\label{eq_expectation_7}
& \mathbb{E}\left[D_{n, i+1} (\varepsilon) - D_{n,i} (\varepsilon) \Big|O_{n, Y_{n,i}}, |O_{n, Y_{n,i}} |< |\varepsilon|\right] \nonumber\\
=& \mathbb{E} [Y_{n,i+1}] + R_{n,1} (\varepsilon) - R_{n,1} (O_{n, Y_{n,i}}).
\end{align}
In addition, 
\begin{align} \label{eqn1}
& \mathbb{E} \bigg[\int_{D_{n,i} (\varepsilon)}^{D_{n,i+1} (\varepsilon)} \varepsilon_n ^2 (s) ds \bigg| O_{n, Y_{n,i}}, |O_{n, Y_{n,i}} | < |\varepsilon|\bigg] \nonumber\\
=& \mathbb{E} \bigg[\int_{Y_{n,i}}^{Y_{n,i} + Z_{n,i} (\varepsilon) + Y_{n,i+1}} O_{n,s} ^2 ds\bigg] \nonumber\\
=& \mathbb{E} \bigg[\int_{0}^{Y_{n,i} + Z_{n,i} (\varepsilon) + Y_{n, i+1}} O_{n,s} ^2 ds\bigg] 
- \mathbb{E} \bigg[\int_{0}^{Y_{n,i} } O_{n,s} ^2 ds\bigg].
\end{align}
By invoking Lemma \ref{lem_stop} again, we can obtain
\begin{align}\label{eq_expectation_4}
&\mathbb{E}\left[\int_{0}^{Y_{n,i} + Z_{n,i} (\varepsilon) + Y_{n, i+1}} O_{n,s}^2 ds \bigg|O_{n, Y_{n,i}}, |O_{n, Y_{n,i}} |< |\varepsilon|\right] \nonumber\\
=& R_{n,2} (O_{n, Y_{n,i} + Z_{n,i} (\varepsilon) + Y_{n, i+1}}),\\
&\mathbb{E}\left[\!\int_{0}^{Y_{n,i}} \!\!\!O_{n,s}^2 ds \bigg| O_{n, Y_{n,i}}, |O_{n, Y_{n,i}} |< |\varepsilon|\right] = R_{n,2} (O_{n, Y_{n,i}}).\label{eq_expectation_5}
\end{align}
By using \eqref{eq_expectation_4} and \eqref{eq_expectation_5} in \eqref{eqn1}, we have
\begin{align}
& \mathbb{E} \bigg[\int_{D_{n,i} (\varepsilon)}^{D_{n,i+1} (\varepsilon)} \varepsilon_n ^2 (s) ds \bigg| O_{n, Y_{n,i}}, |O_{n, Y_{n,i}} | < |\varepsilon|\bigg] \nonumber\\
=& R_{n,2} (O_{n, Y_{n,i} + Z_{n,i} (\varepsilon) + Y_{n, i+1}}) - R_{n,2} (O_{n, Y_{n,i}}) \nonumber\\
=& R_{n,2} (|O_{n, Y_{n,i}}| + O_{n, Y_{n, i+1}}) - R_{n,2} (O_{n, Y_{n,i}}),\label{lemma1_new_4}
\end{align}
where \eqref{lemma1_new_4} holds because at $t=D_{n,i} (\varepsilon)$, the estimation error $O_{n, Y_{n,i}}$ is below the threshold $|\varepsilon|$.

By combining \eqref{eqn2} and \eqref{eq_expectation_7} of the two cases, yields
\begin{align}\label{eq_expectation_8}
& \mathbb{E}\left[D_{n,i+1} (\varepsilon) - D_{n,i} (\varepsilon) \Big|O_{n, Y_{n,i}}\right] \nonumber\\
= & \max\{R_{n,1} (|\varepsilon|) - R_{n,1} (O_{n, Y_{n,i}}),0\} + \mathbb{E} [Y_{n,i+1}]. 
\end{align}
By taking the expectation over $O_{n, Y_{n,i}}$ in \eqref{eq_expectation_8} gives
\begin{align}
&\mathbb{E} \bigg[\mathbb{E}\left[D_{n,i+1} (\varepsilon) - D_{n,i} (\varepsilon) \Big|O_{n, Y_{n,i}}\right]\bigg] \nonumber\\
= & \mathbb{E} [\max\{R_{n,1} (|\varepsilon|) - R_{n,1} (O_{n, Y_{n,i}}),0\} 
+ Y_{n,i+1}] \nonumber\\
= & \mathbb{E} [\max\{R_{n,1} (|\varepsilon|) - R_{n,1} (O_{n, Y_{n,i}}),0\} 
+ R_{n,1} (O_{n, Y_{n,i}})], \label{proof_lemma1_2}
\end{align}
where \eqref{proof_lemma1_2} follows from the fact that $Y_{n,i}$'s are \emph{i.i.d.} and \eqref{eq_stop11} in Lemma \ref{lem_stop}. Because $R_{n,1} (\cdot)$ is an even function, form \eqref{proof_lemma1_2} we get that
\begin{align} \label{eq_expectation_last}
& \mathbb{E}\left[D_{n,i+1} (\varepsilon) - D_{n,i} (\varepsilon) \big| O_{n, Y_{n,i}}\right] \nonumber\\
= & R_{n,1} (\max\{|\varepsilon|, |O_{n, Y_{n,i}}|\}).
\end{align}
Similarly, by combining \eqref{lemma1_new_3} and \eqref{lemma1_new_4}  of the two cases, yields
\begin{align}\label{eq_expectation_11}
&\mathbb{E} \bigg[\int_{D_{n,i} (\varepsilon)}^{D_{n,i+1} (\varepsilon)} \varepsilon_n ^2 (s) ds\bigg|O_{n, Y_{n,i}}\bigg] \nonumber\\
= & R_{n,2} (\max\{|\varepsilon|, |O_{n, Y_{n,i}}|\} + O_{n, Y_{n, i+1}}) - R_{n,2} (O_{n, Y_{n,i}}).
\end{align}
Finally, by taking the expectation over $O_{n, Y_{n,i}}$ in \eqref{eq_expectation_last} and \eqref{eq_expectation_11} and using the fact that $R_{n,1} (\cdot)$ and $R_{n,2}(\cdot)$ are even functions, Lemma \ref{lemma1} is proven.

\section{Proof of Lemma \ref{dummy_lemma}} \label{proof_dummy}

Because $\lambda$ represents the cost to activate an arm, it is optimal in \eqref{problem_dummy} to activate a \emph{dummy bandit} only when $\lambda < 0$. Conversely, when $\lambda \geq 0$, it is optimal not to activate the dummy bandit. Hence, from Definition \ref{def_1}, the \emph{dummy bandits} are always indexable. In addition, from Definition \ref{def_2} and the fact that the \emph{dummy bandits} are activated only when $\lambda < 0$, we get $\alpha_0 (\varepsilon, \gamma) =0$. 

\section{Proof of Theorem \ref{single_theorem_whittle}} \label{proof_of_single_whittle}

In order to prove Theorem \ref{single_theorem_whittle}, we first show that \eqref{single_solution} and \eqref{single_solution_whittle} are equivalent to each other. For single source, the source weight $w_1 = 1$ and the transmission cost $\lambda = 0$. 

We first show the proof for stable OU process, i.e., for $\theta_1 > 0$. When $\varepsilon = v_1 (\beta_1)$ in \eqref{single_threshold}, we have
\begin{align}
Q\left(\frac{\sqrt{\theta_1}}{\sigma_1} v_1 (\beta_1)\right) \!\!= &Q\left(\frac{\sqrt{\theta_1}}{\sigma_1} \frac{\sigma_1}{\sqrt{\theta_1}} Q^{-1} \bigg(\frac{\frac{\sigma_1 ^2}{2 \theta_1} \mathbb{E} [e^{-2 \theta_1 Y_{1,i}}]}{\frac{\sigma_1 ^2}{2 \theta_1} - \beta_1}\bigg)\right) \nonumber\\
=& \frac{\sigma_1 ^2 \mathbb{E} [e^{-2 \theta_1 Y_{1,i}}]}{\sigma_1 ^2 - 2 \theta_1 \beta_1}. \label{proof_thm4_1}
\end{align}
Substituting \eqref{proof_thm4_1} into \eqref{Whittle_index} when $\gamma=0$ for single source results, 
\begin{align}
& \alpha_1 (v_1 (\beta_1), 0)\! = \nonumber\\
& \!\!\frac{1}{\mathbb{E} [Y_{1,i}]}\!\bigg\{\!\mathbb{E} [D_{1, i+1} (\varepsilon)\! -\! D_{1,i} (\varepsilon)] \frac{\sigma_1^2}{2 \theta_1}\! \bigg(\!1 \!-\! \frac{\mathbb{E} [e^{-2 \theta_1 Y_{1,i}}]}{\frac{\sigma_1 ^2 \mathbb{E} [e^{-2 \theta_1 Y_{1,i}}]}{\sigma_1 ^2 - 2 \theta_1 \beta_1}}  \bigg) \nonumber\\
&~~~~~~~~~- \mathbb{E} \bigg[\int_{D_{1,i} (\varepsilon)}^{D_{1, i+1} (\varepsilon)} \varepsilon_{1} ^2 (s) ds\bigg]\bigg\},
\end{align}
which becomes
\begin{align}
& \alpha_1 (v_1 (\beta_1), 0) = \nonumber\\
& \frac{1}{\mathbb{E} [Y_{1,i}]} \!\bigg\{\!\mathbb{E} [D_{1, i+1} (\varepsilon) \!-\! D_{1,i} (\varepsilon)] \beta_1 \!-\! \mathbb{E} \bigg[\!\!\int_{D_{1,i} (\varepsilon)}^{D_{1, i+1} (\varepsilon)} \!\!\!\varepsilon_{1} ^2 (s) ds\!\bigg]\!\bigg\}. \label{proof_thm4_2}
\end{align}
The parameter $\beta_1$ in \eqref{proof_thm4_2} can be found from \eqref{single_beta} and \eqref{single_obj}, which is exactly equal to the optimal objective value ${\bar m}_{1, \text{opt}}$. Hence, substituting $\beta_1$ into \eqref{proof_thm4_2} yields
\begin{align}
\alpha_1 (v_1 (\beta_1), 0) =0.
\end{align}
If $\varepsilon > v_1 (\beta_1)$, as $Q(x)$ is a strictly increasing function in $[x, \infty)$, we have
\begin{align}
Q\left(\frac{\sqrt{\theta_1}}{\sigma_1} \varepsilon\right) > Q\left(\frac{\sqrt{\theta_1}}{\sigma_1} v_1 (\beta_1)\right),
\end{align}
which yields
\begin{align}
\bigg(1 - \frac{\mathbb{E} [e^{-2 \theta_1 Y_{1,i}}]}{Q\left(\frac{\sqrt{\theta_1}}{\sigma_1} \varepsilon\right)}  \bigg) > \bigg(1 - \frac{\mathbb{E} [e^{-2 \theta_1 Y_{1,i}}]}{Q\left(\frac{\sqrt{\theta_1}}{\sigma_1} v_1 (\beta_1)\right)}  \bigg).
\end{align}
From the above arguments, it is proved that $\alpha_1 (\varepsilon, 0) > \alpha_1 (v_1 (\beta_1), 0) =0$ for $\varepsilon > v_1 (\beta_1)$. 
Similarly, as $Q(x)$ is an even function, for $\varepsilon < v_1 (\beta_1)$, it holds that $\alpha_1 (\varepsilon, 0) < \alpha_1 (v_1 (\beta_1), 0) =0$. 

By using the similar proof arguments we can show that for all $\theta_1$, $\alpha_1 (v_1 (\beta_1), 0) = 0$, $\alpha_1 (\varepsilon, 0) > 0$ for $\varepsilon > v_1 (\beta_1)$, and $\alpha_1 (\varepsilon, 0) < 0$ for $\varepsilon < v_1 (\beta_1)$. Hence, the two statements in \eqref{single_solution} and \eqref{single_solution_whittle} are equivalent to each other. This result also illustrate in Fig. \ref{fig_discussion} from which it is evident that $\alpha_1 (\varepsilon, 0)$ is an even function. We prove the optimality of Proposition \ref{opt_sampler_theorem} for any number of sources in Appendix \ref{threshold_proof}. Hence, Theorem \ref{single_theorem_whittle} is also optimal. This completes the proof.

\section{Proof of Theorem \ref{aoi_indexability}} \label{proof_of_aoi_indexability}

 
If $\gamma=0$, i.e., no sample from source is currently in service, from Proposition \ref{opt_sampler_theorem_age}, we get that for an AoI $\Delta_n (t) = \delta$, it is optimal not to schedule source $n$ if
\begin{align} \label{threshold_policy_aoi}
\mathbb{E} [p(\delta + Y_{n, i+1})] < \bar m_{n, \text{age}} (\lambda),
\end{align}
where
\begin{align} \label{mse_prev_aoi}
\bar m_{n, \text{age}} (\lambda) \!\!=\!\!
\frac{\mathbb{E} \!\bigg[\!\!\int_{D_{n,i} \!(\bar m_{n, \text{age}} (\lambda)\!)}^{D_{n,i+1} \!(\bar m_{n, \text{age}} (\lambda)\!)} \!\!\!w_n p_n \!(\!\Delta_n (s)\!) ds\!\bigg] \!+\! \lambda \mathbb{E} [Y_{n,i+1}]}{\mathbb{E} [D_{n,i+1} (\bar m_{n, \text{age}} (\lambda)) - D_{n,i} (\bar m_{n, \text{age}} (\lambda))]},
\end{align}
and $\bar m_{n, \text{age}} (\lambda)$ is the optimal objective value to \eqref{per_arm_problem_age}. We use $\bar m_{n, \text{age-opt}}$ as the optimal objective value in \eqref{per_arm_problem_age}. For convenience of the proof and to illustrate the dependency of the activation cost $\lambda$, we express it as a function of $\lambda$ in the rest of the proofs.

According to \eqref{aoi_penalty}, \eqref{aoi_penalty_1}, and Lemma \ref{monotonicity_1}, $\bar m_{n, \text{age}} (\lambda)$ is a continuous and strictly increasing function of $\lambda$.

By utilizing \eqref{def_3}, 
if $\gamma=0$, for a given $\delta$, if $(\delta, \gamma) \in \Psi_{n, \text{age}} (\lambda_1)$, then
\begin{align}
\mathbb{E} [p_n (\delta + Y_{n,i+1})] < \bar m_{n, \text{age}} (\lambda_1).
\end{align}
By using the fact that $\bar m_{n, \text{age}} (\lambda)$ is continuous and strictly increasing in $\lambda$, we get that $(\delta, \gamma) \in \Psi_{n, \text{age}} (\lambda_2)$ for any $\lambda_1 < \lambda_2$. Hence, $\Psi_{n, \text{age}} (\lambda_1) \subseteq \Psi_{n, \text{age}} (\lambda_2)$. Thus from the definition of indexability, the bandit $n$ is indexable for all $n$. This concludes the proof.

\section{Proof of Theorem \ref{theorem3}} \label{proof of age-based index}
When $\gamma =0$, the Whittle index $\alpha_{n, \text{age}} (\delta, \gamma)$ of bandit $n$ at state $(\delta, \gamma)$ is given by
\begin{align} \label{Whitt_1_aoi}
\alpha_{n, \text{age}} (\delta, \gamma) = \inf_{\lambda} \{\lambda \in \mathbb{R} : (\delta, \gamma) \in \Psi_{n, \text{age}} (\lambda)\}.
\end{align}
By utilizing \eqref{def_3} in \eqref{Whitt_1_aoi}, we obtain that
\begin{align} \label{Whitt_2_aoi}
& \alpha_{n, \text{age}} (\delta, \gamma) = \nonumber\\
& \inf_{\lambda} \{\lambda \in \mathbb{R} : \gamma>0 {\thinspace} {\text{or}} {\thinspace} \mathbb{E} [p_n (\delta + Y_{n,i+1})] < \bar m_{n, \text{age}} (\lambda)\}.
\end{align}
At $\lambda = \alpha_{n, \text{age}} (\delta, \gamma)$, we have
\begin{align} \label{proof_theorem6_1}
& w_n \mathbb{E} [p_n (\delta + Y_{n, i+1})] = \bar m_{n, \text{age}} (\alpha_n (\delta, \gamma)).
\end{align}
Substituting \eqref{mse_prev_aoi} into \eqref{proof_theorem6_1}, we get that
\begin{align}
& w_n \mathbb{E} [p_n (\delta + Y_{n, i+1})] = \nonumber\\
& \frac{\mathbb{E} \bigg[\int_{D_{n,i} (\bar m_{n, \text{age}} (\lambda))}^{D_{n,i+1} (\bar m_{n, \text{age}} (\lambda))} w_n O_{n,s}^2 ds\bigg] + \alpha_{n, \text{age}} (\delta, \gamma) \mathbb{E} [Y_{n,i+1}]}{\mathbb{E} [D_{n,i+1} (\bar m_{n, \text{age}} (\lambda)) - D_{n,i} (\bar m_{n, \text{age}} (\lambda))]},
\end{align}
which yields
\begin{align}
& w_n \mathbb{E} [p_n (\delta + Y_{n, i+1})] \mathbb{E} [D_{n,i+1} (\bar m_{n, \text{age}} (\lambda)) - D_{n,i} (\bar m_{n, \text{age}})] \nonumber\\
=& \mathbb{E} \bigg[\!\!\int_{D_{n,i} (\bar m_{n, \text{age}} (\lambda))}^{D_{n,i+1} (\bar m_{n, \text{age}} (\lambda))} \!\!\!\!\!\!\! w_n p_n (\Delta_n (s)) ds\bigg] \!\!+\! \alpha_{n, \text{age}} (\delta, \gamma) \mathbb{E} [Y_{n,i+1}],
\end{align}
from which \eqref{age-based_index} follows because the $Y_{n,i}$'s are \emph{i.i.d.}. 

When $\gamma > 0$, from \eqref{def_3} and \eqref{Whitt_2_aoi}, \eqref{Whitt_infty_age} yields. This completes the proof.

\section{Proof of Lemma \ref{aoi_lemma}} \label{proof_aoi_lemma}

In order to prove Lemma \ref{aoi_lemma}, we need to consider the following two cases:

\emph{Case 1:} If $\mathbb{E} \left[p_n (\Delta_n (D_{n,i} + Y_{n, i+1}) \right] = \mathbb{E} \left[p_n (\Delta_n (D_{n,i}) + Y_{n, i+1}) \right] = \mathbb{E} \left[p_n (Y_{n,i} + Y_{n, i+1}) \right] \geq \mathbb{E} \left[p_n (\delta + Y_{n, i+1}) \right]$, then, $S_{n,i+1} = D_{n,i}$. Hence,
\begin{align} 
&\mathbb{E} [D_{n,i+1} (\delta) - D_{n,i} (\delta)\big| \delta, Y_{n,i}] \nonumber\\
= & \mathbb{E} [S_{n,i+1} (\delta) + Y_{n,i+1} - S_{n, i+1} (\delta)\big| \delta, Y_{n,i}], \nonumber\\
=& \mathbb{E} [Y_{n,i+1}] = \mathbb{E} [Y_{n,i}], \label{proof_aoi_lemma1}
\end{align}
where \eqref{proof_aoi_lemma1} holds because the $Y_{n,i}$'s are \emph{i.i.d.}. In addition, 
\begin{align}\label{eq_expectation_aoi}
& \mathbb{E} \bigg[\!\!\int_{D_{n,i} (\delta)}^{D_{n,i+1} (\delta)} \!\!\!p_n (s) ds \bigg| \delta, Y_{n,i}\bigg] \nonumber\\
=& \mathbb{E} \bigg[\int_{Y_{n,i}}^{Y_{n,i}+ Y_{n,i+1}} p_n (s) ds \bigg| \delta, Y_{n,i}\bigg] \nonumber\\
=& \mathbb{E} \bigg[\!\!\int_{0}^{Y_{n,i}+ Y_{n,i+1}} \!\!\!\!\!p_n (s) ds \bigg| \delta, Y_{n,i}\bigg] \!-\! \mathbb{E} \bigg[\!\!\int_{0}^{Y_{n,i}} \!\!p_n (s) ds \bigg| \delta, Y_{n,i}\bigg] \nonumber\\
=& R_{n,3} (Y_{n, i} + Y_{n, i+1}) - R_{n,3} (Y_{n,i}).
\end{align}

\emph{Case 2:} If $\mathbb{E} \left[p_n (\Delta_n (D_{n,i} + Y_{n, i+1}) \right] =  \mathbb{E} \left[p_n (Y_{n,i} + Y_{n, i+1}) \right] < \mathbb{E} \left[p_n (\delta + Y_{n, i+1}) \right]$, then, almost surely,
\begin{align}
\mathbb{E} \left[p_n (\Delta_n (S_{n,i+1} + Y_{n, i+1}) \right] = \mathbb{E} \left[p_n (\delta + Y_{n, i+1}) \right].
\end{align}
Then,
\begin{align} \label{proof_lemmaaoi_1}
\mathbb{E} [D_{n, i+1} \!(\delta)\! \!-\! D_{n,i} \!(\delta)\!\big| \delta, Y_{n,i}]
\!=\! \mathbb{E} [Z_{n,i} (\delta) \!+\! Y_{n, i+1} \big| \delta, Y_{n,i}] \!=\! \delta. 
\end{align}
In addition, 
\begin{align} \label{proof_lemma_aoi_2}
& \mathbb{E} \bigg[\int_{D_{n,i} (\delta)}^{D_{n,i+1} (\delta)} p_n (s) ds \bigg| \delta, Y_{n,i}\bigg] \nonumber\\
=& \mathbb{E} \bigg[\int_{Y_{n,i}}^{Y_{n,i} + Z_{n,i} (\delta) + Y_{n,i+1}} p_n (s) ds\bigg] \nonumber\\
=& \mathbb{E} \bigg[\int_{0}^{Y_{n,i} + Z_{n,i} (\delta) + Y_{n,i+1}} p_n (s) ds\bigg] 
- \mathbb{E} \bigg[\int_{0}^{Y_{n,i} } p_n (s) ds\bigg] \nonumber\\
=& R_{n,3} (\delta + Y_{n, i+1}) - R_{n,3} (Y_{n,i}).
\end{align}
By combining \eqref{proof_aoi_lemma1} and \eqref{proof_lemmaaoi_1} of the two cases, yields
\begin{align}\label{proof_aoi_lemma12}
\mathbb{E}\left[D_{n,i+1} (\delta) - D_{n,i} (\delta) \Big|\delta, Y_{n,i} \right]
= & \max\{\delta, Y_{n,i}\}. 
\end{align}
Similarly, by combining \eqref{eq_expectation_aoi} and \eqref{proof_lemma_aoi_2}  of the two cases, yields
\begin{align} \label{proof_aoi_lemma13}
& \mathbb{E} \bigg[\int_{D_{n,i} (\delta)}^{D_{n,i+1} (\delta)} p_n (s) ds\bigg|Y_{n,i}\bigg] \nonumber\\
= & R_{n,3} (\max\{\delta, Y_{n,i}\} + Y_{n, i+1}) - R_{n,3} (Y_{n,i})
\end{align}
Finally, by taking the expectation over $Y_{n,i}$ in \eqref{proof_aoi_lemma12} and \eqref{proof_aoi_lemma13}, Lemma \ref{aoi_lemma} is proven.

\end{document}